\theoremstyle{plain}
\newtheorem{theorem}{Theorem}[section]
\theoremstyle{definition}
\newtheorem{definition}[theorem]{Definition}
\newtheorem{example}[theorem]{Example}
\newtheorem{observation}[theorem]{Observation}
\newtheorem{problem}[theorem]{Problem}
\newtheorem{claim}[theorem]{Claim}
\newtheorem{lemma}[theorem]{Lemma}
\newcommand{\Int}{\int\limits}
\newcommand\R{\mathbb{R}}
\newcommand\RS{\mathcal{RS}}
\newcommand\DS{\mathcal{DS}}
\DeclareMathOperator{\e}{\mathrm{e}}
\DeclareMathOperator{\E}{\mathbb{E}}
\newenvironment{bullets}
{\begin{list}
{\noindent\makebox[0mm][r]{$\bullet$}}
{\leftmargin=5.5ex \usecounter{enumi}
\topsep=1.5mm \itemsep=-.75ex}
}
{\end{list}}
\date{}
\begin{document}

\title{\textbf{Designing the Optimal Bit: Balancing Energetic Cost, Speed and Reliability}}
\author[1,2]{Abhishek Deshpande\thanks{Corresponding author: deshabhi123@gmail.com}}
\author[3]{Manoj Gopalkrishnan}
\author[4]{Thomas E. Ouldridge}
\author[1]{Nick S. Jones}
\affil[1]{Department of Mathematics, Imperial College London, London SW7 2AZ, United Kingdom}
\affil[2]{School of Technology and Computer Science, Tata Institute of Fundamental Research, Mumbai 400005, India}
\affil[3]{Department of Electrical Engineering, Indian Institute of Technology Bombay, Mumbai 400076, India}
\affil[4]{Department of Bioengineering, Imperial College London, London SW7 2AZ, United Kingdom}
\maketitle

\begin{abstract}
We consider the challenge of operating a reliable bit that can be rapidly erased. We find that both erasing and reliability times are non-monotonic in the underlying friction, leading to a trade-off between erasing speed and bit reliability. Fast erasure is possible at the expense of low reliability at moderate friction, and high reliability comes at the expense of slow erasure in the underdamped and overdamped limits. Within a given class of bit parameters and control strategies, we define ``optimal" designs of bits that meet the desired reliability and erasing time requirements with the lowest operational work cost. We find that optimal designs always saturate the bound on the erasing time requirement, but can exceed the required reliability time if critically damped. The non-trivial geometry of the reliability and erasing time-scales allows us to exclude large regions of parameter space as suboptimal. We find that optimal designs are either critically damped or close to critical damping under the erasing procedure.  
\end{abstract}

\textbf{Keywords}: erasing/switching a bit, particle in a double well, reliability of information, optimal bit, friction trade-off, saturation/unsaturation of time-scales

\section{Introduction}

Certain information processing operations such as erasing a bit, or copying the state of one bit into another previously randomised bit have fundamental lower bounds on work input~\cite{szilard1964decrease,landauer1961irreversibility,bennett1982thermodynamics,bennett1988notes,bennett2003notes}. These lower bounds such as the famous $k_BT \ln2$ minimal cost for erasing arise due to equilibrium thermodynamics: there is a need to compensate for any entropy reduction in the information-carrying system with an entropy increase elsewhere. Practical devices, however, do not approach these bounds~\cite{Frank2002,Pop2010} and insights gained from thinking about the lower bound have not yet translated into more energy-efficient technology. A partial explanation is that  man-made devices and biological cells need to operate on fast time-scales and hence cannot involve the quasistatic manipulations necessary to reach lower bounds~ \cite{zulkowski2014optimal,Ouldridge2015}. An alternative suggestion from von Neumann is that the need to store information for long periods of time (\textbf{reliability}) leads to high-cost architectures~\cite{von1966theory}. We explore the interplay between reliability, speed and the energetic cost of bit operation. Equilibrium thermodynamic bounds such as the Landauer limit cannot account for these inherently kinetic phenomena.

This general question of how to design fast, cheap and reliable bits has obvious technological relevance to the optimal design of low power computational devices~\cite{sarpeshkar2010ultra,sarpeshkar1998analog,rapoport2012glucose} Additionally, since the discovery of the structure of DNA and the central dogma of molecular biology, it has become well accepted that information processing is at the heart of many natural phenomena. Many authors have explored information processing in biological systems, both to understand natural examples and design synthetic analogs~ \cite{bennett1982thermodynamics,Tu2008,Lan2012,govern2014,Ouldridge2015,Ouldridge2017,Bo2014,Barato2014}. The question of the interplay between reliability, speed and cost are also relevant here, although under-explored.

In this paper, we explore the challenge of building fast, cheap and reliable bits, and provide a framework for it's analysis in terms of reliability and erasure time-scales. We also take the first steps towards exploring the physics of the optimal design problem by considering a simple model: a particle in a 1-D potential, which is a quartic double-well potential in the device's ``resting'' state. We require that the bit be reliable, so that a particle equilibrated in either well stays in that well for a specified long time on average. Simultaneously, we require the implementation of an ``erase'' or ``reset'' operation using an external control, so that erasure is completed within a specified short amount of time. Our principal question is to find values for the design parameters which consist of the height of the double well, the friction coefficient, and the control parameters to guarantee these requirements without expending more energy than required. Our main contribution is an exploration of this design space, which demonstrates the previously under-appreciated role of friction. In particular, we identify a ``Goldilocks zone'' where the friction coefficient takes moderate values. This is somewhat counter-intuitive because historically friction has been viewed as a nuisance to computing, to be sent as low as possible~\cite{anacker1980josephson,buttiker1983thermal,klein1982thermal,likharev1982classical}.

In Section~\ref{sec:description}, we describe the model which will provide intuition for our work. We formalize the time-scale over which the bit stores information through the notion of \textbf{reliability time}. In Section~\ref{sec:control_strategy}, we describe one simple family of control protocols for resetting a bit. We calculate the work done in erasing a bit for this form of control. We will use this particular control protocol to illustrate our subsequent ideas. In Section~\ref{subsubsec:erasure}, we introduce the notion of \textbf{erasing time}. In Section~\ref{sec:times}, we consolidate from the literature the analytical forms and approximations for our two time-scales of interest, and confirm them with numerical simulations. We find that both the reliability and erasing time-scales are non-monotonic, roughly U-shaped functions of the friction coefficient. It follows that high reliability is obtained by setting the friction to a low or high value, whereas a low erasing time is favoured by an intermediate value of friction, implying a conflict between the two time scales for a given class of protocols. In Section~\ref{geometry}, we investigate how this conflict feeds into  the geometry of \textbf{optimal bits}: bits that fulfil the desired reliability and erasing time requirements with the minimum energy cost. We find and partially characterize a ``Goldilocks zone'' in design space where optimal bits reside. In Section~\ref{conclusion}, we discuss the robustness of our results when more freedom is allowed in the choice of design parameters, and the control protocol.

\section{The Double-Well Bit}\label{sec:description}
We will represent a device to store one bit of information by a particle in a symmetric bistable potential $U_{A,B}\left(x\right) = A\left(\frac{x^2}{B^2} -1\right)^2$, where $A$ is the height of the well and $\pm B$ are the coordinates of the minima of the right and left wells. We will refer to the device as a whole by ``a bit". The device reports ``0'' when the particle is in the left well, i.e., $x<0$ and reports ``1'' otherwise (Figure~\ref{subfig:doublewell}). 

\begin{figure}
\centering
\begin{minipage}{0.48\textwidth}
  \centering
 \includegraphics[scale=0.3]{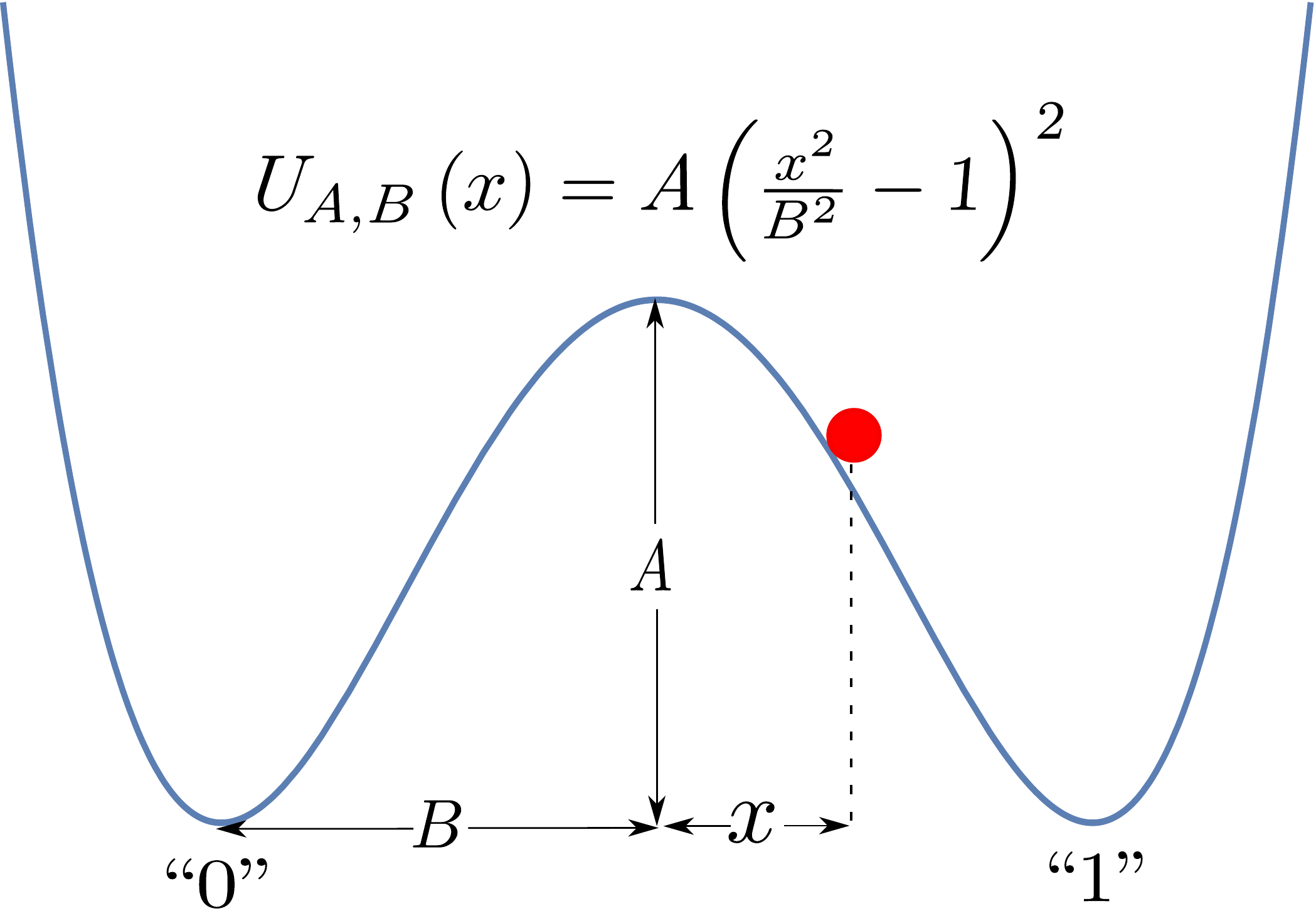}
  \subcaption{}\label{subfig:doublewell}
\end{minipage}
\begin{minipage}{0.48\textwidth}
  \centering
  \includegraphics[scale=0.3]{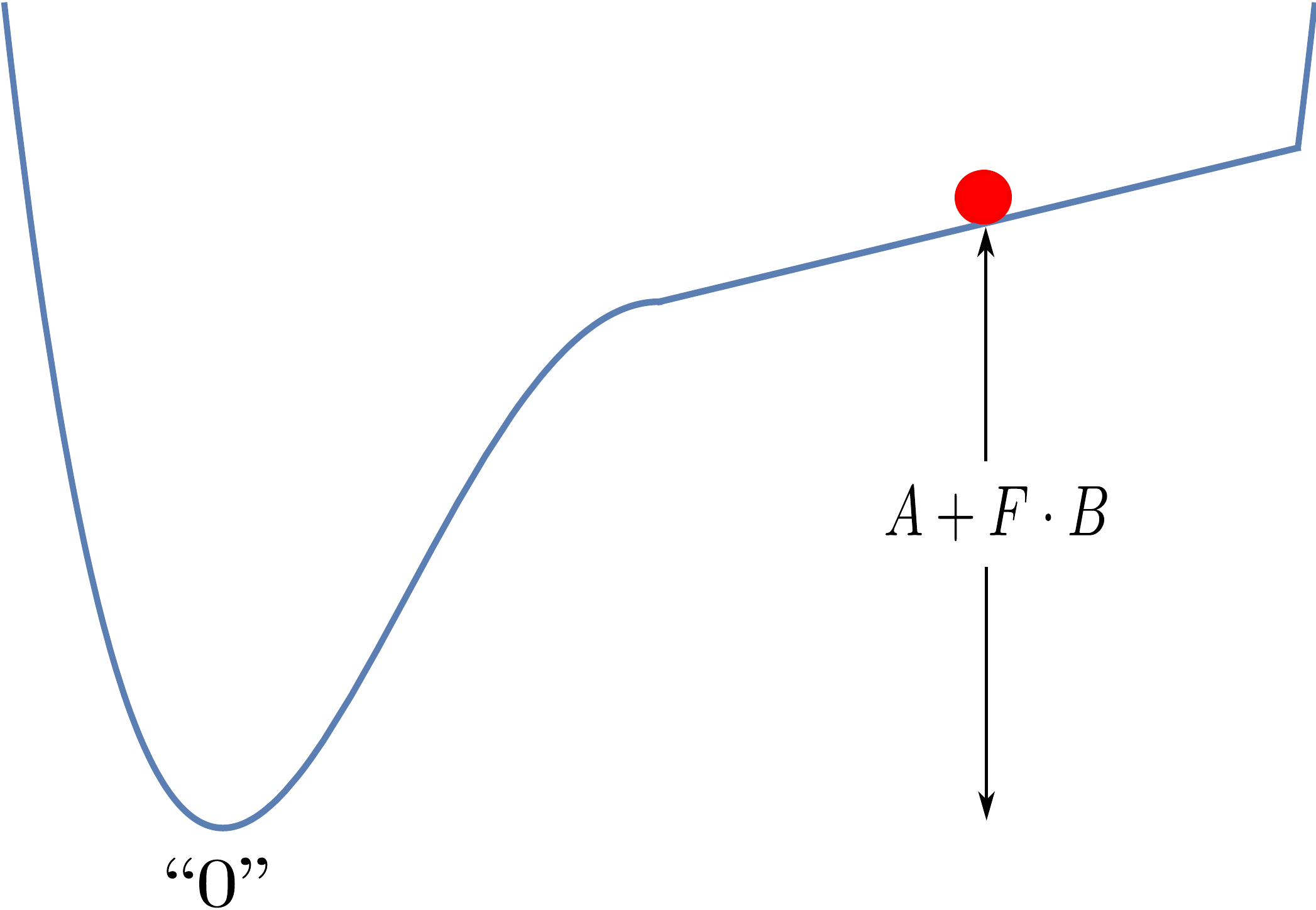}
  \subcaption{}\label{subfig:control}
 \end{minipage}
 \caption{A bit as represented by a particle in a 1-D potential. Figure~\ref{subfig:doublewell}: the bit in its resting state, with a barrier of height ``A'' separating particle locations that correspond to bit values of 0 or 1. Figure~\ref{subfig:control}: a control potential as in Example~\ref{ex:control_potential} is applied to erase the stored data.}\label{fig:erasing} 
\end{figure}

The dynamics of the particle is described by the Langevin equation.
\begin{eqnarray}\label{eq:langevin}
\begin{aligned}
m\,dx = {}& p\,dt \\
\,dp = {}& -\gamma p\,dt - \partial_xU_{A,B}\left(x\right)\,dt + \sqrt{2m\gamma k_B T}\,dW
\end{aligned}
\end{eqnarray}
where $m$ is the mass of the particle, $x$ is position, $p$ is momentum, $\gamma$ is the friction coefficient of the medium, $U_{A,B}(x)$ is the potential, $k_B$ is Boltzmann's constant, and $T$ is the temperature of the heat bath. The term $\sqrt{2m\gamma k_B T}\,dW$ represents the effect of noise from the surroundings. The Langevin equation is a stochastic differential equation, to be mathematically interpreted as a Stratonovich integral. For our case both the Ito and Stratonovich interpretations coincide~\cite[pp.~109]{kloeden1992higher} since the noise coefficient $\sqrt{2m\gamma k_B T}$ does not depend upon $p$.

From~\cite[pp.~182]{pavliotis2014stochastic}, the generator for the Langevin equation~\ref{eq:langevin} is
\begin{align}\label{eq:gen}
\mathcal{L} = \frac{p}{m}\partial_x - \left(\partial_xU_{A,B}(x)\right)\partial_p + \gamma\left(-p\partial_p + k_BT\partial^2_p\right)
\end{align}
The \textbf{Hamiltonian} of the system is $H\left(x,p\right) = U_{A,B}\left(x\right) + \frac{p^2}{2m}$. The \textbf{Gibbs distribution}
\begin{align}\label{eq:gibbs}
\pi(x,p) = \frac{\e^{-H(x,p)/k_BT}}{\int_{-\infty}^\infty \int_{-\infty}^\infty \e^{-H(x',p')/k_BT}dx'\,dp'}
\end{align}
is approached as the system relaxes to equilibrium. Convergence to $\pi(x,p)$ happens exponentially fast at a rate given by the first non-zero eigenvalue of the generator $\mathcal{L}$~\cite{mattingly2002geometric}. 

\subsection{Reliability}\label{subsec:reliability}

A device to store information should be able to store it with high fidelity for a specified long period of time. We introduce the \textbf{reliability time} to represent the time-scale over which our device can store data. Specifically, we define the reliability time $\tau_r$ as the expected first passage time for the particle to cross the barrier of the resting-state potential of the bit, given the Gibbs distribution $\pi(x,p)$ (Equation~\ref{eq:gibbs}) as the initial distribution. That is,
\begin{align}\label{eq:reliability_time} %% Can we replace the use of Equation (1) with Eq. (1) in the rest of the ms?
\tau_r := \E \left[ \inf \{ t\geq 0 \mid x(t) = 0 \} \right]
\end{align}
where the expectation is over trajectories $\left(x\left(t\right),p\left(t\right)\right)$ distributed as specified by Equation~\ref{eq:langevin} from the initial condition $\left(x\left(0\right),p\left(0\right)\right)\sim_\text{law}\pi\left(x,p\right)$. Note that $\tau_r$ is also the first passage time to cross the barrier for a bit prepared with a Gibbs distribution, but confined to either the left-hand well $\pi_0(x,p)$ or  right-hand  well $\pi_1(x,p)$.
\begin{equation}
    \pi_0\left(x,p\right) =
    \begin{cases*}
      2\pi\left(x,p\right) & if x $<$ 0 \\
      0        & otherwise
    \end{cases*}, 
\hspace{5mm}
    \pi_1\left(x,p\right) =
    \begin{cases*}
      2\pi(x,p) &if x $>$ 0 \\
      0        &otherwise
    \end{cases*}.
\end{equation}
Intuitively, once a typical particle has had enough time to reach the top of the barrier, the data stored is no longer reliable.

\subsection{Setting information}
A device intended to store information must provide functionality to load, or set this information into the device. Setting information is a two-bit operation. A common use case is when a reference bit and the bit to be set are initially at some arbitrary values. We require that after the SET operation the reference bit is unchanged whereas the bit to be set now holds a copy of the reference bit. This is the operation that Szilard~\cite{szilard1964decrease} refers to as ``copying'' (in contrast, Landauer~\cite{landauer1961irreversibility,gopalkrishnan2013hot} chooses to reserve the word ``copying'' for the operation where the bit to be set is initially already known to be in the state ``0''). 

Note that in the operation of setting information, or copying in the sense of Szilard, initially the two bits are uncorrelated and unknown whereas after the operation they are still unknown but correlated. Thus implementing this operation requires decreasing the entropy of the system. Since it is easier to study a one-bit system rather than a two-bit system, we will investigate a one-bit proxy for the task of decreasing the entropy of the system, which is the task of \textbf{erasing} a bit.

Erasing involves taking a device whose initial state is maximally unknown into a known reference state, usually ``0.'' Somewhat counter intuitively, given the name, erasing increases the information we know about the system. What is erased is not information but randomness. It helps to keep in mind the example of erasing a blackboard where some random state with chalk marks is reset to the ``all clear'' state. 

\subsubsection{Erasing}\label{sec:control_strategy}
The example that follows describes a simple family of control potentials to implement the erasing operation for our device, which will form the basis of our analysis. One control potential from this family is illustrated in Figure~\ref{subfig:control}. We chose such a simple class of controls to make a full understanding feasible, setting a framework for analysing more complex protocols. We also note that arbitrary variation of a physical potential in reality is highly non-trivial; experimental studies in which complex time-dependent potentials have been applied in fact use highly dissipate mechanisms to generate "effective" potentials~\cite{berut2012experimental,precision_feedback}.

\begin{example}\label{ex:control_potential}
Our control potentials are described by a single parameter $F\in\mathbb{R}_{>0}$ as follows.
\begin{align}
V_F\left(x\right):=
     \begin{cases}
       A + F\cdot x  - U_{A,B}\left(x\right) &\text{if } x\geq 0\,\, {\rm and}\,\, A - U_{A,B}\left(x\right)+F\cdot x \geq 0,\\
       0 &\text{otherwise.} \\
     \end{cases}
\end{align}
The Langevin equation in the presence of control is
\begin{align}\label{eq:control}
\begin{split}
m\,dx = {}& p\,dt  \\
dp  = {} & - \gamma p\,dt -\partial_x U_{A,B}\left(x\right)\,dt -\partial_x V_F\left(x\right)\,dt + \sqrt{2m\gamma k_B T}\,dW
\end{split}
\end{align}
\end{example}

Note that the control potential, as defined, is not differentiable at the boundary of the region in which it is non-zero. In practice, we assume that $\partial_x V_F$ changes rapidly but continuously in a small vicinity around these points.

In this work, we will consider variation of $A$, $F$ and $\gamma$ at fixed $m$, $B$, and $T$. In this case, $m$ specifies the natural mass scale, $B$ the natural length scale and $k_BT$ the natural energy scale; the natural time scale is then $\sqrt {mB^2/k_BT}$. Henceforth, all numerical quantities will be reported using reduced units defined with respect to these natural scales, although $m$, $B$ and $k_BT$ will be retained within formulae.

\subsubsection{Operational view of Erasing}\label{subsubsec:erasure}

The speed of bit operations is of practical importance: a useful bit must be reliable on much larger time-scales than those required to set or switch it. The control is switched on at time $0$ and switched off at an appropriately-chosen time $\tau$. The time $\tau$ is chosen beforehand, and does not depend on details of individual trajectories -- a trajectory-dependent control would require measurement and feedback that itself would need accounting for~\cite{sagawa_ueda_feedback,general_jarzynski,ponmurugan,abreau,horowitz,lahiri}. We could declare erasing as completed and switch off the control as soon as a majority of trajectories are expected to be  in the left well. However, many of these ``erased'' bits would have high energies compared to typical bits drawn from the equilibrium distribution in the left well, $\pi_0\left(x,p\right)$. Thus they could rapidly return to the right well after a very short stay in the left well. So we insist on a more stringent condition. We require that the time $\tau$ should be large enough so that the majority of bits are in the target well, with an expected next passage time close to the reliability time.

One way to guarantee that the next passage time is high is by insisting on mixing, in the sense that the initial distribution $\pi(x,p)$ comes close to a distribution of particles thermalised in the left-hand well, $\pi_0(x,p)$. If this happens, we can guarantee that the expected next passage time will be equal to, or close to, the expected first passage time. However, we found this criterion too stringent for the following reason. At the end of the erasing protocol, it is not necessary that the distribution is close to $\pi_0(x,p)$ -- only that the particles tend to relax to this distribution much faster than they cross back into the right-hand well, and thus have barrier passage times representative of particles initialised with $\pi_0(x,p)$. Nonetheless, we show in Section~\ref{subsec:accuracy} of the Supplementary Information that using such a criterion preserves the qualitative features reported below (in particular, the scaling of erasure time with friction in the high and low friction limits).

Instead, we define an \textbf{erasure region} in well ``0'' as all points $(x,p)$ with total energy $H(x,p)\leq A - 3k_BT$ where $A$ is the barrier height. We look for the average first passage time to reach the erasure region for particles initiated in well ``1'' and take this quantity to be representative of the erasing time-scale. The choice of $3k_BT$ criterion is somewhat arbitrary, but has been used before by Vega et al.~\cite{vega2002mean} to study atom-surface diffusion. As we show in Section~{\ref{subsec:4kT_criterion} of the Supplementary Information, using $4k_BT$ makes no qualitative difference to our conclusions. This metric has the merit that it provides a clear computable criterion for erasing. Below, we demonstrate that particles within the $3k_BT$ erasure region do indeed have  expected next passage times close to the reliability time, as required. 

For a range of well parameters, we used the Langevin $A$ algorithm from~\cite{davidchack2009langevin} [refer to Section~\ref{timestep_validation} of the Supplementary Information for integrator set-up and validation], to estimate $\tau(x,p)$, the average  barrier crossing time for particles initialised 
at position $x$ with momentum $p$ in the left well, for a grid of points $(x,p)$. The average reliability time for a given well can be approximated in terms of $\tau(x,p)$  as follows:
\begin{align}
\tau_r\approx\frac{\displaystyle\sum_{x,p} \tau(x,p)\e^{-{H(x,p)/k_BT}}}{\displaystyle\sum_{x,p}\e^{-{H(x,p)/k_BT}}}.
\end{align} 
The deviation $\delta(x,p) := |1 - \frac{\tau(x,p)}{\tau_r}|$ for every point $\left(x,p\right)$ in the grid is plotted in Figure~\ref{fig:heatmaps}, for a range of friction parameters at well height $A=7$. It is clear that, for all values of friction, the points with total energy $H(x,p)\leq A - 3k_BT$ have reliability times close to $\tau_r$. The same is true of other well heights $A$. This is because such particles typically undergo thermal mixing before they can escape the well. Once mixed, their next escape over the barrier will be on a time-scale of the order of $\tau_r$.

\begin{figure}[t]
\begin{minipage}{0.5\textwidth}
  \includegraphics[scale=0.15]{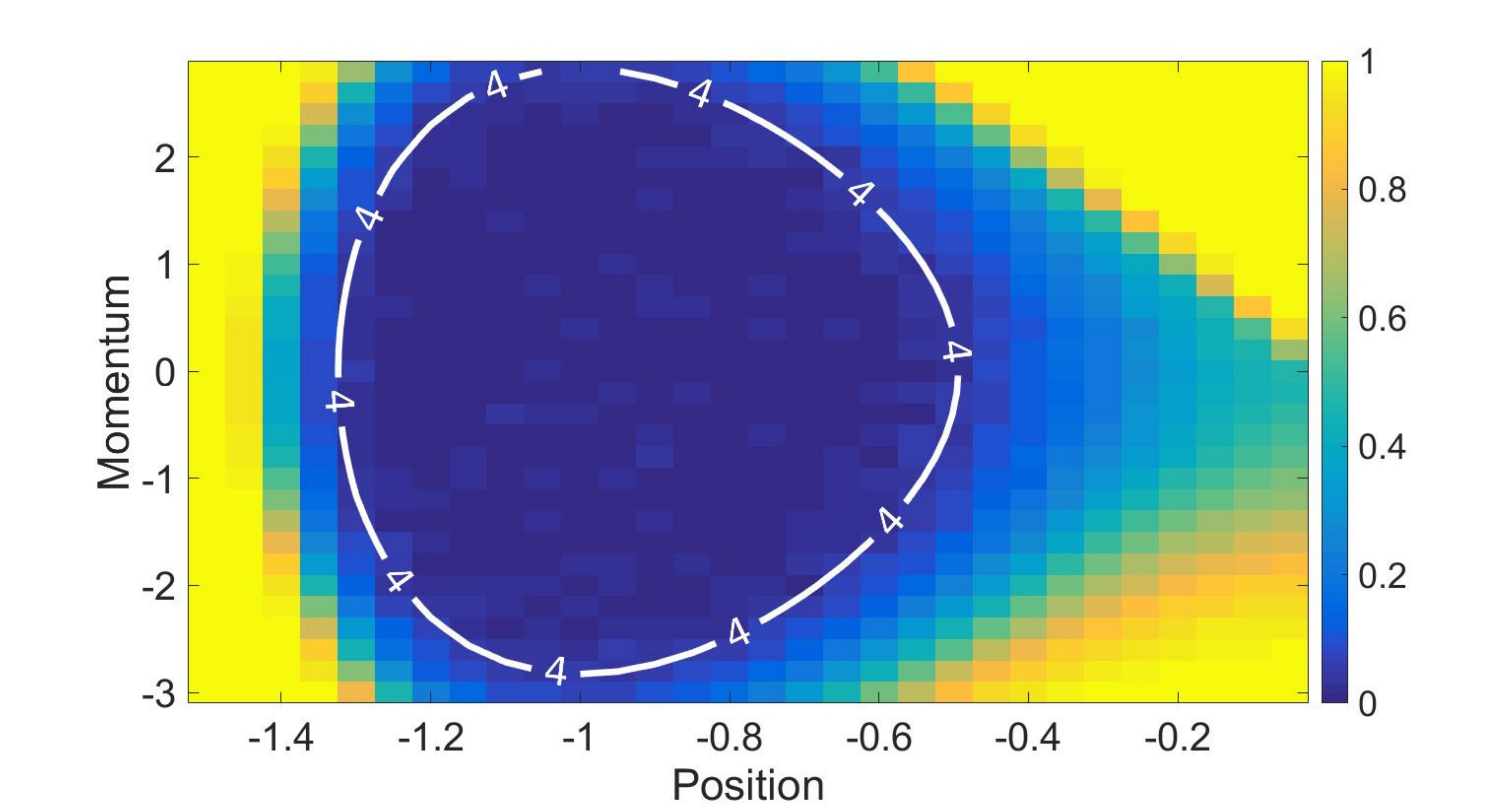}
  \subcaption{$\delta\left(x,p\right)$ when $\gamma=0.1$}
  \label{subfig:low_friction}
\end{minipage}%
\begin{minipage}{0.5\textwidth}
  \includegraphics[scale=0.15]{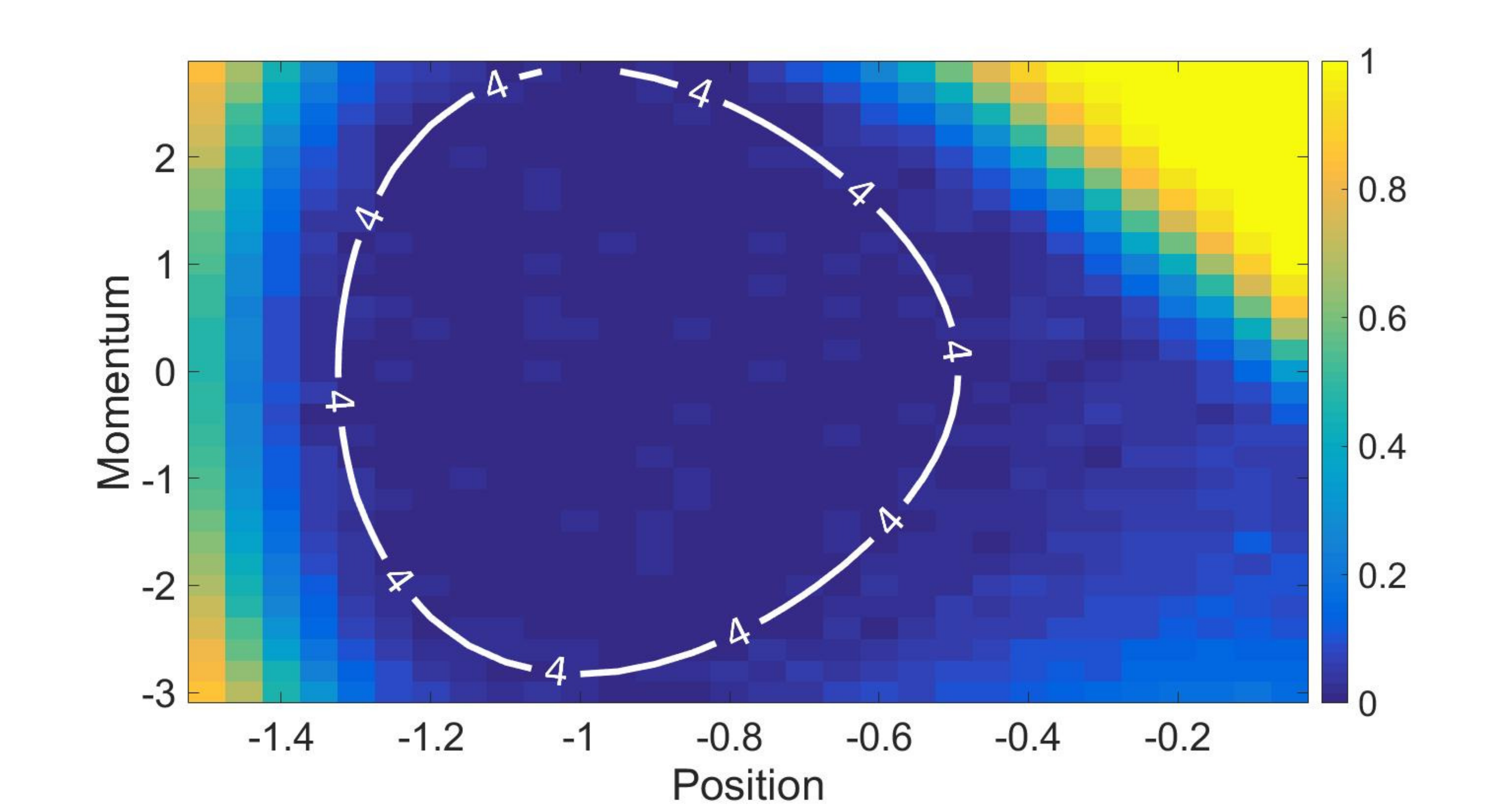}
  \subcaption{$\delta\left(x,p\right)$ when $\gamma=1$}
  \label{subfig:mid_friction_1}
\end{minipage}
\begin{minipage}{0.5\textwidth}
  \includegraphics[scale=0.15]{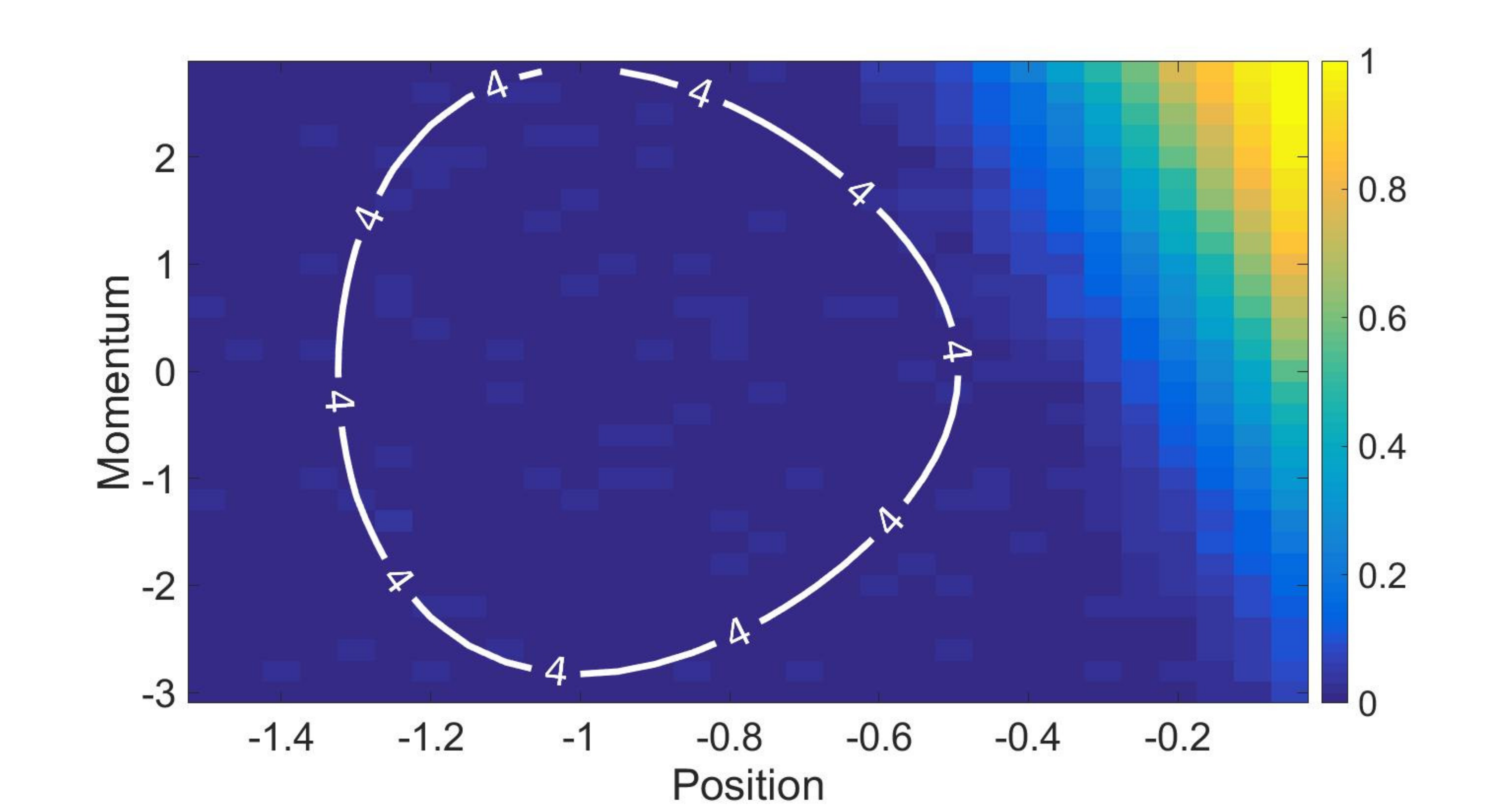}
  \subcaption{$\delta\left(x,p\right)$ when $\gamma=10$}
  \label{subfig:mid_friction_2}
\end{minipage}%
\begin{minipage}{0.5\textwidth}
  \includegraphics[scale=0.15]{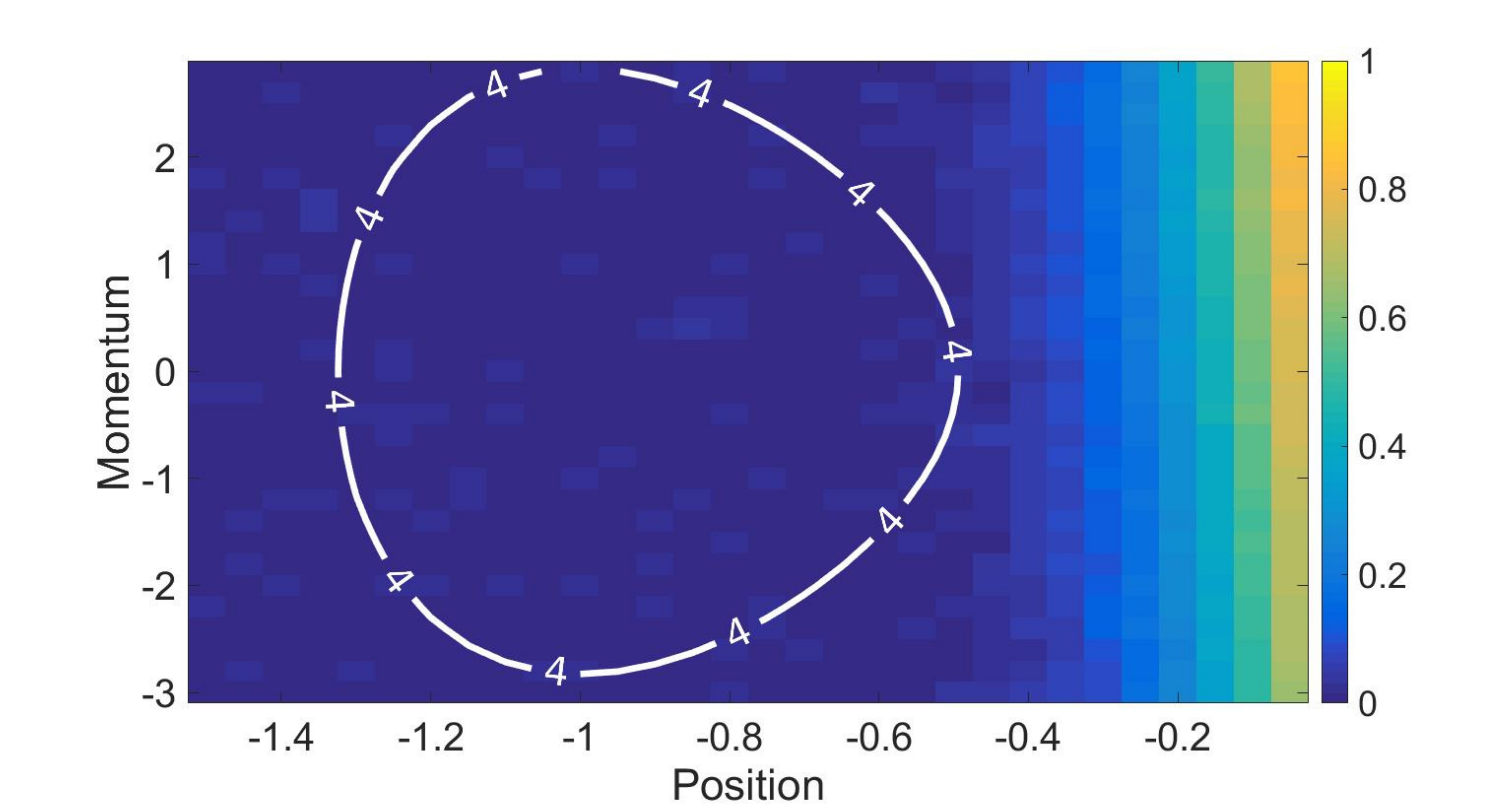}
  \subcaption{$\delta\left(x,p\right)$ when $\gamma=100$}
  \label{subfig:high_friction}
\end{minipage}
\caption{For particles initiated with $H(x,p)\leq A - 3k_BT$, well escape times are close to $\tau_r$. Heat maps show the fractional deviation in expected escape time $\delta(x,p)$ from the well-thermalised average $\tau_r$, as a function of initial position $x$ and momentum $p$. The labelled contours correspond to a well height $A=7$ with energy $H(x,p) = A - 3k_BT = 4k_BT$. These heat maps are representative of the situation for other barrier heights $A\geq 5k_BT$.
}
\label{fig:heatmaps}
\end{figure}

Despite the robustness of this result to the value of the friction, the heatmaps in Figure~\ref{fig:heatmaps} are friction-dependent. When $\gamma$ is low, the particle diffuses very slowly in energy space, and it is the challenge of diffusing within this energy space that prohibits escape from the well. As a result the heatmap corresponding to $\gamma=0.1$ (Figure~\ref{subfig:low_friction}) follows the shape of constant energy contours. As friction starts increasing (e.g. in Figure ~\ref{subfig:mid_friction_1} and ~\ref{subfig:mid_friction_2}), diffusion in momentum-space becomes more rapid, but diffusion in position-space slows down. Once $\gamma$ becomes very high (e.g. $\gamma=100$ in Figure~\ref{subfig:high_friction}), the behaviour of the heatmap is essentially determined by the initial position of the particle; those close to the barrier and with $U_{A,B}(x)$ sufficiently close to $A$ can escape easily, but the momentum is irrelevant. Using the total energy $H(x,p)$ as a criterion ensures that we account for all the regimes of friction.

Since we are interested in the typical time scale of transferring particles to a different well from the existent well, we will sample initial points only from the right well. We define the $\textbf{erasing time}$ $\tau_e$ as the expected time to hit the erasure region, given that the particle started in the right-hand well:
\begin{eqnarray}
\tau_e = \E\left[\inf\{t\geq 0 \mid x(t)<0 \text{ and } H(x(t),p(t))\leq A - 3k_BT\}\right]
\end{eqnarray}
where $\left(x\left(t\right),p\left(t\right)\right)$ is the solution to Equation~\ref{eq:control} with the initial condition $\left(x\left(0\right),p\left(0\right)\right)\sim_{\text{law}}\pi_1\left(x,p\right)$. 
Given this definition, $\tau_e$ indicates a typical time scale over which the control must be applied to successfully erase a large fraction of the bits. In practice, the control would be applied for a period $\tau > \tau_e$ to achieve high accuracy. We will use  $\tau_e$ as an indicative time scale of control operation for the purposes of our analysis. It is useful to decompose the erasing time $\tau_e$ as the sum of two times: the \textbf{transport time} and \textbf{mixing time}. 
\begin{itemize}
\item \textbf{Transport time ($\tau_t$)}: The time taken by the particle to reach well ``0'' given that it is initially distributed according to $\pi_1(x,p)$.
\begin{eqnarray}\label{eq:transport_time}
\tau_{t} = \E\left[\inf\{t\geq 0 \mid x\left(t\right)\leq 0\}\right]
\end{eqnarray}
where $x(t)$ is the solution to Equation~\ref{eq:control} with the initial condition $(x(0),p(0))\sim_{\text{law}}\pi_1(x,p)$.
\item \textbf{Mixing time ($\tau_m$)}: The time taken by the particle to mix sufficiently inside the well. This is the time starting from when the particle first reaches well ``0'' to when it first hits the erasure region. 
\begin{eqnarray}\label{eq:mixing_time}
\tau_m = \tau_e - \tau_t 
\end{eqnarray}
\end{itemize} 

\subsubsection{Cost of erasing}\label{subsubsec:cost}

In this section, we calculate the work done in erasing a bit. From Sekimoto's expression~\cite{sekimoto1997kinetic, sekimoto1998langevin}, for a protocol applied for  a time $\tau$ and with a region of effect $I=\{x\geq 0 \mid A - U_{A,B}\left(x\right)+F\cdot x \geq 0\}$,
\begin{eqnarray}\label{eq:sekimoto_work}
\langle W\rangle: = \Int_{0}^{\tau}\Int_{x\in I}\frac{\partial V_F\left(x,t\right)}{\partial t}p(x,t)\,dx\,dt,
\end{eqnarray}
where $p(x,t)dxdt$ is the probability that the particle is between position $(x,x+dx)$ in the time interval $(t,t+dt)$. There are two potential sources of work that appear in our calculation.

\begin{enumerate}
\item When we begin the erasure protocol by switching on the control to lift the particle.
\item At the end of the protocol when we switch off the control.
\end{enumerate}

We note that in our family of controls, there is negligible energy recovered when the control is switched off (refer to Section~\ref{subsec:no_recovery_energy} of the Supplementary Information), since the probability of the particle being in the region in which the control is applied is small. More generally, the question of whether energy might be recovered from small systems and stored efficiently is a complex one, despite the optimism shown in previous discussions of erasing. Indeed, current technology does not attempt to recover any energy from bits. 

We now calculate the work done for our protocol, Example \ref{ex:control_potential}. The particle's initial potential energy is approximately ${k_BT}/{2}$ on average, due to the equipartition theorem, and after the control is switched  the average potential energy is $A+ F \cdot B$ for a particle in the right well, since the particle is localised around $x=B$, and still ${k_BT}/{2}$ for a particle in the left well. So, ignoring energy recovery at the end of the operation, the net work done for the erasure protocol is $W={\left(A + F\cdot B -k_BT/2\right)}/{2}$. As justified analytically and numerically in Section~\ref{appendix-full work} of the Supplementary Information, this approximation is accurate for the values of $A$ and $F$ that we consider, and we will use this as the form of work for the rest of the manuscript.

\begin{observation}\label{observ:work_increasing_height}
Work is an increasing function of well height $A$ at fixed $F$ and $\gamma$. This follows immediately from the expression of work $W={\left(A + F\cdot B -k_BT/2\right)}/{2}$.
\end{observation}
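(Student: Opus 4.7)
The statement is essentially a one-line consequence of the closed-form expression $W = (A + F\cdot B - k_BT/2)/2$ already derived in the preceding paragraph, so my plan is simply to make the linearity argument explicit rather than appealing to any new analytical machinery.

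First, I would recall the work expression from the preceding paragraph, namely $W = \tfrac{1}{2}\left(A + F\cdot B - k_BT/2\right)$, and emphasize that this formula was justified (analytically and numerically in the Supplementary Information) for the parameter ranges of $A$, $F$ that we consider, so that it is legitimate to use it as the object of study when discussing monotonicity in $A$.

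Next, holding $F$ and $\gamma$ fixed, I would view $W$ as a function of $A$ alone: $W(A) = \tfrac{1}{2}A + \tfrac{1}{2}(F\cdot B - k_BT/2)$. This is an affine function of $A$ with slope $\partial W/\partial A = 1/2 > 0$, so for any $A_1 < A_2$ one has $W(A_1) < W(A_2)$. Equivalently, the derivative with respect to $A$ is strictly positive and independent of the values of $F$ and $\gamma$, which establishes strict monotonicity in $A$.

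There is essentially no hard step here; the only subtlety worth flagging in the write-up is that $\gamma$ does not enter the closed-form expression at all, so the ``at fixed $\gamma$'' clause in the statement is automatic. I would therefore keep the proof to a sentence or two, noting that the argument relies entirely on the validity of the work expression and will need to be revisited if a more refined work formula (accounting, for example, for non-negligible energy recovery or for probability of being outside the control region) is adopted in later sections.
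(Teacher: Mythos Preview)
Your proposal is correct and mirrors the paper's own reasoning: the observation is stated as self-evident from the closed-form expression, and you have simply made the affine dependence on $A$ and the positive slope $\partial W/\partial A = 1/2$ explicit. The additional remarks about $\gamma$ not appearing and about the caveat on the validity of the work formula are accurate but go slightly beyond what the paper bothers to say.
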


\section{Friction-based trade-offs for reliability and erasing}\label{sec:times}

We explore the behaviour of the reliability and erasing time-scales as functions of the friction coefficient. We find that both these time-scales are non-monotonic, roughly U-shaped functions of the friction coefficient. A high reliability time requirement is favoured by a very low or very high friction; whereas a low erasing time requirement is helped by the choice of a moderate value of friction. Since a bit designer would seek reliable bits (needing high or low friction) that can be erased fast (needing intermediate friction), this yields a friction-based trade-off between reliability and speed of erasure. 

\subsection{Reliability Time}
Our definition of reliability time (Equation~\ref{eq:reliability_time}) is very similar to the classic problem of escape rates from one-dimensional wells (Fig~\ref{fig:kramer_figure}), as applied in transition state theory to understand chemical reactions. In a famous paper~\cite{kramers1940brownian}, Kramer found analytic expressions for  the escape  rate $k$ from a well by calculating the flux of particles between a source on one side of the barrier ($x_A$) and a sink at the other side ($x_B$). Kramer's expressions apply separately to the regimes of low friction, moderate to high friction and very high friction. Later the groups of Melnikov and Meshkov~\cite{mel1986theory} and Pollack, Grabert and H{\"a}nggi~\cite{pollak1989theory} gave formulae that interpolate accurately over all values of friction (see review in~\cite{hanggi1990reaction}). We will apply the result of Melnikov and Meshkov to estimate analytical forms of the escape rate for our bistable system
\begin{align}\label{eq:meshkov_melnikov}
\begin{split}
k ={}& \frac{\omega_0}{2\pi}\left[\sqrt{1 + \frac{\gamma^2}{4\omega_b^2}} - \frac{\gamma}{2\omega_b}\right]g\e^{-{A}/k_BT}, \mbox{where} \\
\ln g ={}& 
\frac{1}{2\pi}
\int_0^\frac{\pi}{2}
\ln
\left[
1 - \operatorname{exp}
\left(
\frac{-\gamma\, I(A)}{4k_BT\cos^2 x}
\right)
\right]dx.
\end{split}
\end{align}

Here, $\omega_b$ is the angular frequency at barrier height, $\omega_0$ is the angular frequency at the bottom of the well and
$I(A)$ is the action for barrier height $A$. Refer to Section~\ref{subsec:well_parameters} in the Supplementary Information for a detailed definition of these parameters and calculations for our system.

\begin{figure}[h!]
\begin{center}
\[\includegraphics[scale=0.4]{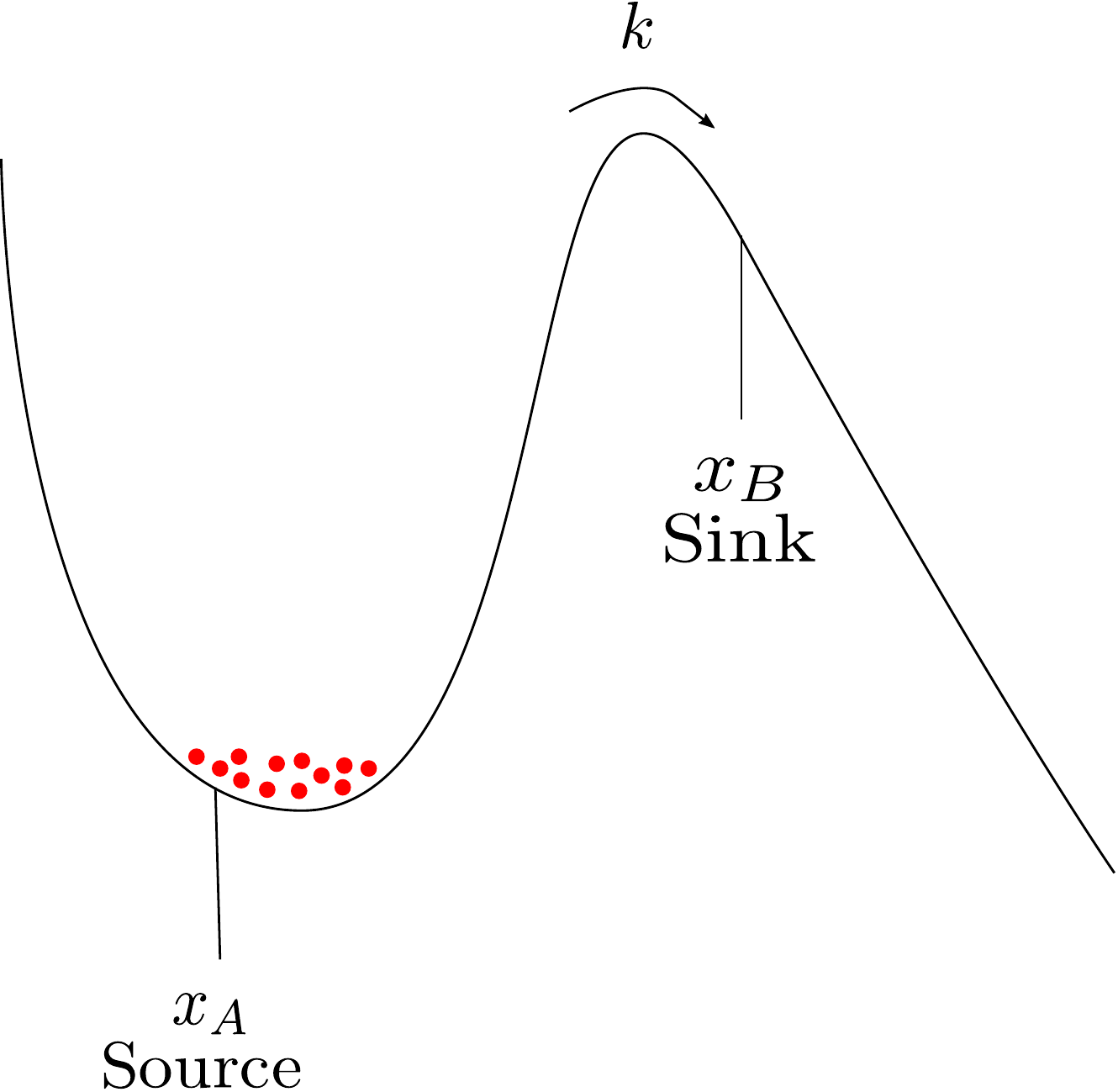}\]
\end{center}
\caption{The escape of particles from a one-dimensional well. Kramer~\cite{kramers1940brownian} considered a source of particles at the bottom of the well, and estimated the rate of escape to a sink on the far side of a barrier.}
\label{fig:kramer_figure}
\end{figure}

We plot the analytical prediction of $1/k$ given by Eq.~\ref{eq:meshkov_melnikov} in Fig.~\ref{fig:reliability_form} for two values of well height $A$, as a function of friction $\gamma$. This prediction is compared to average first passage time for particles to reach the top of the barrier from an initial Boltzmann distribution within a single well. The two quantities differ at large $\gamma$ because Kramer's definition does not treat a particle that crosses the barrier but then immediately crosses back as having ``escaped", whereas  our definition of reliability in terms of a first passage time treats such particles as no longer being reliable. In the underdamped regime, immediate recrossings are rare and hence $\tau_r$ and $1/k$ coincide; in the overdamped regime, particles that reach the barrier top have a $50\%$ chance of returning and so $\tau_r = 1/2k$. As can be seen from Fig.~\ref{fig:reliability_form}, $\tau_r$ smoothly interpolates between $\frac{1}{k}$ and $\frac{1}{2k}$, with the small numerical factor providing only a minor correction to the underlying physics of the analytical expression in Eq.~\ref{eq:meshkov_melnikov}.

\begin{figure}[h!]
\centering
\begin{minipage}{0.5\textwidth}
  \centering
  \includegraphics[scale=0.17]{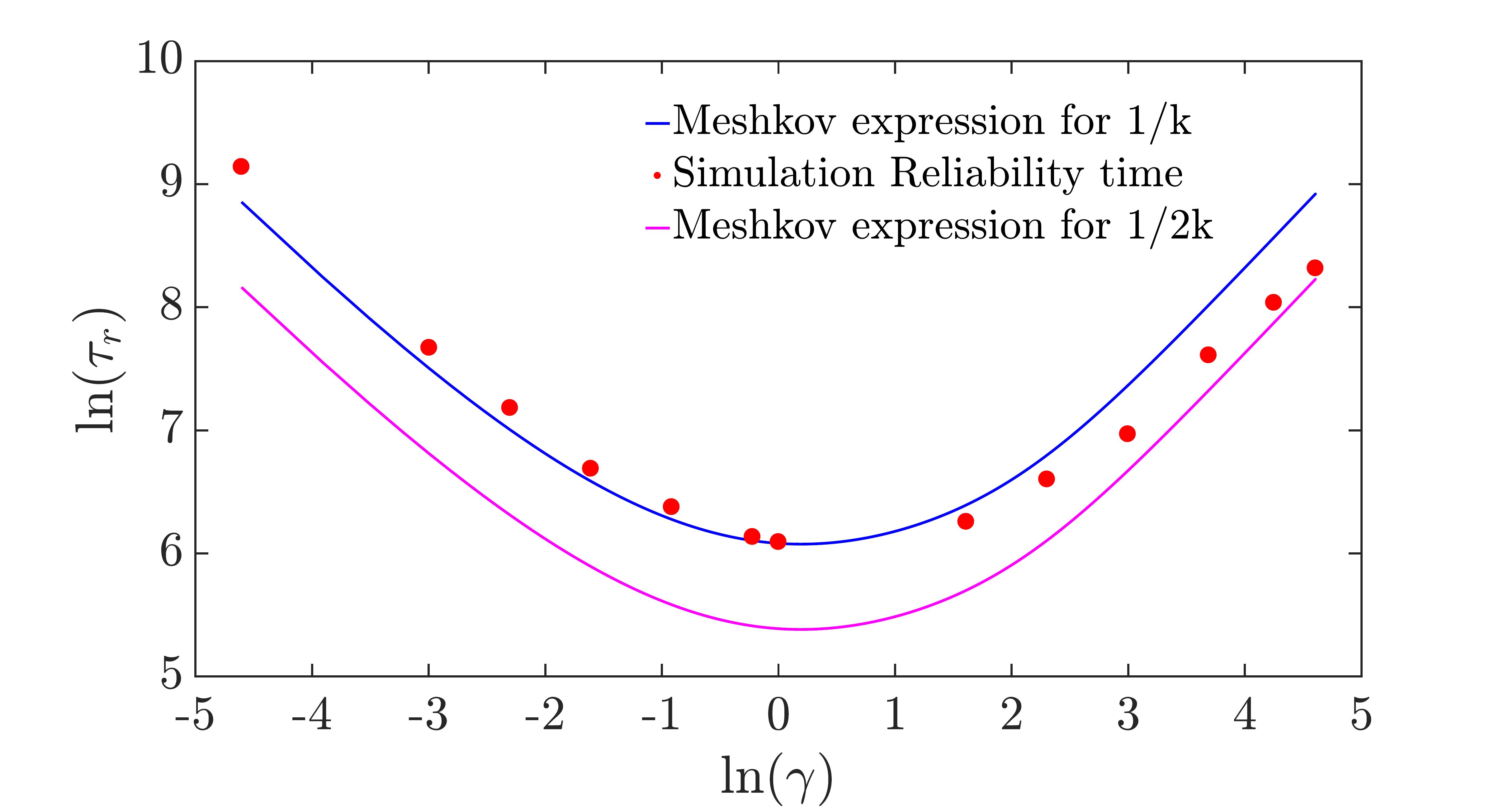}
  \subcaption{$A=6$}
\end{minipage}%
\begin{minipage}{0.5\textwidth}
  \centering
  \includegraphics[scale=0.17]{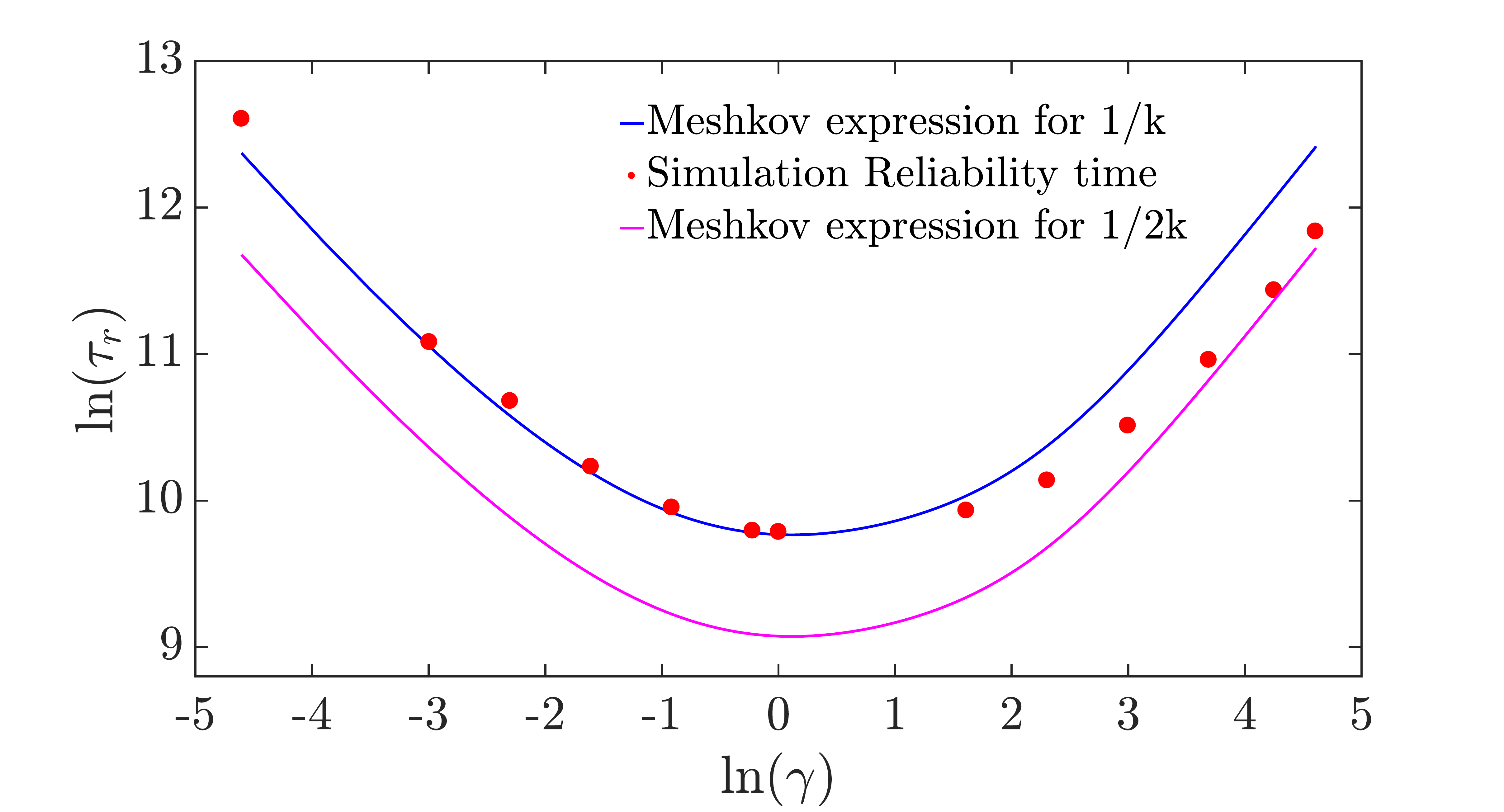}
  \subcaption{$A=10$}
\end{minipage}
\caption{The reliability time $\tau_r\propto\frac{1}{\gamma}$ in the low friction regime, $\tau_r\propto\gamma$ in the high friction regime and is minimum at moderate friction. Simulation results are compared to the inverse of escape rate from a single well $(1/k)$ and $(1/{2k})$, as predicted by Eq.~\ref{eq:meshkov_melnikov}. Here, and elsewhere in the manuscript, error bars are omitted when comparable to data points.} 
\label{fig:reliability_form}
\end{figure}

The Melnikov-Meshkov expression predicts an almost-exponential scaling of $1/k$ with barrier height $A$, which is reproduced by $\tau_r$ and expected from the Arrhenius rate law~\cite{arrhenius1889dissociationswarme}. Note that both $1/k$ and $\tau_r$ are non-monotonic in friction $\gamma$, with long reliability times in the underdamped and overdamped limits. This behaviour results from the need for particles to diffuse in both position and energy in order to reach the top of the barrier from an initial state thermalized within a single well. At high friction, particles rapidly sample different kinetic energies due to strong coupling with the environment, but move slowly in position space and hence take a long time to cross the barrier. At low friction, particles can move rapidly but their energy remains effectively constant over short time periods. They only cross the barrier when they have eventually gained enough total energy. Intermediate friction, when neither process is excessively slow, gives the shortest $\tau_r$. This behaviour is typical of equilibrating systems in which an initial out-of-equilibrium  condition (particles are guaranteed to be on one side of the well and not the other) relaxes towards an equilibrium state (particles on both sides of the barrier), and is thus insensitive to the details of our bit design. 

A more detailed analysis of the dependence of the reliability time on various parameters, and indeed the functional form of the well, is possible. However, these details are not necessary for the conclusions  we draw in the rest of this manuscript, and hence we omit them here.

\subsection{Erasing time}

As noted earlier, the erasing time is composed of two parts: the transport time defined in Eq.~\ref{eq:transport_time} and the mixing time defined in Eq.~\ref{eq:mixing_time}. We now present analytical estimates of these times and compare them with numerical solutions.

\subsubsection{Transport time}
We can obtain an analytical estimate of the transport time in low and high friction limits by assuming that a particle starting at $x=B$ moves deterministically under the influence of the potential slope and drag force. 

\begin{enumerate}
\item \textbf{Low friction regime}: The particle travels with a constant acceleration of $\frac{F}{m}$ and the time taken to travel a distance $B$ is $\tau_{t}\approx\sqrt{{2mB}/{F}}$.

\begin{figure}[h!]
%\centering
\begin{minipage}{0.5\textwidth}
  \centering
  \includegraphics[scale=0.17]{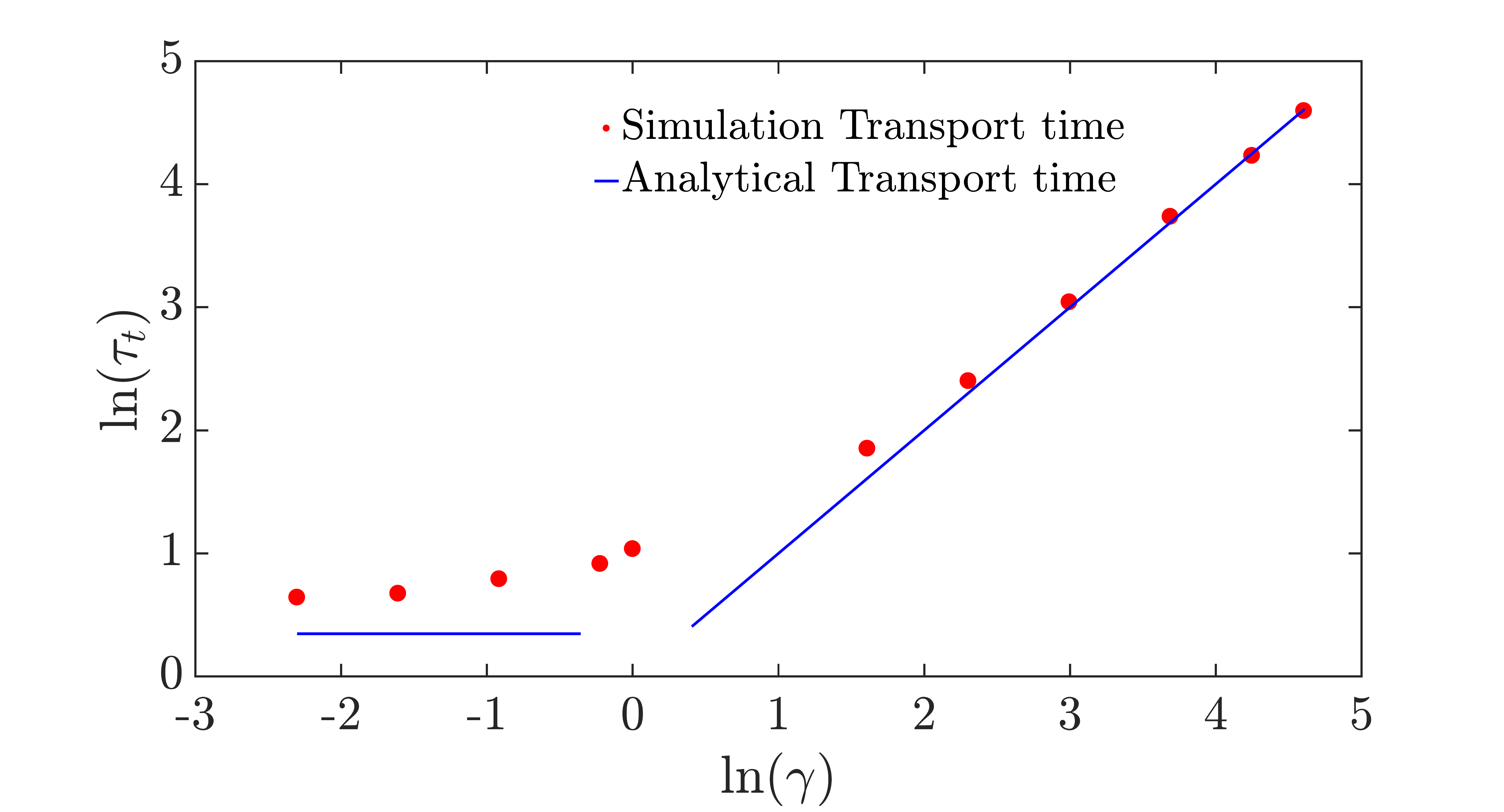}
  \subcaption{$A=10,F=1$}
  \label{fig:transport_low_F}
\end{minipage}
\begin{minipage}{0.5\textwidth}
  \centering
  \includegraphics[scale=0.17]{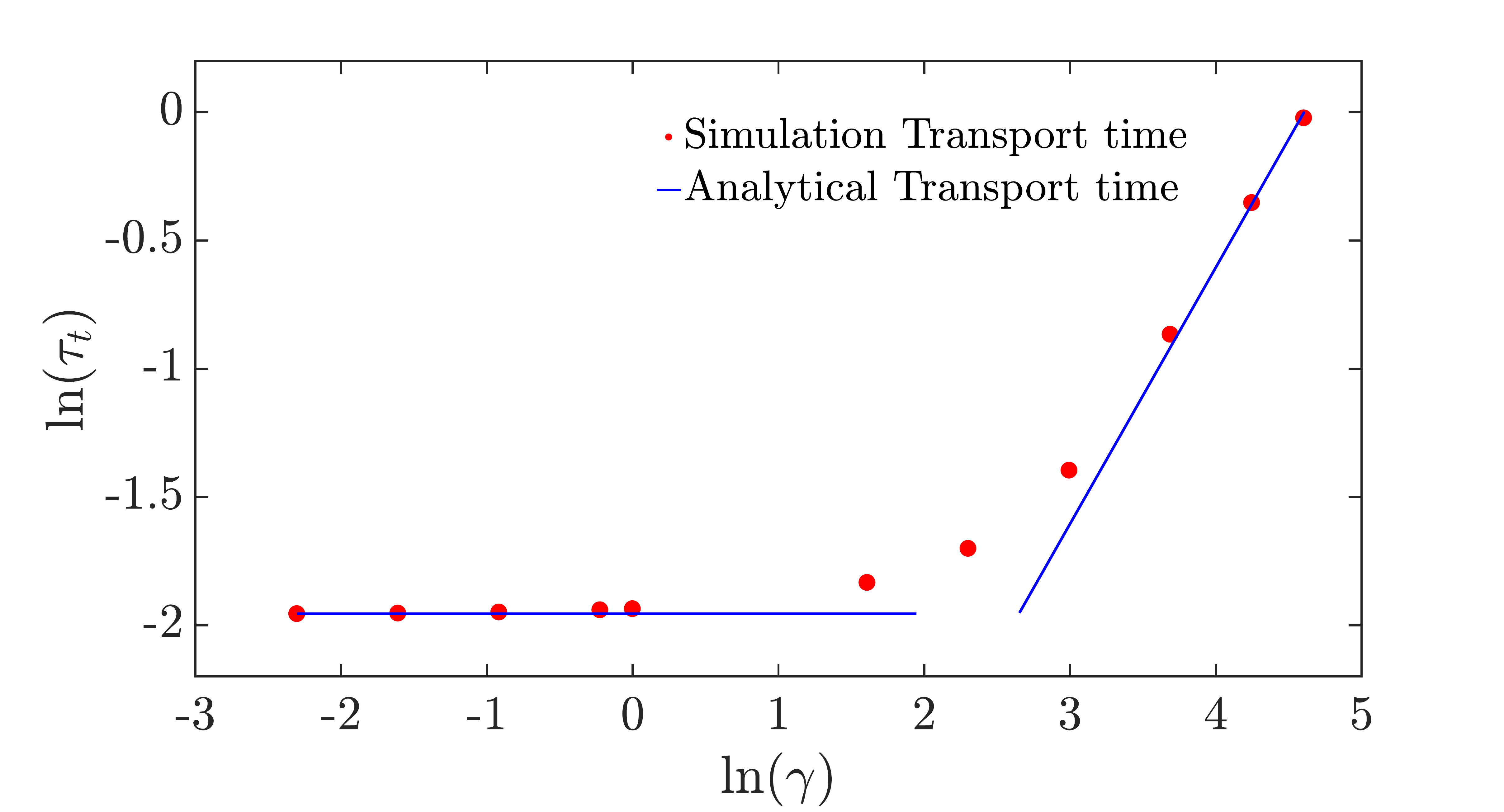}
  \subcaption{$A=10,F=100$}
  \label{fig:transport_high_F}
\end{minipage}
\caption{The transport time obtained from simulations approximates the analytical estimates of  $\tau_{t}\approx\sqrt{{2mB}/{F}}$ in the low friction regime, $\tau_{t}\approx{mB\gamma}/{F}$ ($\propto\gamma$) in the high friction regime.}
\label{fig:transport}
\end{figure}

\item \textbf{High friction regime}: In this regime, we assume that the net force on the particle (arising from the sum of drag and potential) is zero. The particle travels with a velocity of ${F}/{m\gamma}$, and the time taken to travel a distance $B$ is $\tau_{t}\approx{mB\gamma}/{F}$.
\end{enumerate}

We thus expect the transport time to be constant in the underdamped regime and increase linearly with friction in the overdamped regime. Figure~\ref{fig:transport} illustrates that this scaling is observed in Langevin simulations, and that numerical values are in reasonable agreement with these crude estimates. The largest quantitative deviations occur at low force and low friction (e.g. $F=1$ in Figure~\ref{fig:transport_low_F}), when the diffusion of the particle on the slope contributes significantly to $\tau_t$. This results in a simulation transport time larger than the analytical estimate.

\subsubsection{Mixing time}
Similar to the transport time, analytical estimates of the mixing time can be obtained in the limits of high and low friction.

\begin{enumerate}
\item \textbf{Low friction regime}: For purposes of approximate calculation we treat the well ``0'' as a harmonic oscillator. Deterministically, the energy of a harmonic oscillator decays exponentially in the underdamped regime. Therefore we have $E\left(t\right) = E_0\e^{-\gamma t}$, where $E_0$ is the initial energy of the particle when it first reaches $x=0$ and $E\left(t\right)$ is the energy of the particle at time $t$. In the underdamped regime, a particle starting at $B$ arrives at position $x=0$ with energy $E_0 \approx A + F\cdot B$. Thus solving for $E\left(\tau_{\rm mix}\right) = A-3k_BT$,
\begin{eqnarray}\label{eq:low_friction_mixing}
\tau_{mix}\approx\frac{1}{\gamma}\log{\frac{A + F\cdot B}{A-3k_BT}}
\end{eqnarray}

\begin{figure}[h!]
\begin{minipage}{0.5\textwidth}
  \centering
  \includegraphics[scale=0.17]{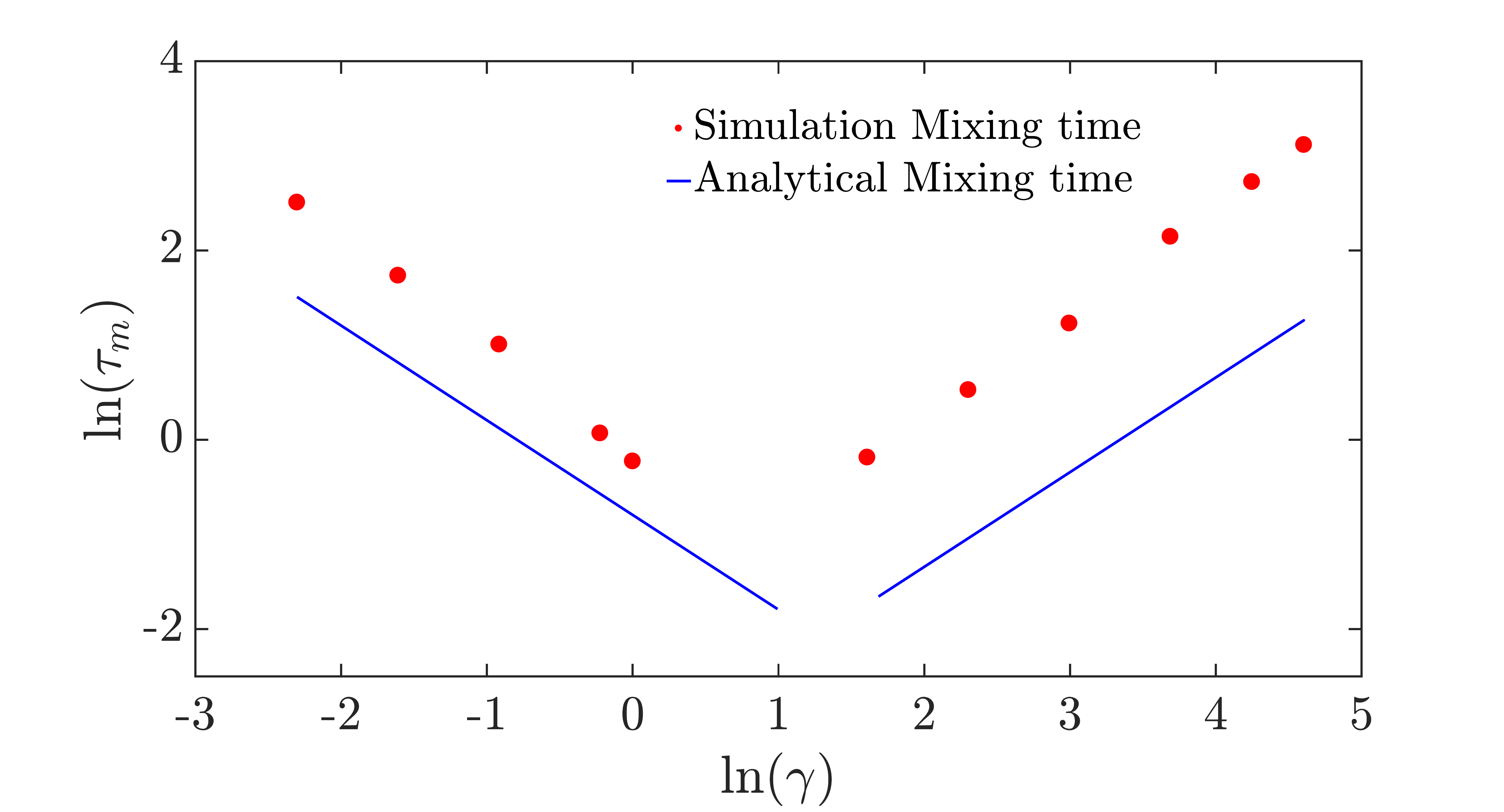}
  \subcaption{$A=10,F=1$}
  \label{fig:mixing_low_F}
\end{minipage}
\begin{minipage}{0.5\textwidth}
  \centering
  \includegraphics[scale=0.17]{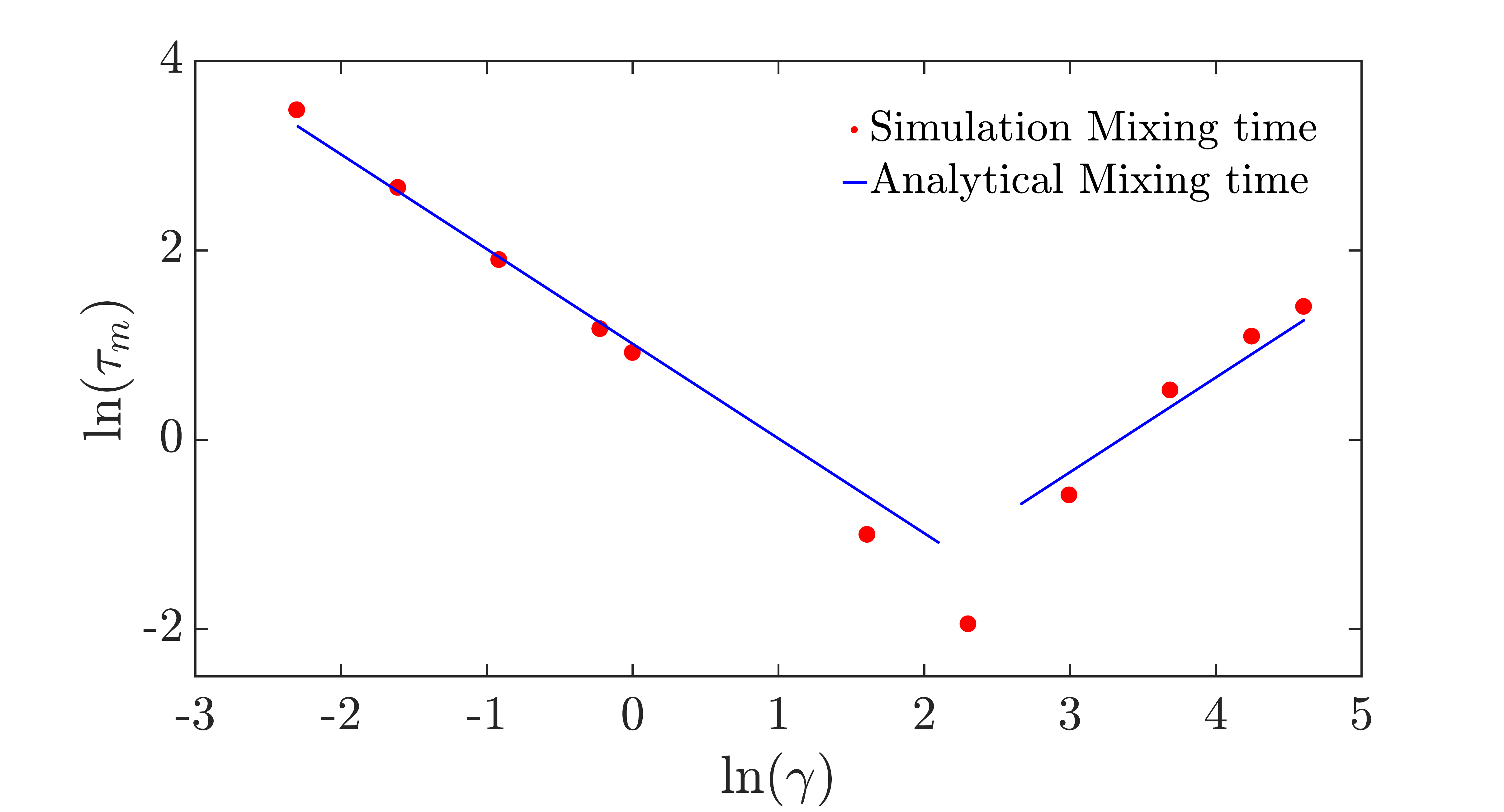}
  \subcaption{$A=10,F=100$}
  \label{fig:mixing_high_F}
\end{minipage}
\caption{Evidence from simulation that the mixing time $\tau_{mix}\approx\frac{1}{\gamma}\log{\frac{A + F\cdot B}{A-3k_BT}}\left(\propto\frac{1}{\gamma}\right)$ in the low friction regime, $\tau_{mix}\approx\frac{mB^2\gamma}{2\sqrt{2}A}\left(\propto\gamma\right)$ in the high friction regime and is minimised at moderate friction.} 
\label{fig:mixing}
\end{figure}

\item \textbf{High friction regime}: A sensible estimate of the behaviour can be obtained by explicitly modelling the diffusion of the particle near the barrier top. In the overdamped limit, the criterion of reaching a total energy of $E\left(\tau_{\rm mix}\right) = A-3k_BT$ is equivalent to reaching a point $d$ which has potential energy of $A-3k_BT$, since momenta are sampled arbitrarily rapidly in this limit. To proceed, we consider the typical time required to reach an absorbing barrier at $d$ starting from $x=0$, assuming a sufficiently large $F$ that we can treat $x=0$ as a reflecting barrier. Starting from the overdamped stochastic differential equation
\begin{eqnarray}\label{eq:overdamped_langevin}
m\gamma\, dx = -\partial_{x}U_{A,B}\left(x\right)\,dt + \sqrt{2m\gamma k_BT}\,dW
\end{eqnarray}
with  generator $\mathcal{L}= \frac{k_BT}{m\gamma}\e^{\frac{U_{A,B}\left(x\right)}{k_BT}}\partial_x\e^{\frac{-U_{A,B}\left(x\right)}{k_BT}}\partial_x$, we apply the standard methods outlined in Pavliotis~\cite[(7.1), pp.~239]{pavliotis2014stochastic}, which leads to the following system of equations for the average mixing time $\tau_{\rm mix}(x)$ as a function of the initial position $x$ 
\begin{eqnarray}\label{eq:zwanzig}
\begin{aligned}
\frac{k_BT}{m\gamma}\e^{\frac{U_{A,B}\left(x\right)}{k_BT}}\partial_x\e^{\frac{-U_{A,B}\left(x\right)}{k_BT}}\partial_x\tau_{mix}\left(x\right) ={}& -1, d < x \leq 0. \\
\tau_{mix}\left(x\right) = {}& 0, x = d. 
\end{aligned}
\end{eqnarray}
We can solve Equation~\ref{eq:zwanzig} using appropriate limits to get
\begin{eqnarray}\label{eq:subsequent}
\tau_{mix}\left(x\right) = \frac{m\gamma}{k_BT}\int_{0}^{\sqrt{\frac{3k_BT}{2A}}B}\int_{0}^{q}\e^\frac{U\left(q\right)-U\left(r\right)}{k_BT}dqdr,
\end{eqnarray}
where we have approximated the potential near the barrier as an inverted harmonic oscillator to estimate $d = \sqrt{\frac{3k_BT}{2A}}B$. Repeating this approximation within the integral, we obtain
\begin{eqnarray}\label{eq:high_friction_mixing}
\tau_{mix}\left(x\right)\approx\frac{m\gamma}{k_BT}\int_{0}^{B\sqrt{\frac{3k_BT}{2A}}}\int_{0}^{q}\e^{\frac{2A\left(r^2-q^2\right)}{B^2k_BT}}dqdr\approx\frac{mB^2\gamma}{2\sqrt{2}A}.
\end{eqnarray}
\end{enumerate}

Equations~\ref{eq:low_friction_mixing} and~\ref{eq:high_friction_mixing} predict that the mixing time will scale as $1/\gamma$ in the low friction limit and as $\gamma$ in the high friction limit. In the first case, mixing within the well is limited by the rate at which the particle can reduce its total energy, whereas in the second it is determined by the speed with which the particle can diffuse in position space to a configuration with lower potential energy.  We plot simulation results for the mixing time, along with the analytic predictions, in Fig.~\ref{fig:mixing}, confirming this scaling and the resultant non-monotonicity. Quantitatively, simulation results deviate from the crude analytic predictions at low force (e.g. $F=1$ in Figure~\ref{fig:mixing_low_F}), when it is no longer reasonable to treat $x=0$ as either a reflecting barrier or a steep side of a harmonic well. Instead, excursions of the particle back onto the slope occupying the region $x>0$ lead to much larger mixing times. Nonetheless, the scaling and non-monotonicity in friction are preserved. Combining $\tau_{trans}$ and $\tau_{mix}$ gives $\tau_e$, plotted in Fig.\,~\ref{fig:non_monotonic_gamma_fixed_A_and _F}. Analytically, the erasing time is given as:

\begin{figure}[h!]
\begin{minipage}{0.5\textwidth}
  \centering
  \includegraphics[scale=0.17]{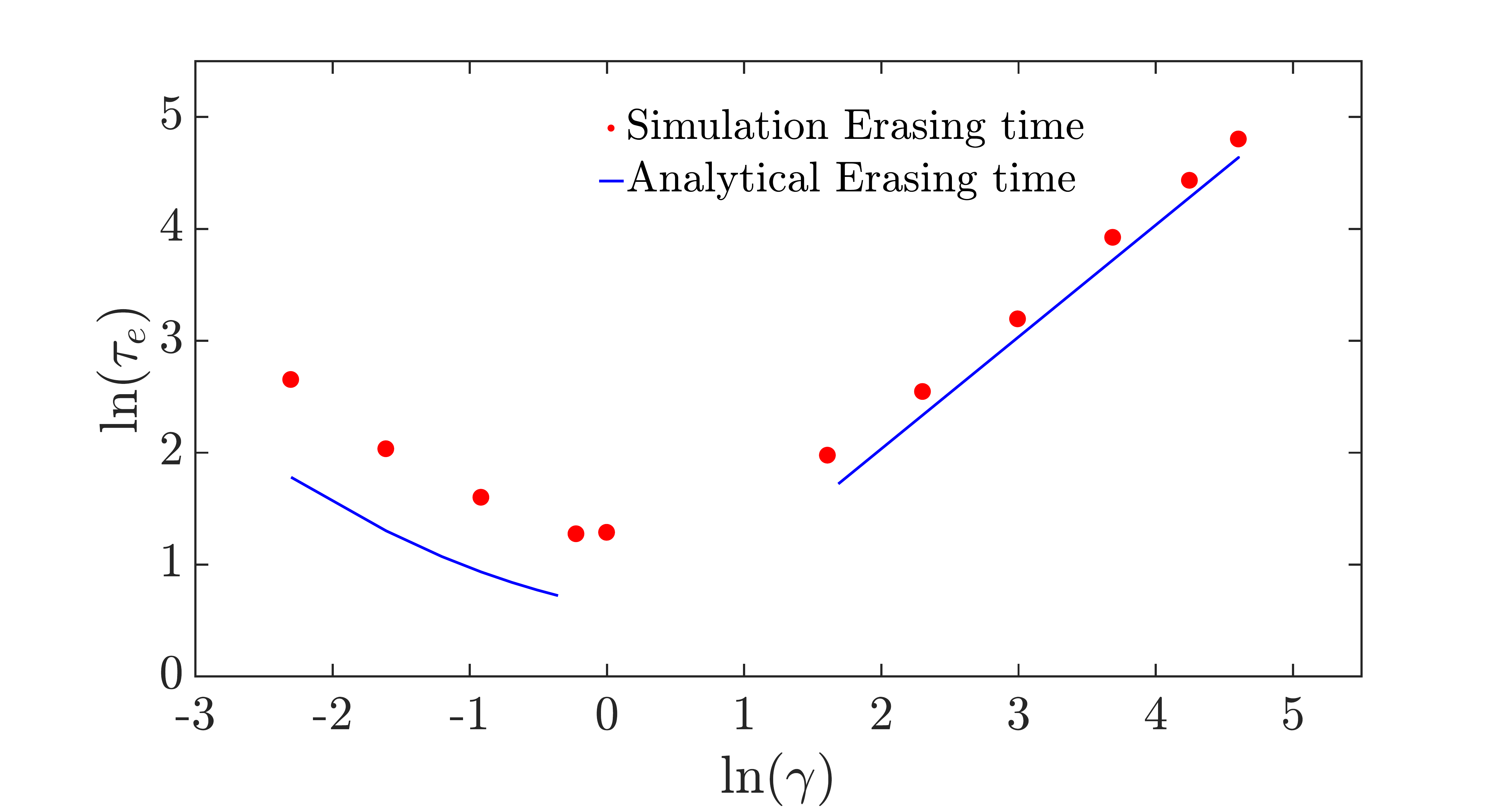}
  \subcaption{$A=10,F=1$}
  \label{fig:erasing_low_F}
\end{minipage}
\begin{minipage}{0.5\textwidth}
  \centering
  \includegraphics[scale=0.17]{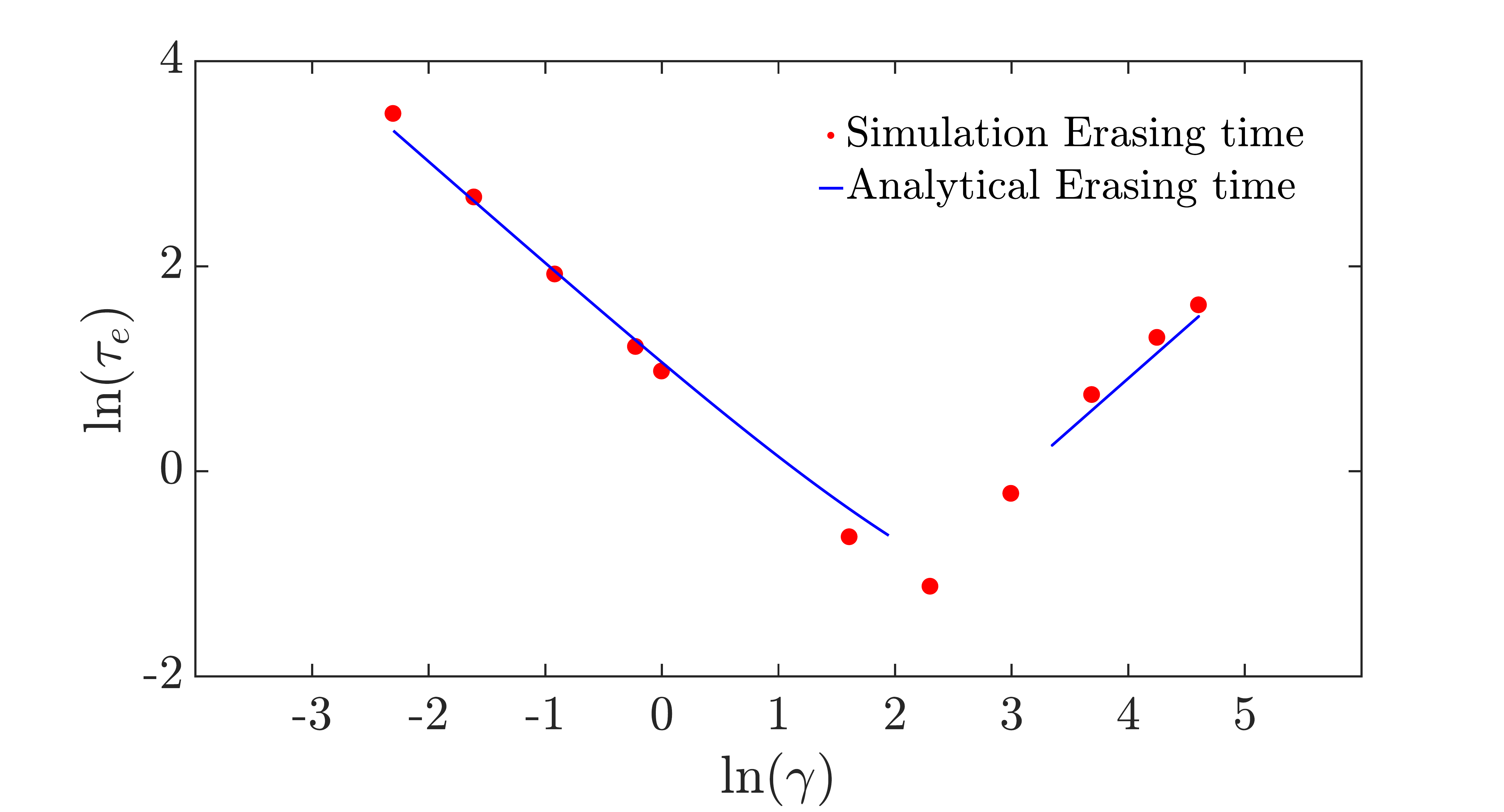}
  \subcaption{$A=10,F=100$}
  \label{fig:erasing_high_F}
\end{minipage}
\caption{Evidence from simulation that the erasing time $\tau_e\approx\sqrt{\frac{2mB}{F}}+\frac{1}{\gamma}\log{\frac{A + F\cdot B}{A-3k_BT}}$  in the low friction regime, scaling as $1/\gamma$, and $\tau_e\approx\frac{mB\gamma}{F}+\frac{mB^2\gamma}{2\sqrt{2}A}$ in the high friction regime, scaling as $\gamma$. The erasing time is minimised at moderate friction.}
\label{fig:non_monotonic_gamma_fixed_A_and _F} 
\end{figure}

\begin{enumerate}
\item\textbf{Low friction regime}:
\begin{eqnarray}\label{eq:erasing_low}
\tau_e\approx \sqrt{\frac{2mB}{F}} + \frac{1}{\gamma}\log{\frac{A + F\cdot B}{A-3k_BT}}.
\end{eqnarray}
\item\textbf{High friction regime}:
\begin{eqnarray}\label{eq:erasing_high} 
\tau_e \approx \frac{mB\gamma}{F} + \frac{mB^2\gamma}{2\sqrt{2}A}
\end{eqnarray}
\end{enumerate}

Like reliability, erasing time is large in the underdamped and overdamped limits, and minimized at intermediate values of friction. The physical cause is the same as before; our erasing protocol involves setting the system into a non-equilibrium state, and waiting for the system to relax towards an equilibrium in the perturbed potential. This process requires the system to diffuse in energy space and also explore configuration space, and is therefore favoured by intermediate friction. Specifically, if the friction is too low, the particle oscillates and slowly loses energy to be confined within the desired well. If the friction is too high, both the transport and mixing times increase as the particle's movement through space is so slow. The relative importance of these effects can be seen in Fig.~\ref{fig:mixing_transport_combined}. We note that the value of the damping $\gamma$ that minimises $\tau_e$ is quite sensitive to $F$ (Fig.~\ref{fig:mixing_transport_combined}). Fundamentally, a larger $F$ means the challenge of moving in position-space is made easier, and a greater loss of energy is needed to reach equilibrium. Therefore a higher friction coefficient is optimal. As with the reliability time, further analysis is possible but not necessary for the conclusions we wish to draw. Once again, the key point is the trade-off between high and low friction, which is not specific to our control. Indeed, is likely to be quite generic since any protocol will necessary push the system out of equilibrium, and will require particles to be typically confined within the target well before the control is removed.

\begin{figure}[h!]
\begin{minipage}{0.5\textwidth}
  \centering
  \includegraphics[scale=0.17]{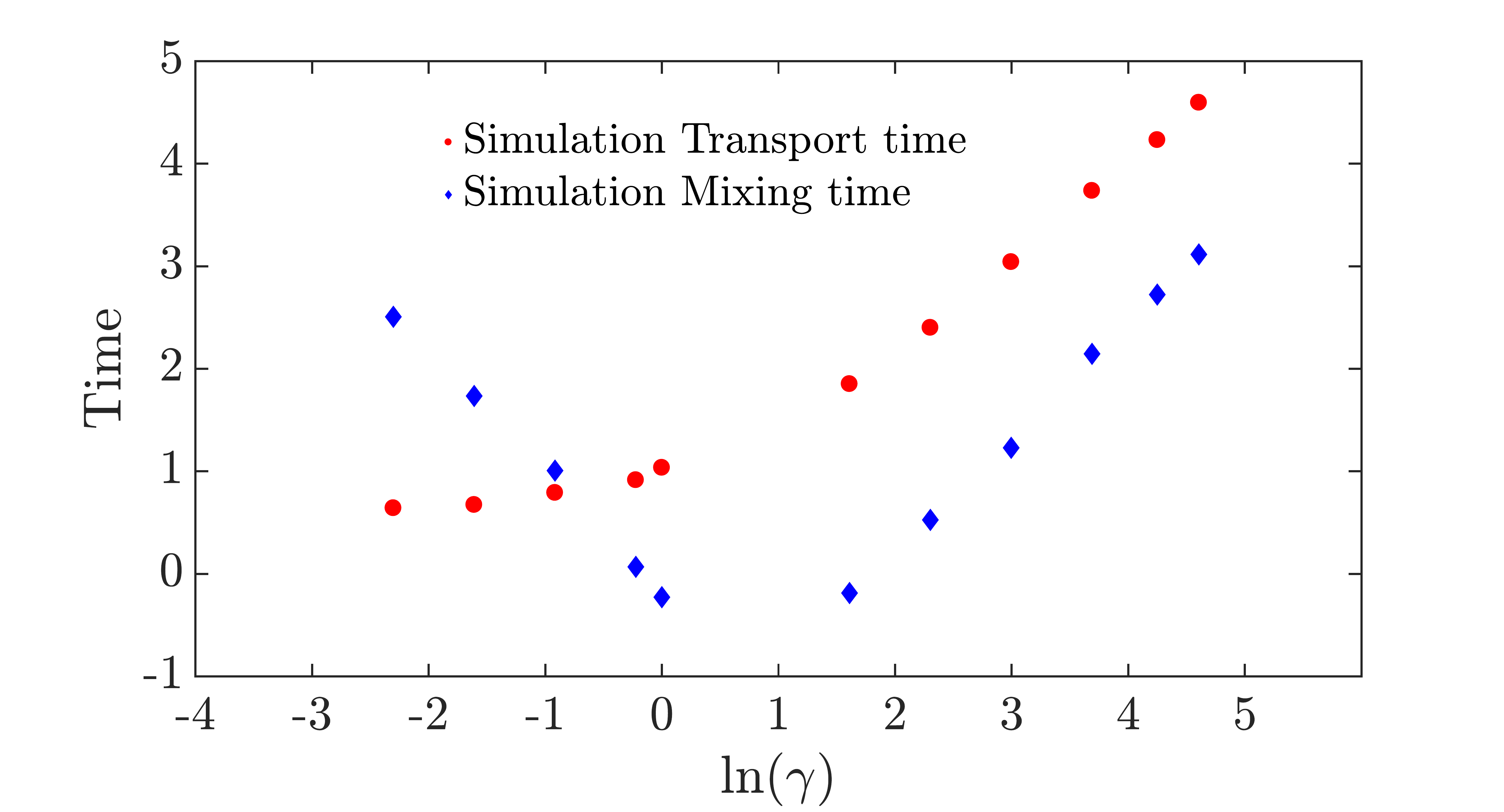}
  \subcaption{$A=10,F=1$}
  \label{fig:mixing_transport_low_F}
\end{minipage}
\begin{minipage}{0.5\textwidth}
  \centering
  \includegraphics[scale=0.17]{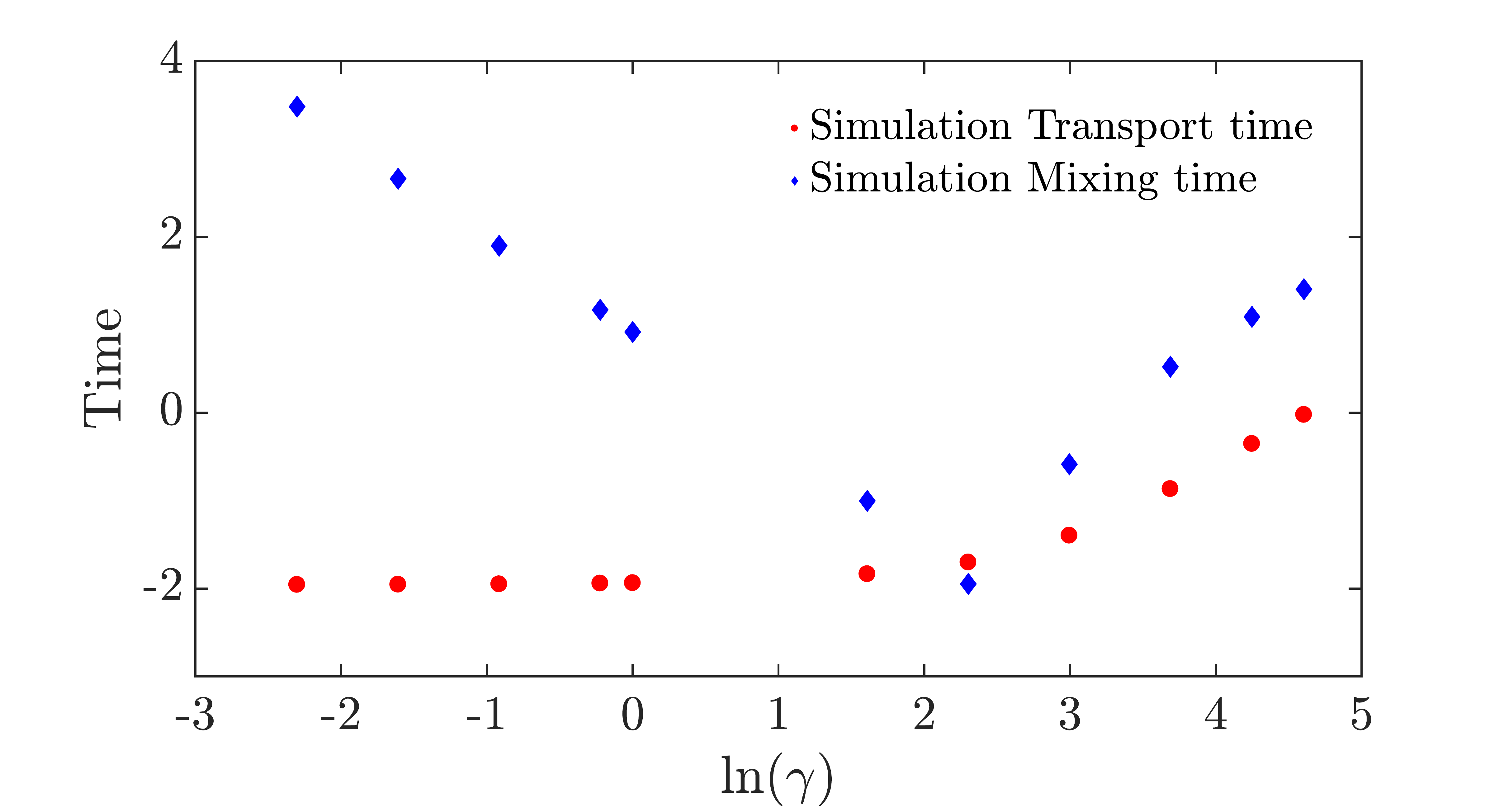}
  \subcaption{$A=10,F=100$}
  \label{fig:mixing_transport_high_F}
\end{minipage}
\caption{Comparison of transport and mixing times. Transport time dominates the mixing time for low force at high friction.}
\label{fig:mixing_transport_combined}
\end{figure}

%\textbf{Comparing transport and mixing times:}
%\begin{enumerate}
%\item\textbf{Low Friction}: At low $F$ (e.g. $F=1$ in Figure~\ref{fig:mixing_transport_low_F}), the particle recrosses over to the slope and spends a long time on it before it eventually relaxes inside the ``0'' well. At high $F$ (e.g. $F=100$ in Figure~\ref{fig:mixing_transport_high_F}), the particle takes a long time to dissipate the energy with which it arrives at the point $x=0$ and eventually relax inside the ``0'' well. As a consequence, the mixing time dominates over the transport time in the underdamped regime. 
%\item\textbf{High Friction}: The erasing time increases with increase in friction since both the transport and mixing times increase with increase in friction. At low $F$ (e.g. $F=1$ in Figure~\ref{fig:mixing_transport_low_F}), the transport time dominates over the mixing time; however if $F$ is very high (e.g. $F=100$ in Figure~\ref{fig:mixing_transport_high_F}) the mixing time starts dominating over the transport time since the particle acquires a lot of energy that it needs to dissipate.
%\end{enumerate}

Both erasing and reliability times exhibit a trade-off in friction, being  minimised by intermediate values. This fact sets up a second trade-off between designing bits with extreme values of friction to optimise reliability, or moderate values of friction to optimise erasing. The consequences of this secondary trade-off will be explored in Section~\ref{geometry}.

\subsubsection{Additional dependencies of the erasing time}
\label{subsubsec:more_erasing}
A larger value of $A$ implies a steeper descent into the target left-hand well, making mixing faster. We therefore expect that the mixing time and hence the erasing time monotonically decreases with $A$.

\begin{observation}\label{observ:erasing_decreasing_height}
The erasing time is a strictly decreasing function of well height $A$ at fixed $F$, $\gamma$. This can be seen from the analytic expressions of erasing time (Equations~\ref{eq:erasing_low} and ~\ref{eq:erasing_high}) backed up with numerical simulations (Figure~\ref{fig:erasing_height}).
\end{observation}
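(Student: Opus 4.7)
The plan is to verify the observation by decomposing $\tau_e = \tau_t + \tau_m$ and analysing the $A$-dependence of each summand in the two asymptotic friction regimes already derived above, then invoking the numerical sweep in Figure~\ref{fig:erasing_height} to bridge the intermediate regime. First I would note that the transport-time expressions $\tau_t \approx \sqrt{2mB/F}$ (low friction) and $\tau_t \approx mB\gamma/F$ (high friction) are both independent of $A$, because the typical starting point $x \approx B$ (the minimum of the right well, whose location does not depend on $A$) and the pure-slope dynamics that dominate the descent do not see the well height. Consequently the full $A$-dependence of the erasing time sits in the mixing time $\tau_m$, and it suffices to show $\partial \tau_m/\partial A < 0$ in each regime at fixed $F$ and $\gamma$.

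Next I would differentiate the two mixing-time formulas. For the low-friction formula $\tau_m \approx \gamma^{-1}\log\bigl((A+F\cdot B)/(A-3k_BT)\bigr)$, a direct calculation yields
\begin{equation*}
\frac{\partial \tau_m}{\partial A} \approx \frac{1}{\gamma}\left(\frac{1}{A+F\cdot B} - \frac{1}{A-3k_BT}\right) = -\frac{F\cdot B + 3k_BT}{\gamma(A+F\cdot B)(A-3k_BT)},
\end{equation*}
which is strictly negative whenever $F>0$, $B>0$ and $A>3k_BT$ (the latter being needed to even define the erasure region). For the high-friction formula $\tau_m \approx mB^2\gamma/(2\sqrt{2}A)$, the derivative $-mB^2\gamma/(2\sqrt{2}A^2)$ is trivially negative. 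Both inequalities match the physical picture: a taller well makes the slope into well ``0'' steeper, so an underdamped particle with initial energy $E_0 \approx A + F\cdot B$ has to decay only a modest fractional amount to reach $A-3k_BT$, while an overdamped particle reaches the $H = A-3k_BT$ contour after a shorter position-space excursion because the trapping parabola near $x=0$ is steeper.

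The intermediate friction regime is not captured by either closed form, so for that range I would rely on the direct Langevin simulations summarised in Figure~\ref{fig:erasing_height}. The main obstacle to a fully rigorous statement is precisely this gap: at moderate $\gamma$ neither the ``deterministic oscillator losing energy'' picture nor the ``overdamped diffusion in a parabolic well'' picture is strictly valid, and a rigorous monotonicity proof would have to work with a Melnikov--Meshkov-style interpolation for $\tau_m$ and verify the sign of its $A$-derivative uniformly in $\gamma$. Since the observation is used here only as a design heuristic underlying the geometric discussion of Section~\ref{geometry}, the two asymptotic derivatives together with the numerical sweep and the qualitative remark that a steeper confining well can only accelerate relaxation toward the target equilibrium are enough to support the claim.
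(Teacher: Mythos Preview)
Your proposal is correct and follows essentially the same approach as the paper: the paper simply points to the analytic expressions~(\ref{eq:erasing_low}) and~(\ref{eq:erasing_high}) together with the simulations in Figure~\ref{fig:erasing_height}, and you have made that reasoning explicit by noting the $A$-independence of $\tau_t$ and computing $\partial\tau_m/\partial A<0$ in each regime. Your acknowledgement of the intermediate-friction gap and reliance on the numerical sweep there also matches the paper's treatment.
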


\begin{figure}[h!]
\begin{minipage}{0.5\textwidth}
  \centering
  \includegraphics[scale=0.17]{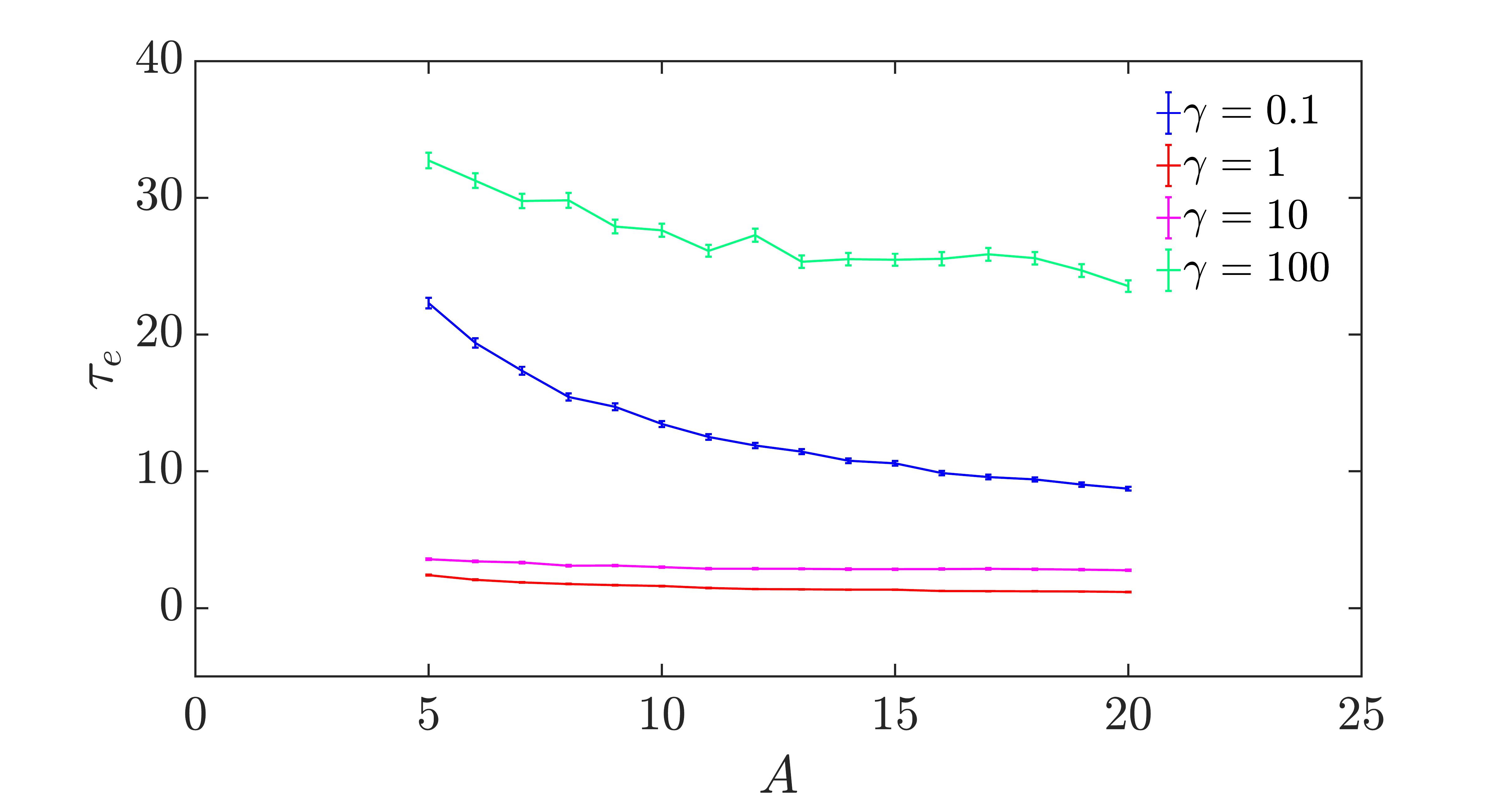}
  \subcaption{$F=5$}
\end{minipage}
\begin{minipage}{0.5\textwidth}
  \centering
  \includegraphics[scale=0.17]{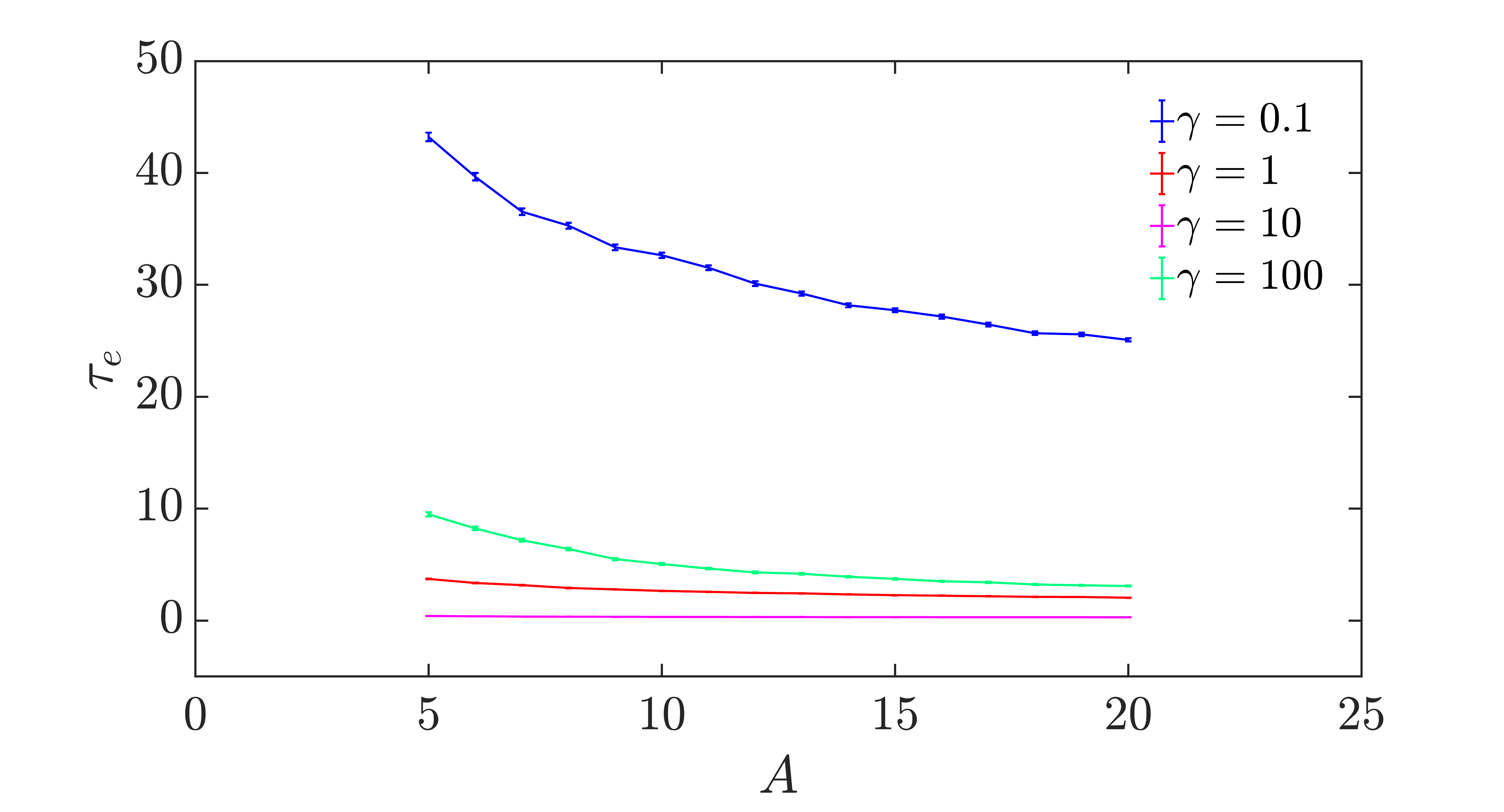}
  \subcaption{$F=100$}
\end{minipage}
\caption{Evidence that the erasing time is a strictly decreasing function of well height across a range of $F$ and $\gamma$. Other values of $F$ and $\gamma$ show similar behaviour.}
\label{fig:erasing_height}
\end{figure}

By contrast, erasing time shows a non-monotonic dependence on $F$ at fixed $A$, $\gamma$. Applying too little force leads to slow transport, and doesn't effectively trap the particle within the target well. But applying too much force supplies the particle with too much energy, which must subsequently be lost during the mixing period. The fact that erasing time monotonically decreases with $A$ at fixed $F$ and $\gamma$, and shows a non-monotonic dependence on $F$ at fixed $A$ and $\gamma$, leads to non-monotonic dependence of $\tau_e$ on $F$ at fixed $W=A+F$ and $\gamma$. We illustrate this non-monotonicity in Figure~\ref{fig:non_monotonic_F_fixed_work}, in which simple regression formulae have been fitted to the simulation data to enable interpolation at fixed $W$ and $\gamma$ (see Section~\ref{subsec:regression} of the Supplementary Information). As friction increases, the force required to provide the particle with excess energy increases, leading to minima at higher values of $F$.
 
\begin{figure}[h!]
\begin{minipage}{0.5\textwidth}
  \centering
  \includegraphics[scale=0.17]{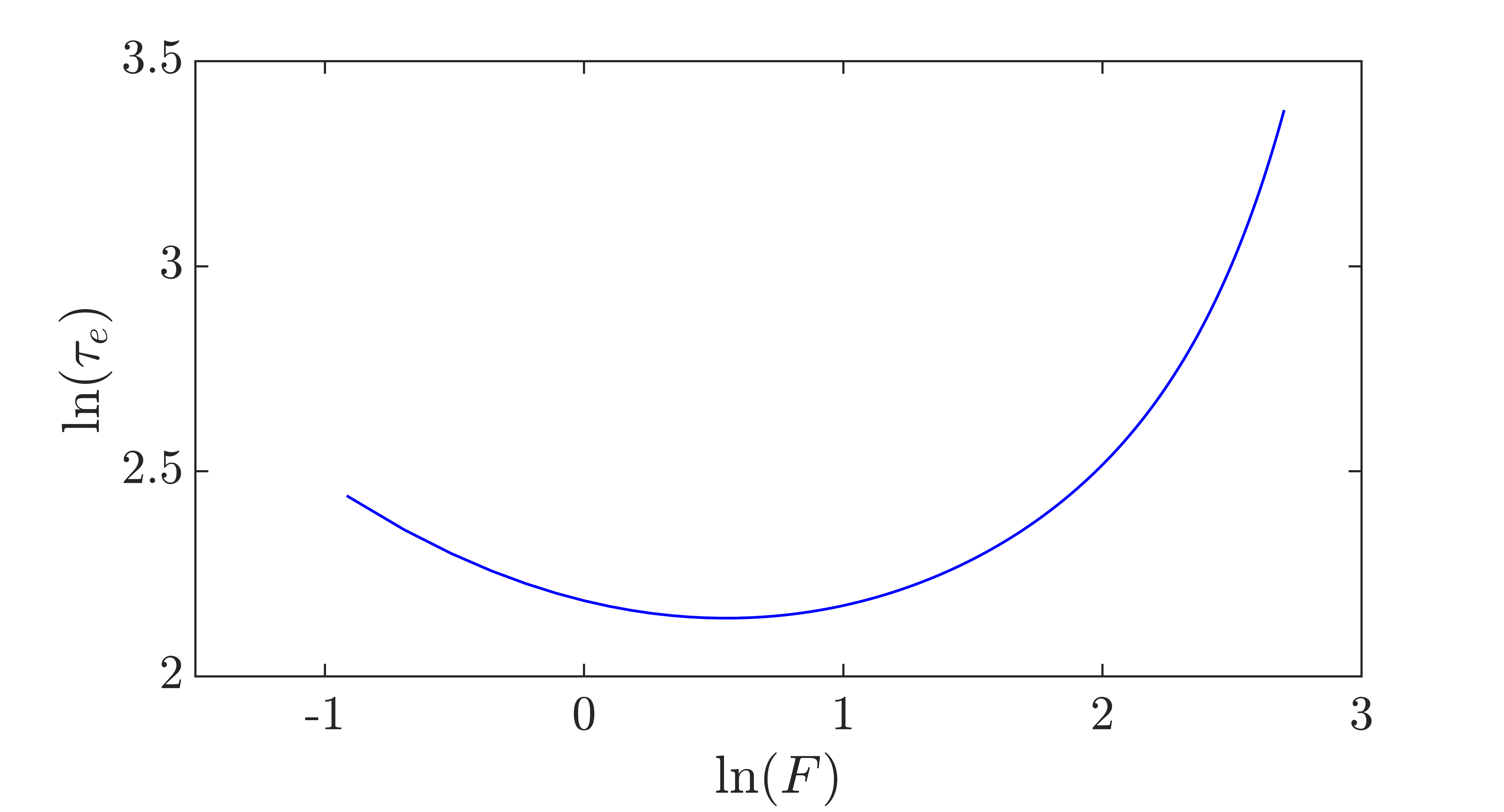}
  \subcaption{$\gamma=0.1$}
  \label{fig:erasing_work_low_friction_1}
\end{minipage}%
\begin{minipage}{0.5\textwidth}
  \centering
 \includegraphics[scale=0.17]{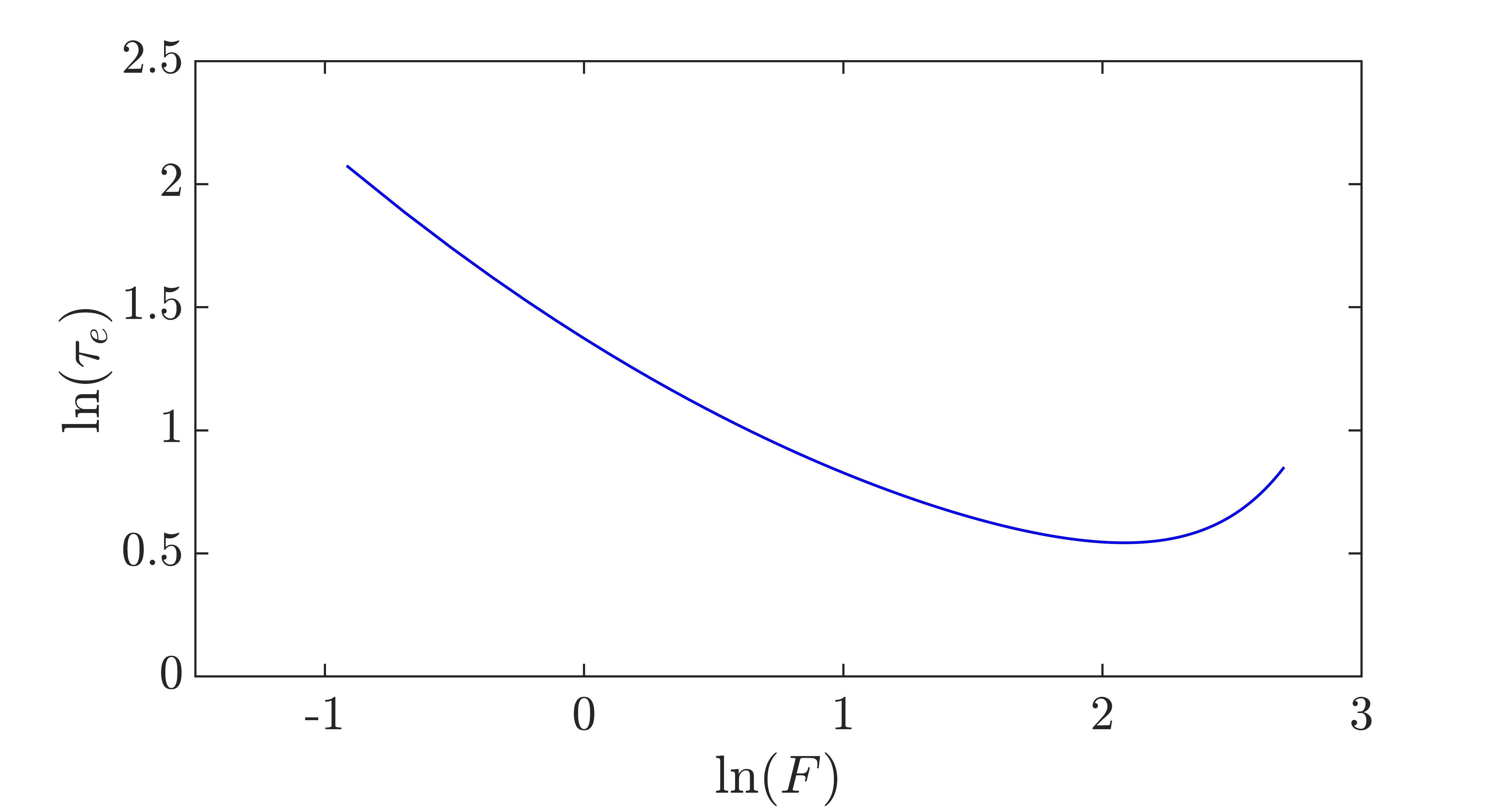}
  \subcaption{$\gamma=1$}
   \label{fig:erasing_work_low_friction_2}
\end{minipage}
\centering
\begin{minipage}{0.5\textwidth}
  \centering
  \includegraphics[scale=0.17]{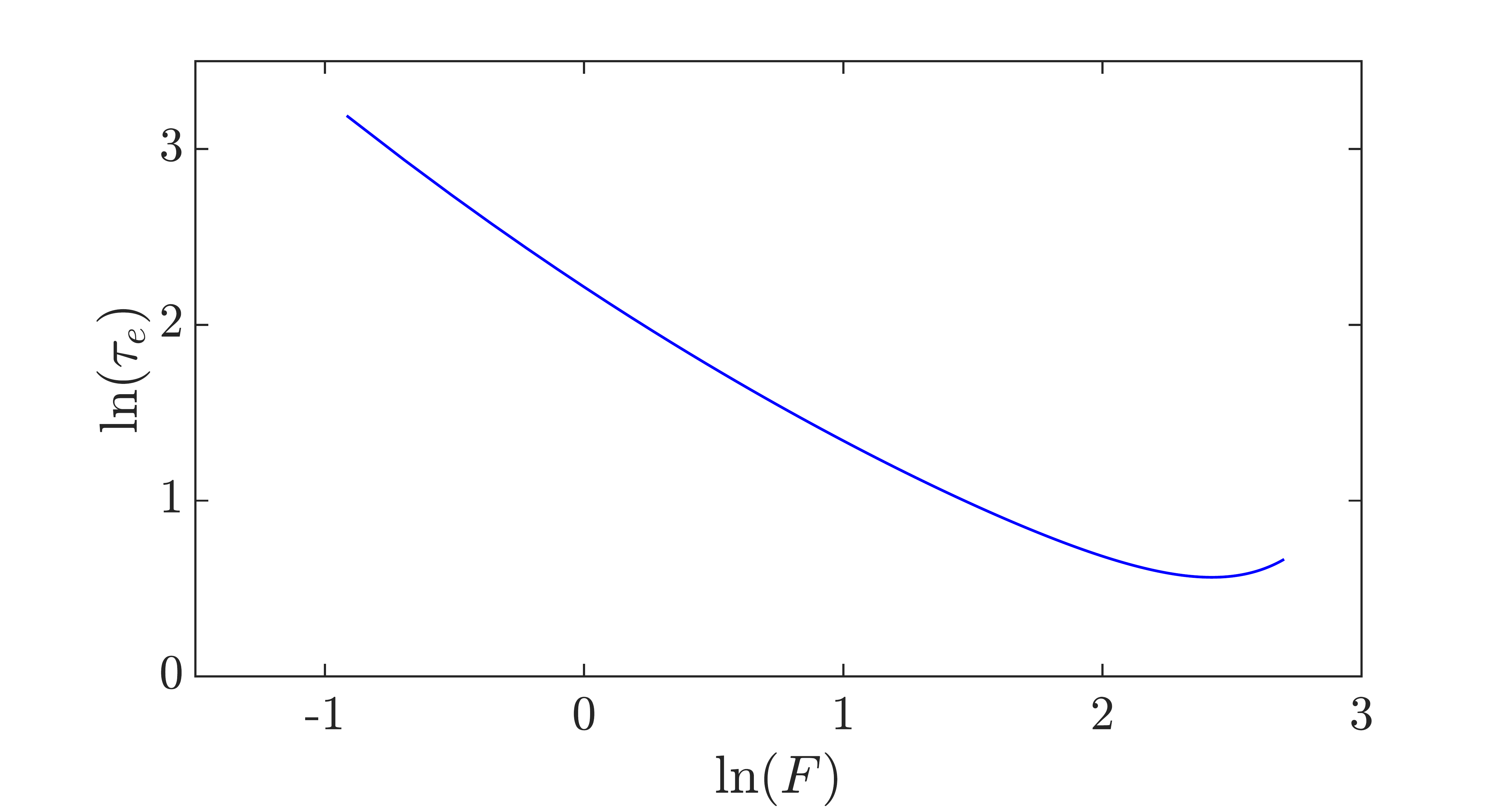}
  \subcaption{$\gamma=10$}
   \label{fig:erasing_work_high_friction_1}
\end{minipage}%
\begin{minipage}{0.5\textwidth}
  \centering
  \includegraphics[scale=0.17]{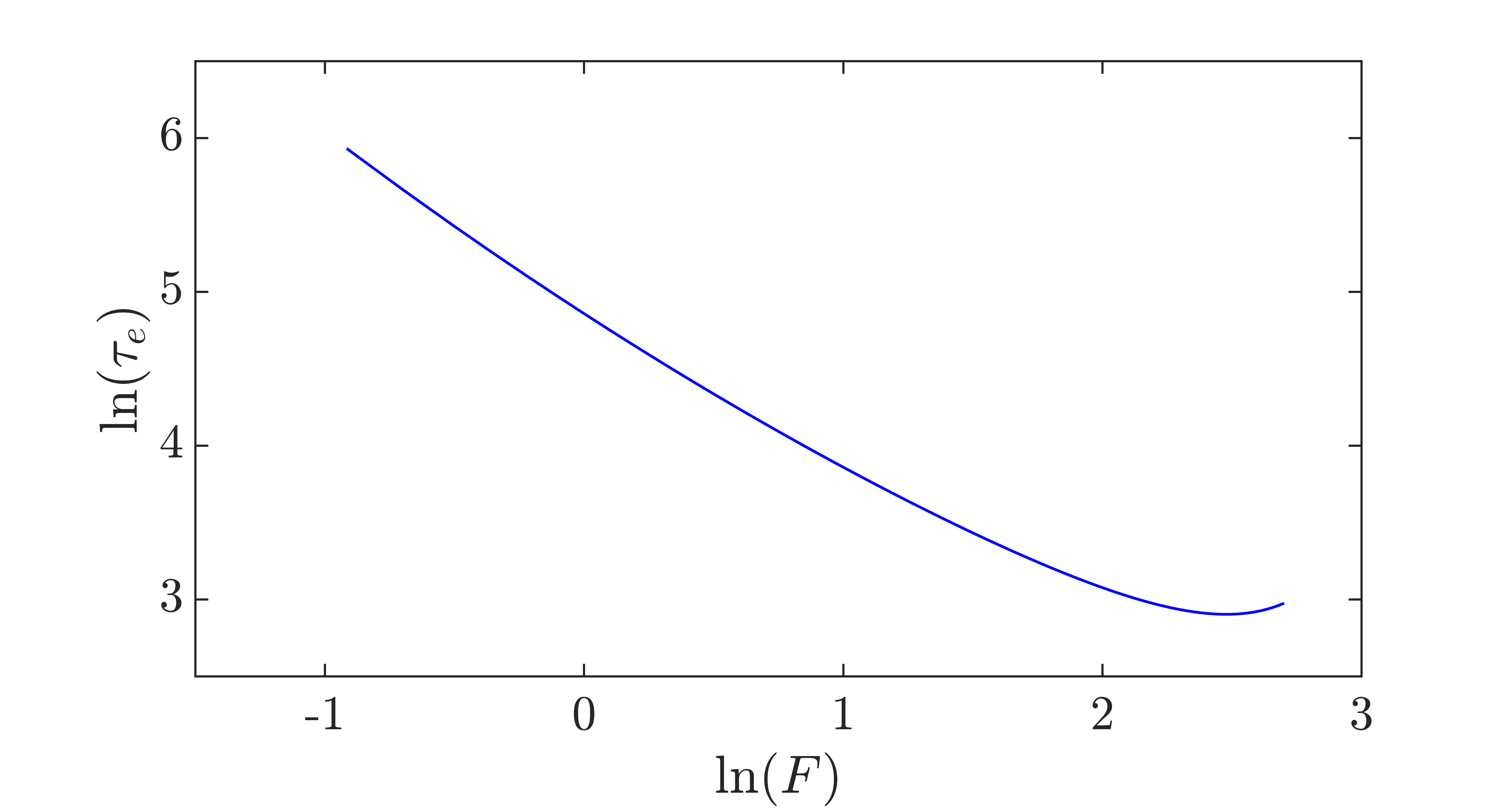}
  \subcaption{$\gamma=100$}
   \label{fig:erasing_work_high_friction_2}
\end{minipage}
\caption{For a fixed value of $W$ and $\gamma$, the erasing time is a non-monotonic function of $F$ and is minimum at moderate $F$. This is illustrated at work $W=20$ for various values of $\gamma$.}
\label{fig:non_monotonic_F_fixed_work}
\end{figure}

We make the following observation which will be used in the subsequent section.
\begin{observation}\label{observ:unique_local_minimum}
We have found no evidence of multiple local minima of erasing time in a level set of work for our control family. Refer to Section~\ref{single_minimum_work} of the Supplementary Information for characterstic plots showing the minima of erasing time in a level set of work. Physically this is unsurprising since the non-monotonicity in $\tau_e$ with $\gamma$ and $F$ mentioned above arise from fairly simple trade-offs, producing curves with single minima.  
\end{observation}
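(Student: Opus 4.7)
The plan is to combine analytic convexity along the level set in the two asymptotic friction limits with direct numerical verification in the intermediate regime. Along a fixed level set of $W = (A + F\cdot B - k_BT/2)/2$ one has $A = 2W + k_BT/2 - F\cdot B$, so I would parametrize the level set by $F \in (0, (2W - 5k_BT/2)/B)$, the upper bound coming from the requirement $A > 3k_BT$ needed for the erasure region to be nonempty.

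In the overdamped regime, substituting into Equation~\ref{eq:erasing_high} yields
\begin{equation*}
\tau_e(F) \approx \frac{mB\gamma}{F} + \frac{mB^2\gamma}{2\sqrt{2}\,(2W + k_BT/2 - F\cdot B)}.
\end{equation*}
Both summands are strictly convex on the admissible domain (their second derivatives are $2mB\gamma/F^3 > 0$ and a positive multiple of $B^2/A^3 > 0$), while the first is strictly decreasing and the second is strictly increasing in $F$. The sum is therefore strictly convex with a unique interior minimum. An entirely parallel analysis in the underdamped regime, starting from Equation~\ref{eq:erasing_low}, gives
\begin{equation*}
\tau_e(F) \approx \sqrt{\frac{2mB}{F}} + \frac{1}{\gamma}\log\frac{2W + k_BT/2}{2W + k_BT/2 - F\cdot B - 3k_BT},
\end{equation*}
where both summands are again convex with opposite monotonicity, producing a unique minimum.

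The main obstacle is the intermediate-friction regime, in which the asymptotic approximations break down and no closed-form expression for $\tau_e$ is available; convexity cannot be asserted a priori. There I would fall back on numerical evidence: sweep $F$ along a fine grid for a panel of $(W, \gamma)$ pairs that straddle the underdamped-to-overdamped crossover, compute $\tau_e(F)$ from Langevin simulations of the protocol in Example~\ref{ex:control_potential} as in Section~\ref{subsubsec:more_erasing}, and verify that each curve exhibits exactly one local minimum. The representative traces collected in Section~\ref{single_minimum_work} of the Supplementary Information serve this purpose and justify the ``no evidence'' wording of the observation.

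I would close with a physical rationale that makes uniqueness plausible uniformly in $\gamma$: along a level set of $W$, raising $F$ monotonically speeds transport (direct acceleration across the slope) while monotonically slowing mixing, because the accompanying decrease in $A$ enlarges both the log-ratio in the low-friction mixing estimate and the prefactor $1/A$ in the high-friction mixing estimate. Two monotone competing effects cross at most once, which explains why the friction-resolved trade-off produces a single minimum in every simulation we examined.
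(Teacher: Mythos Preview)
Your argument has a genuine dimensionality gap. Because $W = (A + F\cdot B - k_BT/2)/2$ is independent of $\gamma$, a level set of work is two-dimensional: once $W$ is fixed, $A$ is determined by $F$, but $\gamma$ remains a free coordinate. A ``local minimum of erasing time in a level set of work'' is therefore a minimum over $(F,\gamma)$ jointly, and this is exactly how the paper uses the notion (the definition of a trapped design quantifies over all $(A',F',\gamma')$ with the same work). Your parametrisation ``by $F \in (0,(2W-5k_BT/2)/B)$'' silently freezes $\gamma$, and all of your convexity computations and proposed numerical sweeps are along $F$-slices at fixed $\gamma$. Uniqueness of the minimiser on every one-dimensional slice does not imply a unique local minimum on the two-dimensional level set: the curve $\gamma \mapsto (F^\ast(\gamma),\gamma)$ of slice-minimisers could itself carry several local minima of $\tau_e$. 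Your closing physical rationale likewise addresses only the $F$ trade-off and says nothing about $\gamma$.

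The paper's supporting evidence in Section~\ref{single_minimum_work} is accordingly two-dimensional and not the ``representative traces'' you describe. They work with the cubic regression surface for $\log\tau_e$ fitted to the simulation data, substitute $A = W - e^{F'}$ to restrict to a level set, and then plot the zero loci of both partials $(\partial\tau_e/\partial F')_{W,\gamma'}$ and $(\partial\tau_e/\partial\gamma')_{W,F'}$ in the $(F',\gamma')$ plane. The observation is that these two curves intersect exactly once within the parameter window, for the values of $W$ tested. Your asymptotic convexity calculations in $F$ are correct and could complement such an analysis, but on their own they do not address the question the observation is answering; to close the gap you would need either a joint convexity argument in $(F,\gamma)$ or a two-dimensional stationary-point search of the kind the paper actually performs.
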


As with the reliability time, a more detailed analysis of the dependence of $\tau_e$ on other parameters, and even the shape of the control, is possible. However, these details are likely to be difficult to generalise, and are not necessary for the conclusions we draw in the subsequent sections.

\section{Design of Bits}\label{geometry}
We are now ready to study the question of how to design good bits. A design involves choosing parameters $A,F,\gamma$ for a bit to satisfy requirement specifications in terms of speed of erasing and reliability, without expending more work than required.  The most general formulation of our problem would require us to also allow the length scale $B$, the temperature $T$ and the mass $m$ to vary, as well as allowing arbitrary controls. Such a formulation would appear to make the problem even more challenging, so it seems prudent in a first analysis to restrict  our analysis to the variables $A,F$, and $\gamma$. Our restricted analysis is not without value since the underlying technology in any given construction typically does not allow arbitrary variation. Our numerical analysis with Example~\ref{ex:control_potential} will guide us in our assumptions and analysis, but our results will hold in greater generality. We will construct our proofs based on general assumptions, and subsequently explain how these assumptions are met by our control family.

We introduce the following terms.
\begin{enumerate}
\item The design of a bit is completely specified by the \textbf{design triple} $\left(A,F,\gamma\right)$.  
\textbf{Design Space} $\left(\DS\right)$ is the space of all design triples $\left(A,F,\gamma\right)$.
\item A \textbf{requirement specification} is a tuple $(t_r, t_e)\in\mathbb{R}^2_{>0}$ denoting the reliability and erasing time that we require of the bit. \textbf{Requirements Space} $\left(\RS\right)$ is the space of all requirement specifications.
\item \textbf{Erasing time} $\tau_e: \DS\to\mathbb{R}_{>0}$ takes a design triple $(A,F,\gamma)$ to the time required for erasing the corresponding bit under the control protocol specified by $F$. \textbf{Reliability time} $\tau_r : \DS \to \mathbb{R}_{>0}$ takes a design triple $(A, F, \gamma)$ to the reliability time of the corresponding bit. Note that $\tau_r$ is constant as a function of $F$ since it is a property of the dynamics in the absence of control.
% The pair of maps $(\tau_r,\tau_e)$ goes from Design Space to Requirements Space. 
\item \textbf{Work} $W:\DS\to\mathbb{R}_{>0}$ represents the expected work done by the control in erasing the corresponding bit. We will assume that $W$ is constant as a function of $\gamma$, as is the case in Example~\ref{ex:control_potential}. 
\item A design $(A,F,\gamma)$ is \textbf{feasible} for a requirement $(t_r, t_e)$ iff both $\tau_r(A,F,\gamma)\geq t_r$ and $\tau_e(A,F,\gamma)\leq t_e$. A $(t_r,t_e)$-feasible design $(A,F,\gamma)$ is $(t_r,t_e)$-\textbf{optimal} iff the work $W(A,F,\gamma)$ is minimum among all $(t_r,t_e)$-feasible designs. 
\item Inspired by the observation that non-trivial minima of erasing time at fixed work exist for our family of protocols (Section~\ref{subsubsec:more_erasing}), we define the notion of \textbf{trapped} bits. A design $(A,F,\gamma)$ is \textbf{trapped} iff for all designs $(A',F',\gamma')$ with $W(A,F,\gamma)=W(A',F',\gamma')$, the erasing time $\tau_e(A,F,\gamma) \leq \tau_e(A',F',\gamma')$. 
 A design $(A,F,\gamma)$ is \textbf{uniquely trapped} iff for all designs $(A',F',\gamma')$ with $W(A,F,\gamma)=W(A',F',\gamma')$, the erasing time $\tau_e(A,F,\gamma)\leq\tau_e(A',F',\gamma')$ with equality iff $(A,F,\gamma)=(A',F',\gamma')$. A design $(A,F,\gamma)$ is \textbf{locally trapped} iff there exists a neighbourhood of $(A,F,\gamma)$ consisting of bits $(A',F',\gamma')$ with $W(A,F,\gamma)=W(A',F',\gamma')$ such that the erasing time $\tau_e(A,F,\gamma)\leq\tau_e(A',F',\gamma')$. More informally, a trapped design has the lowest erasing time within a level set of work; a  trapped design is unique if it is the {\em only} design within that level set of work to have the minimal erasing time; and a locally trapped design has the minimal erasing time within a local neighbourhood of designs of equal work.
\item A requirement specification $(t_r,t_e)$ is \textbf{unsaturated} iff there exists a $(t_r,t_e)$-optimal design $(A,F,\gamma)$ such that either $\tau_r(A,F,\gamma) > t_r$ or $\tau_e(A,F,\gamma) < t_e$. A feasible requirement specification that is not unsaturated is called \textbf{saturated}.

\end{enumerate}

Throughout this section, we will assume that $\tau_e,\tau_r$, and $W$ are continuous functions. \\ 

We will state the main results related to the properties of the optimal design leaving the detailed proofs to the Supplementary Information. We first claim that an optimal design always saturates the bound on the erasing time constraint. Further, if the optimal bit is not locally trapped, then it also saturates the bound on the reliability time constraint.

\begin{claim}[Saturation of timescales]\label{lem:improvee}
Let us assume that it is possible to locally decrease work at fixed reliability time (This is generally possible since one can perturb the control parameters to reduce work; but reliability time does not depend on the control parameters). Fix requirement specifications $(t_r,t_e)\in\RS$. Suppose $(A,F,\gamma)$ is a $(t_r,t_e)$-optimal design. Then 
\begin{enumerate}
\item\label{erasing_tight} $\tau_e(A,F,\gamma) = t_e$. 
\item\label{reliability_tight} If the design $(A,F,\gamma)$ is not locally trapped, then $\tau_r(A,F,\gamma) = t_r$.
\end{enumerate}
\end{claim}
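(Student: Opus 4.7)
My plan is to argue both parts by contradiction, leveraging continuity of $\tau_e$, $\tau_r$, and $W$ together with the stated hypothesis that work can be locally decreased at fixed reliability time.

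For part~\ref{erasing_tight}, suppose $(A,F,\gamma)$ is $(t_r,t_e)$-optimal but $\tau_e(A,F,\gamma)<t_e$. Invoking the hypothesis, I choose a nearby design $(A',F',\gamma')$ with $\tau_r(A',F',\gamma')=\tau_r(A,F,\gamma)\geq t_r$ and $W(A',F',\gamma')<W(A,F,\gamma)$. Since $\tau_e$ is continuous and the strict inequality $\tau_e(A,F,\gamma)<t_e$ holds, by shrinking the perturbation I may also ensure $\tau_e(A',F',\gamma')<t_e$. Then $(A',F',\gamma')$ is feasible and strictly cheaper, contradicting optimality. Hence $\tau_e(A,F,\gamma)=t_e$.

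For part~\ref{reliability_tight}, I suppose the optimal design is not locally trapped and that $\tau_r(A,F,\gamma)>t_r$ strictly. By part~\ref{erasing_tight}, $\tau_e(A,F,\gamma)=t_e$. Because the design is not locally trapped, every neighborhood contains a design $(A',F',\gamma')$ on the same level set of $W$ with strictly smaller erasing time. By continuity of $\tau_r$, a sufficiently small such perturbation preserves $\tau_r>t_r$. Now $(A',F',\gamma')$ has the same work as $(A,F,\gamma)$ and satisfies $\tau_e(A',F',\gamma')<t_e$ strictly. Applying the hypothesis at $(A',F',\gamma')$ yields a further nearby design $(A'',F'',\gamma'')$ with identical reliability time (hence still $>t_r$) but strictly smaller work, and by continuity of $\tau_e$ the strict inequality $\tau_e<t_e$ is preserved. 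This feasible design is cheaper than $(A,F,\gamma)$, contradicting optimality.

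The main obstacle I anticipate is the coordination of the two successive perturbations in part~\ref{reliability_tight}: the first must stay on a level set of $W$ while strictly reducing $\tau_e$, and the second must reduce $W$ while preserving $\tau_r$. One needs to verify that the strict inequalities produced at each step (namely $\tau_r>t_r$ after the first step and $\tau_e<t_e$ after the second) can be simultaneously maintained by choosing the two perturbation scales small enough. This is essentially a standard openness argument in the three-dimensional design space, combined with the transversality implicit in the assumption that $W$ and $\tau_r$ can be independently varied via the control parameters (the parenthetical remark in the claim's statement).
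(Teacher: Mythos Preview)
Your proof is correct and follows essentially the same contradiction strategy as the paper. The paper's version of part~\ref{reliability_tight} is marginally tidier: after reaching the equal-work design $(A',F',\gamma')$ with $\tau_e<t_e$ and $\tau_r\geq t_r$, it simply notes that this design is itself $(t_r,t_e)$-optimal (feasible, same work) and invokes part~\ref{erasing_tight} directly for the contradiction, thereby avoiding your second perturbation to $(A'',F'',\gamma'')$.
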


\begin{proof}
Refer Section~\ref{sec:geometry_optimal_bit} in the Supplementary Information.
\end{proof}

\begin{figure}[h!]
\begin{center}
\[
\includegraphics[scale=0.45]{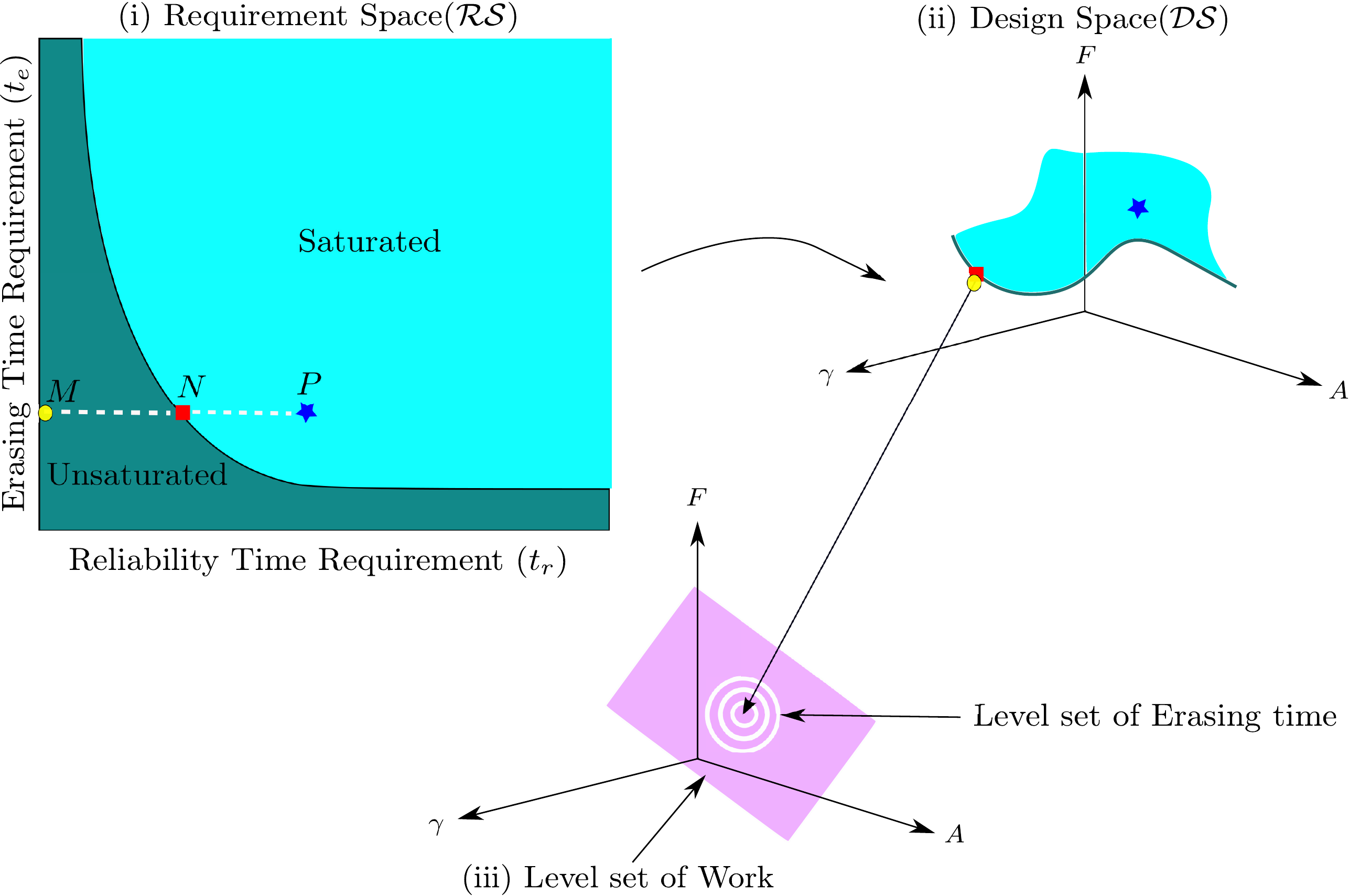}%182mm
\]
\end{center}
\caption{An illustration of the mapping of requirement specifications to optimal designs.  The design space is divided by a curve corresponding to the reliability and erasing times of trapped designs. Points $M,N$ in (i) Requirement Space$(\RS)$ having the same erasing time requirement get mapped to the same optimal bit in (ii) Design Space$(\DS)$:  a trapped design with $\tau_r$ and $\tau_e$ equal to the requirements at $N$. The requirement specifications represented by points like $P$ having the same $t_e$ but greater $t_r$ than $N$ are mapped to distinct points in design space. (iii) A representation of a level set of work $W$ within $\DS$, illustrating that the optimal designs to which unsaturated requirements are mapped minimize erasing time among all designs requiring the same work.}
\label{fig:optimal_bit_manifold}
\end{figure}

The next claim provides insight into the geometry of optimal designs. In particular, it states that under mild assumptions the requirement space is divided into two regions by a boundary given by the reliability and erasing times of trapped designs. Requirements with $t_r<t_r^\prime$ and $t_e=t_e^\prime$, where $(t_r^\prime, t_e^\prime)$ is a requirement on the dividing line,  are unsaturated, while other requirement specifications are saturated.

\begin{claim}[Saturated and Unsaturated Requirements]\label{claim2}
Assume that the erasing time of trapped designs is a strictly decreasing function of the work (refer to Observation~\ref{observ:decreasing_work} in the Supplementary Information for a justification), and that as before it is always possible to decrease work  at fixed reliability time. Let $(A^*,F^*,\gamma^*)$ be a trapped design such that $\tau_e(A^*,F^*,\gamma^*)=t_e$.
\begin{enumerate}
\item\label{same_optimal_bit} If $t_r\leq \tau_r(A^*,F^*,\gamma^*)$ then $(A^*,F^*,\gamma^*)$ is $(t_r,t_e)$-optimal.
\item\label{unsaturated} If $t_r < \tau_r(A^*,F^*,\gamma^*)$ then $(t_r,t_e)$ is unsaturated.
\item\label{saturated} Make the additional assumption that locally trapped designs are uniquely trapped (as noted for our family of protocols (Example~\ref{ex:control_potential}) in Observation~\ref{observ:unique_local_minimum}). 

If $t_r\geq \tau_r(A^*,F^*,\gamma^*)$, then $(t_r,t_e)$ is saturated.
\end{enumerate}
\end{claim}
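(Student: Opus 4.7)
The plan is to prove the three parts in sequence, leaning on Claim~\ref{lem:improvee} (saturation of timescales) and the additional monotonicity/uniqueness hypotheses rather than on any explicit form of $\tau_r$, $\tau_e$, or $W$.

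For Part~\ref{same_optimal_bit}, I would first note that $(A^*,F^*,\gamma^*)$ is $(t_r,t_e)$-feasible, since $\tau_e(A^*,F^*,\gamma^*)=t_e$ and $\tau_r(A^*,F^*,\gamma^*)\geq t_r$, and then argue optimality by contradiction. Suppose some feasible $(A',F',\gamma')$ has work $W' < W^*:=W(A^*,F^*,\gamma^*)$. Let $(A'',F'',\gamma'')$ be a trapped design at work $W'$. Strict monotonicity of the trapped erasing time in work yields $\tau_e(A'',F'',\gamma'') > \tau_e(A^*,F^*,\gamma^*) = t_e$, while the definition of trapping at work $W'$ gives $\tau_e(A'',F'',\gamma'')\leq \tau_e(A',F',\gamma')\leq t_e$, contradicting feasibility. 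Part~\ref{unsaturated} is then a one-line consequence: the design produced in Part~\ref{same_optimal_bit} has reliability slack $\tau_r(A^*,F^*,\gamma^*)-t_r>0$, which is exactly the witness required for unsaturation.

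For Part~\ref{saturated}, I would first secure the feasibility of $(t_r,t_e)$ that is implicit in the conclusion (saturation being defined only for feasible requirements): since $\tau_r\to\infty$ as $A\to\infty$ and Observation~\ref{observ:erasing_decreasing_height} shows $\tau_e$ decreases in $A$ at fixed $F,\gamma$, inflating the barrier above $A^*$ produces a feasible design. Now take any $(t_r,t_e)$-optimal $(A,F,\gamma)$; part~\ref{erasing_tight} of Claim~\ref{lem:improvee} gives $\tau_e(A,F,\gamma)=t_e$. If $(A,F,\gamma)$ is not locally trapped, part~\ref{reliability_tight} of the same claim directly yields $\tau_r(A,F,\gamma)=t_r$. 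Otherwise, by the extra hypothesis $(A,F,\gamma)$ is uniquely, hence globally, trapped. Because $(A^*,F^*,\gamma^*)$ is also trapped with $\tau_e(A^*,F^*,\gamma^*)=t_e$, strict monotonicity of the trapped erasing time in work forces $W(A,F,\gamma)=W(A^*,F^*,\gamma^*)$; unique trapping then forces $(A,F,\gamma)=(A^*,F^*,\gamma^*)$, so $\tau_r(A,F,\gamma)=\tau_r(A^*,F^*,\gamma^*)\leq t_r$, and the feasibility inequality $\tau_r(A,F,\gamma)\geq t_r$ promotes this to equality. Thus every optimal design meets both bounds with equality, and $(t_r,t_e)$ is saturated.

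The step I expect to be the main obstacle is technical rather than conceptual: all three parts silently require that trapped designs exist at the relevant work levels and that optimal designs exist at the relevant requirement specifications. Making this rigorous will need a compactness or coercivity argument on level sets of $W$ together with the continuity of $\tau_e$ and $\tau_r$, with particular care near the boundaries $F\to 0$ and $\gamma\to 0,\infty$ where the erasing time diverges. Once existence is granted, the rest is bookkeeping around Claim~\ref{lem:improvee} and the strict monotonicity hypothesis.
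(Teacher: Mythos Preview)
Your proposal is correct and follows essentially the same route as the paper's proof: Part~\ref{same_optimal_bit} by contradiction through a trapped design at the smaller work level, Part~\ref{unsaturated} as an immediate corollary, and Part~\ref{saturated} by forcing any optimal design with reliability slack to coincide with $(A^*,F^*,\gamma^*)$ via unique trapping and the strict monotonicity of trapped erasing time in work. Your added feasibility check and the explicit existence caveat (trapped designs at each work level, optima at each requirement) are sound observations that the paper's proof also leaves implicit.
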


\begin{proof}
Refer Section~\ref{sec:geometry_optimal_bit} in the Supplementary Information.
\end{proof}

The claims about saturation/unsaturation of times-scales can also be proved using KKT conditions [Refer Section~\ref{subsec:KKT_conditions} in the Supplementary Information], a standard tool from optimization theory. 

A more intuitive picture of the results can be understood from Figure~\ref{fig:optimal_bit_manifold}. In this figure, we illustrate how finding an optimal design subject to a specification  maps a point in the requirement space to a point in the design space. For a trapped design $(A^*,F^*,\gamma^*)$, requirements with $t_r < \tau_r(A^*,F^*,\gamma^*)$ and $t_e = \tau_e(A^*,F^*,\gamma^*)$ are unsaturated and get mapped to the same design $(A^*,F^*,\gamma^*)$ (claims~\ref{claim2}.~\ref{unsaturated} and~\ref{claim2}.~\ref{same_optimal_bit}). If the design $(A^*,F^*,\gamma^*)$ is uniquely trapped, then requirements with $t_r \geq\tau_r(A^*,F^*,\gamma^*)$ and $t_e=\tau_e(A^*,F^*,\gamma^*)$ are saturated (claim~\ref{claim2}.~\ref{saturated}).

\begin{figure}[h!]
\begin{minipage}{0.5\textwidth}
  \centering
  \includegraphics[scale=0.17]{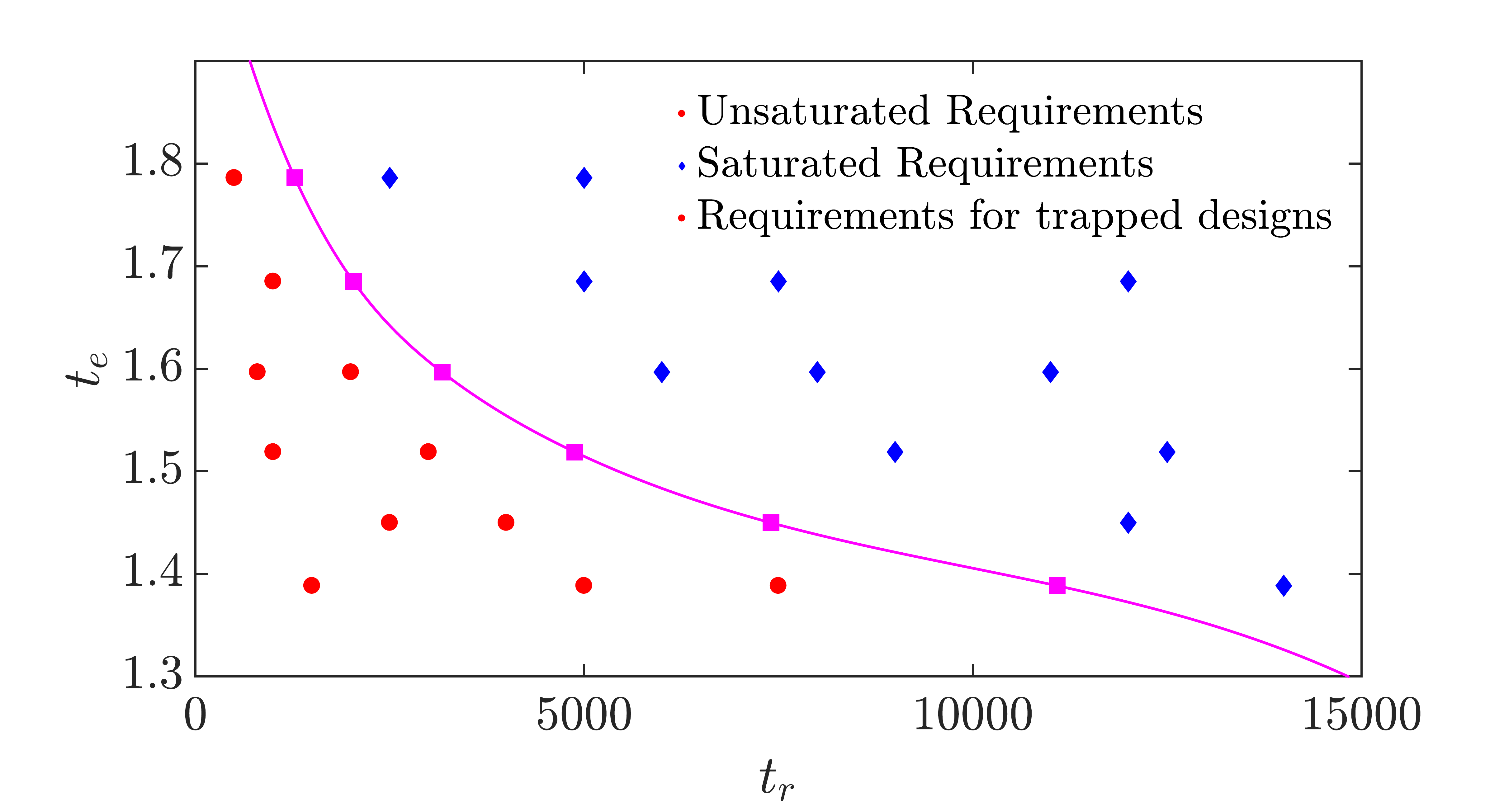}
   \subcaption{}
\end{minipage}
\begin{minipage}{0.5\textwidth}
  \centering
  \includegraphics[scale=0.17]{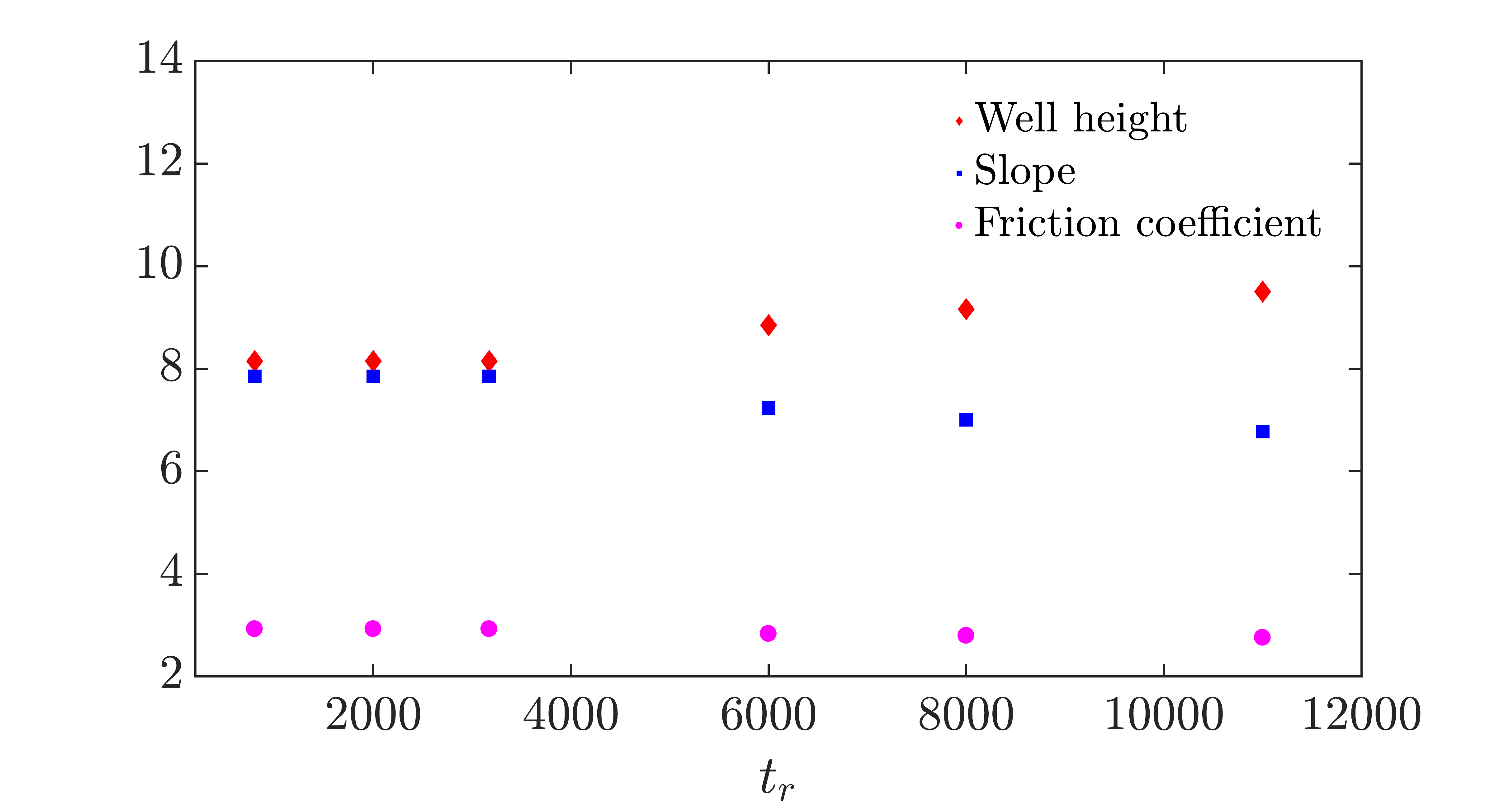}
  \subcaption{}
\end{minipage}
\caption{Illustration of the division of Requirement Space$(\RS)$ into saturated and unsaturated regions by requirements that correspond to trapped designs. (a) Squares show requirements $(t_e,t_r)$ that are saturated by trapped designs for the family of protocols we consider. Numerical optimization shows that requirements to the left of the locus defined by these points are unsaturated (circles), whereas requirements to the right are saturated (diamonds). (b) A plot of the optimal designs for points from (a) at $t_e =1.5967$. It is clear that for requirements $t_r \leq 3173$, lying to the left of the trapped-design locus in (a), optimal design parameters are identical whereas they are distinct for $t_r > 3173$.}
\label{fig:satunsat}
\end{figure}

Figure~\ref{fig:satunsat} illustrates these results for our example family of controls (Example~\ref{ex:control_potential}). As discussed in Section~\ref{subsec:regression} of the Supplementary Information, we have implemented simple regression to fit the functions $\tau_e(.)$ and $\tau_r(.)$ to our simulation results. We then identified trapped designs using numerical minimisation, plotting the requirement specifications saturated by these designs. For each trapped bit $(A^*,F^*,\gamma^*)$, we randomly selected requirements with $t_e = \tau_e(A^*,F^*,\gamma^*)$, but with $t_r$ either greater than equal to or less than $\tau_r(A^*,F^*,\gamma^*)$, and used numerical optimization techniques to search for the optimal designs. The results support our analysis; requirements with $t_r<\tau_r(A^*,F^*,\gamma^*)$ are unsaturated, and those with $t_r\geq\tau_r(A^*,F^*,\gamma^*)$ are saturated. Furthermore, as we show in Figure~\ref{fig:satunsat}\,(b), unsaturated requirements at fixed $t_e$ all map to the same trapped design.

\subsection{Optimal friction for simple controls}\label{subsec:forbidden_friction}

In Section~\ref{sec:times}, we demonstrated that both reliability and erasing times are non-monotonic in friction, with short erasing times favoured by moderate values of friction, and long reliability times favoured by extreme values. In what follows, we give a precise quantification of the resultant trade-off in finding the friction of an optimal bit. The analysis is significantly simplified for our family of controls, in which work is independent of the friction coefficient. \\

Let us introduce the following terms. Fix an $A$ and $F$. Then,
\begin{enumerate}
\item $\gamma^e_{\rm crit}$ is the friction coefficient that minimizes erasing time as a function of friction coefficient $\gamma$ at fixed $A$ and $F$, i.e., for all $\gamma'\in\mathbb{R}_{>0}$, we have:
\begin{align}
\tau_e(A,F,\gamma^e_{\rm crit}) \leq \tau_e(A,F,\gamma'). 
\end{align}
We call the design $(A,F,\gamma^e_{\rm crit})$ \textbf{critically damped}.
\item $\gamma^r_{\rm crit}$ is the friction coefficient that minimizes reliability time as a function of friction coefficient $\gamma$ at fixed $A$ and $F$, i.e., for all $\gamma'\in\mathbb{R}_{>0}$, we have:
\begin{align}
\tau_r(A,F,\gamma^r_{\rm crit})\leq\tau_r(A,F,\gamma'). 
\end{align}
\end{enumerate}

It is easy to note that trapped bits are also critically damped. In Figure~\ref{fig:optimal_friction} we show illustrative curves of the erasing and reliability times as a function of friction coefficient $\gamma$ at fixed $A,F$. These curves have single minima at $\gamma^e_{\rm crit}$ and $\gamma^r_{\rm crit}$, respectively. Also shown on this graphs are regions of friction space that can be eliminated from consideration for optimal bits. To eliminate extreme values of friction, we note that the design must have a minimal finite $A$ to be a well-defined two-state system in the resting state. For our bit, it is $A_{\rm min} \approx 3$. In the next claim we precisely describe which regions of friction can be eliminated.

\begin{figure}[h!]
\begin{center}
\[
\includegraphics[scale=0.55]{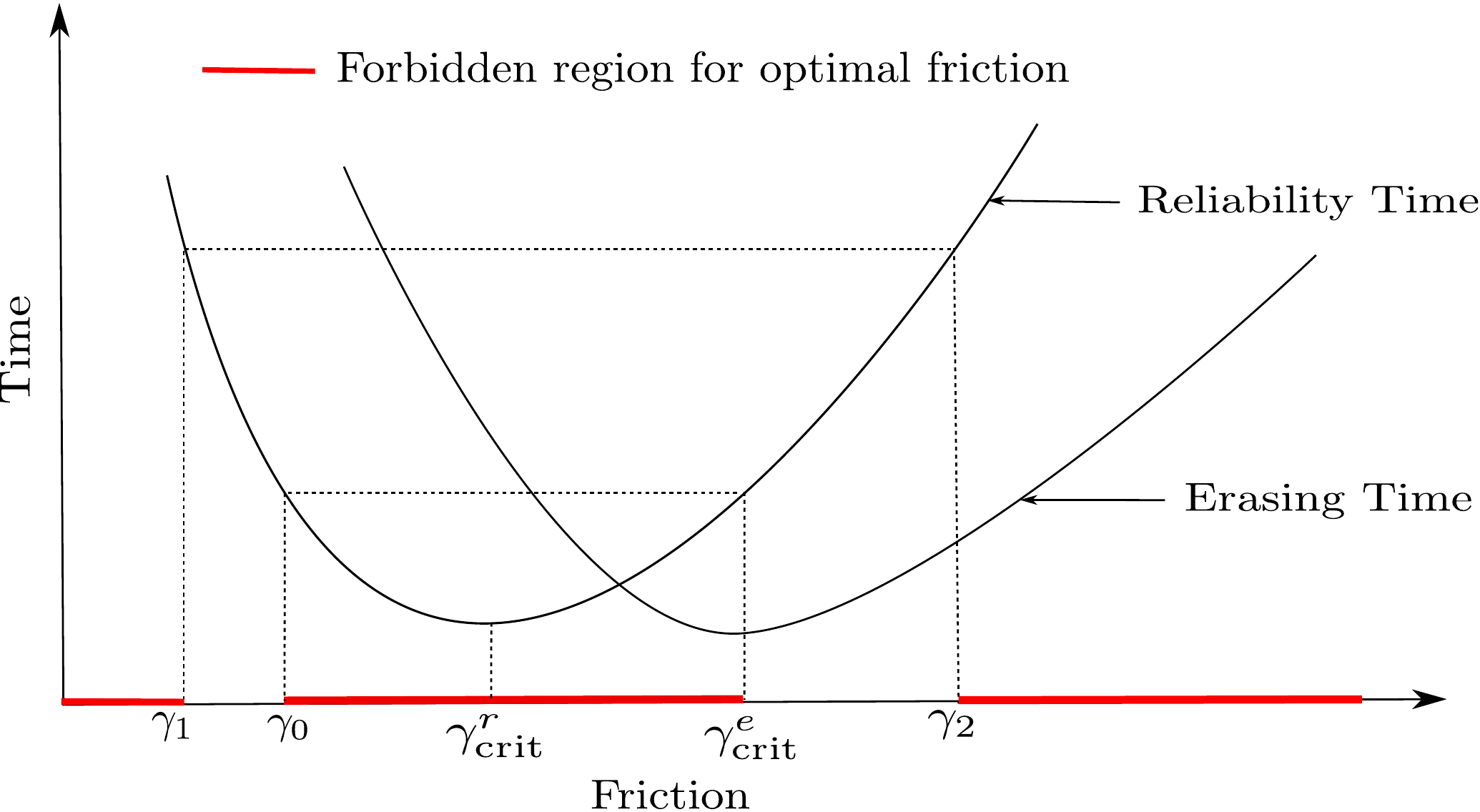}%182mm
\]
\end{center}
\caption{Regions of friction-space can be eliminated from the search for optimal bits for our class of controls. As a result, the optimal friction is either critical damping, or lies somewhere within two regions of moderate friction. Illustrative curves of $\tau_e$ and $\tau_r$ at fixed $A,F$ indicate these regions.}
\label{fig:optimal_friction}
\end{figure}

\begin{claim}\label{claim4}[Forbidden regions for optimal friction]
Assume that both $\tau_e$ and $\tau_r$ have a single-well-defined minimum and tend to infinity as $\gamma$ tends to zero or infinity. Let $(A,F,\gamma)$ be a $(t_r,t_e)$-optimal design. (Refer to Figure~\ref{fig:optimal_friction} for notational convenience)
\begin{enumerate}
\item Let $\gamma_0$ be such that $\tau_r(A,F,\gamma_0) = \tau_r(A,F,\gamma^e_{\rm crit})$. 
\begin{enumerate}
\item If $\gamma^e_{\rm crit} > \gamma^r_{\rm crit}$, then $\gamma\notin\left(\gamma_0,\gamma^e_{\rm crit}\right)$.
\item If $\gamma^e_{\rm crit} < \gamma^r_{\rm crit}$, then $\gamma\notin\left(\gamma^e_{\rm crit},\gamma_0\right)$.
\end{enumerate}
i.e. the friction of the optimal bit does not reside in the central red region in Figure~\ref{fig:optimal_friction}.
\item Let $A_{\rm min}$ be the minimum height for a bit to be meaningfully bistable and let $\gamma_1<\gamma_2$ be such that $\tau_r(A_{\rm min},F,\gamma_1)=\tau_r(A_{\rm min},F,\gamma_2)=t_r$. If $(A,F,\gamma)$ is not locally trapped, then $\gamma\notin(0,\gamma_1)\cup(\gamma_2,\infty)$.

i.e. the friction of the optimal bit does not arise from the extreme red regions in Figure~\ref{fig:optimal_friction}.

\end{enumerate}
\end{claim}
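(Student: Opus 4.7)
The plan is to prove both parts by contradiction. For Part 1 I will directly swap the friction parameter, exploiting the fact that $W$ is independent of $\gamma$ for our control family. For Part 2 I will combine Claim~\ref{lem:improvee} with the monotonicity of $\tau_r$ in the barrier height $A$.

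For Part 1, suppose $(A,F,\gamma)$ is $(t_r,t_e)$-optimal with $\gamma$ lying strictly inside the forbidden interval (say $\gamma\in(\gamma_0,\gamma^e_{\rm crit})$ in the case $\gamma^e_{\rm crit}>\gamma^r_{\rm crit}$; the opposite case is symmetric). I will compare this design to $(A,F,\gamma^e_{\rm crit})$. Since $W$ is $\gamma$-independent, the two designs have identical work. The strict minimality of $\gamma^e_{\rm crit}$ for $\tau_e(A,F,\cdot)$ gives $\tau_e(A,F,\gamma^e_{\rm crit})<\tau_e(A,F,\gamma)\leq t_e$, so the erasing constraint becomes strictly slack. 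For reliability, the U-shape of $\tau_r(A,F,\cdot)$ with its unique minimum at $\gamma^r_{\rm crit}\in(\gamma_0,\gamma^e_{\rm crit})$, together with the defining equality $\tau_r(A,F,\gamma_0)=\tau_r(A,F,\gamma^e_{\rm crit})$, forces $\tau_r(A,F,\gamma)<\tau_r(A,F,\gamma^e_{\rm crit})$, so reliability is strictly slack at $\gamma^e_{\rm crit}$ as well. Since both constraints have strict slack and $W$ is monotone increasing in $F$ for our control family, perturbing $F$ downward by a small amount will produce a feasible design of strictly lower work: $\tau_r$ is unchanged because it does not depend on $F$, and by continuity $\tau_e$ remains below $t_e$. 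This contradicts optimality.

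For Part 2, assume $(A,F,\gamma)$ is optimal and not locally trapped, so Claim~\ref{lem:improvee}.\ref{reliability_tight} gives the saturation $\tau_r(A,F,\gamma)=t_r$. Consider the subcase $\gamma\in(0,\gamma_1)$; the subcase $\gamma>\gamma_2$ is identical. Because $\gamma_1<\gamma_2$ are the two roots of $\tau_r(A_{\rm min},F,\cdot)=t_r$ and this function is U-shaped in $\gamma$, we have $\tau_r(A_{\rm min},F,\gamma)>t_r$ for every $\gamma\in(0,\gamma_1)\cup(\gamma_2,\infty)$. Using the monotonicity of $\tau_r$ in $A$ at fixed $F,\gamma$ (an immediate consequence of the exponential $A$-dependence in Eq.~\ref{eq:meshkov_melnikov}) together with $A\geq A_{\rm min}$, I obtain $\tau_r(A,F,\gamma)\geq\tau_r(A_{\rm min},F,\gamma)>t_r$, contradicting saturation.

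The main obstacles are not conceptual but concern two care-points. First, the hypothesis that $\tau_e$ and $\tau_r$ each have a \emph{single well-defined minimum} must be read as strict monotonicity on either side of the minimum, since the step from $\gamma$ to $\gamma^e_{\rm crit}$ in Part 1 relies on strict inequalities; this strict slack is exactly what enables the subsequent $F$-perturbation. Second, the monotonicity of $\tau_r$ in $A$ invoked in Part 2 is not explicitly listed as a hypothesis of the claim and should be either added or argued directly from the exponential $A$-dependence in Eq.~\ref{eq:meshkov_melnikov}.
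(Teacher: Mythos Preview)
Your proof is correct and follows the same overall structure as the paper's. Part~2 is essentially identical to the paper's argument: both invoke Claim~\ref{lem:improvee}.\ref{reliability_tight} to force $\tau_r(A,F,\gamma)=t_r$, then use $A\geq A_{\rm min}$ and the monotonicity of $\tau_r$ in both $A$ and $|\gamma-\gamma^r_{\rm crit}|$ to derive the contradiction $\tau_r(A,F,\gamma)>t_r$.

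In Part~1 there is a minor but genuine difference in the choice of comparison design. The paper reflects $\gamma$ through the reliability minimum to obtain a $\gamma'$ with $\tau_r(A,F,\gamma')=\tau_r(A,F,\gamma)$ but strictly smaller $\tau_e$, and then invokes Claim~\ref{lem:improvee}.\ref{erasing_tight}. You instead jump directly to $\gamma^e_{\rm crit}$, where both constraints become strictly slack, and then perturb $F$ downward to contradict optimality. Your route is arguably cleaner: the comparison to $\gamma^e_{\rm crit}$ works uniformly across the whole forbidden interval without needing to track which side of $\gamma^r_{\rm crit}$ the point $\gamma$ lies on. Note, however, that your final $F$-perturbation step is exactly the content of the proof of Claim~\ref{lem:improvee}.\ref{erasing_tight}; you could shorten the argument by observing that $(A,F,\gamma^e_{\rm crit})$ is feasible with the same work as $(A,F,\gamma)$, hence also optimal, yet has $\tau_e<t_e$, and then cite Claim~\ref{lem:improvee}.\ref{erasing_tight} directly. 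Your two ``care-points'' are apt and match implicit assumptions the paper also relies on.
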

\begin{proof}
\begin{enumerate}
\item We prove it for the case when $\gamma^e_{\rm crit} > \gamma^r_{\rm crit}$, the other case proceeds in identical fashion. For contradiction, assume that $\gamma\in(\gamma_0,\gamma^e_{\rm crit})$. Then, due to the single minima in both $\tau_e$ and $\tau_r$, and the fact that $\tau_r$ tends to infinity as $\gamma$ tends to zero or infinity, there exists a design $(A,F,\gamma^\prime)$ with $\gamma^\prime > \gamma_0$ and $\tau_r(A,F,\gamma^\prime) = \tau_r(A,F,\gamma) \geq t_r$, but $\tau_e(A,F,\gamma^\prime) < \tau_e(A,F,\gamma) \leq t_e$. The design $(A,F,\gamma^\prime)$ is $(t_r,t_e)$-optimal since it is $(t_r,t_e)$-feasible and has $W(A,F,\gamma^\prime)=W(A,F,\gamma)$, contradicting Lemma~\ref{lem:improvee}.~\ref{erasing_tight} that the optimal bit saturates the bound on the erasing time constraint.
\item For contradiction, suppose that $\gamma <\gamma_1$ or $\gamma >\gamma_2$. Then since $A \geq A_{\rm min}$ and the reliability time increases with well height and more extreme values of $\gamma$, either $\tau_r(A,F,\gamma) \geq \tau_r(A_{\rm min},F,\gamma) > \tau_r(A_{\rm min},F,\gamma_{1}) = t_r$ or $\tau_r(A,F,\gamma) \geq \tau_r(A_{\rm min},F,\gamma) > \tau_r(A_{\rm min},F,\gamma_{2})=t_r$, contradicting claim~\ref{lem:improvee}.~\ref{reliability_tight} that an optimal design that is not locally trapped saturates the bound on the reliability time constraint.
\end{enumerate}
\end{proof}

For clarity, let us assume initially that $\gamma^e_{\rm crit} > \gamma^r_{\rm crit}$ (equivalent arguments hold for the alternative). We see that optimal designs reside either at $\gamma^e_{\rm crit}$, or lie within two regions at moderate friction, as illustrated in Figure~\ref{fig:optimal_friction}. Interestingly, one region is adjacent to $\gamma^e_{\rm crit}$, whereas the other is not. It is not easy to see how designs in one region ($\gamma_1 \leq \gamma \leq  \gamma_0$) as in Figure~\ref{fig:optimal_friction} can outperform those in the other region ($\gamma^e_{\rm crit} < \gamma \leq \gamma_2 )$. Indeed, when we performed numerical optimisation on the regression-based fits to our simulation data, we only observed optimal bits that are either critically damped or lie in the allowed region adjacent to critical damping. This is illustrated in Figure~\ref{fig:critical_friction}, where we plot the optimal friction as a function of erasing time requirement at fixed reliability time requirement, for two values of reliability time requirements. We also plot $\gamma^e_{\rm crit}$ and $\gamma^r_{\rm crit}$ for comparison. At low erasing time requirements, designs reside at $\gamma^e_{\rm crit}$. At slightly higher erasing time requirements, the designs become saturated and the optimal friction lies adjacent to $\gamma^e_{\rm crit}$ in the region $\gamma^e_{\rm crit} < \gamma \leq \gamma_2$. Eventually, $\gamma^e_{\rm crit}$ crosses $\gamma^r_{\rm crit}$. At the crossing point, we have $\gamma = \gamma^e_{\rm crit}=\gamma^r_{\rm crit}$. At higher values of erasing time requirements, $\gamma$ still occupies the region adjacent to $\gamma^e_{\rm crit}$, which is now $\gamma_1 \leq \gamma \leq\gamma^e_{\rm crit}< \gamma^r_{\rm crit}$.

\begin{figure}[h!]
\begin{minipage}{0.5\textwidth}
  \centering
  \includegraphics[scale=0.17]{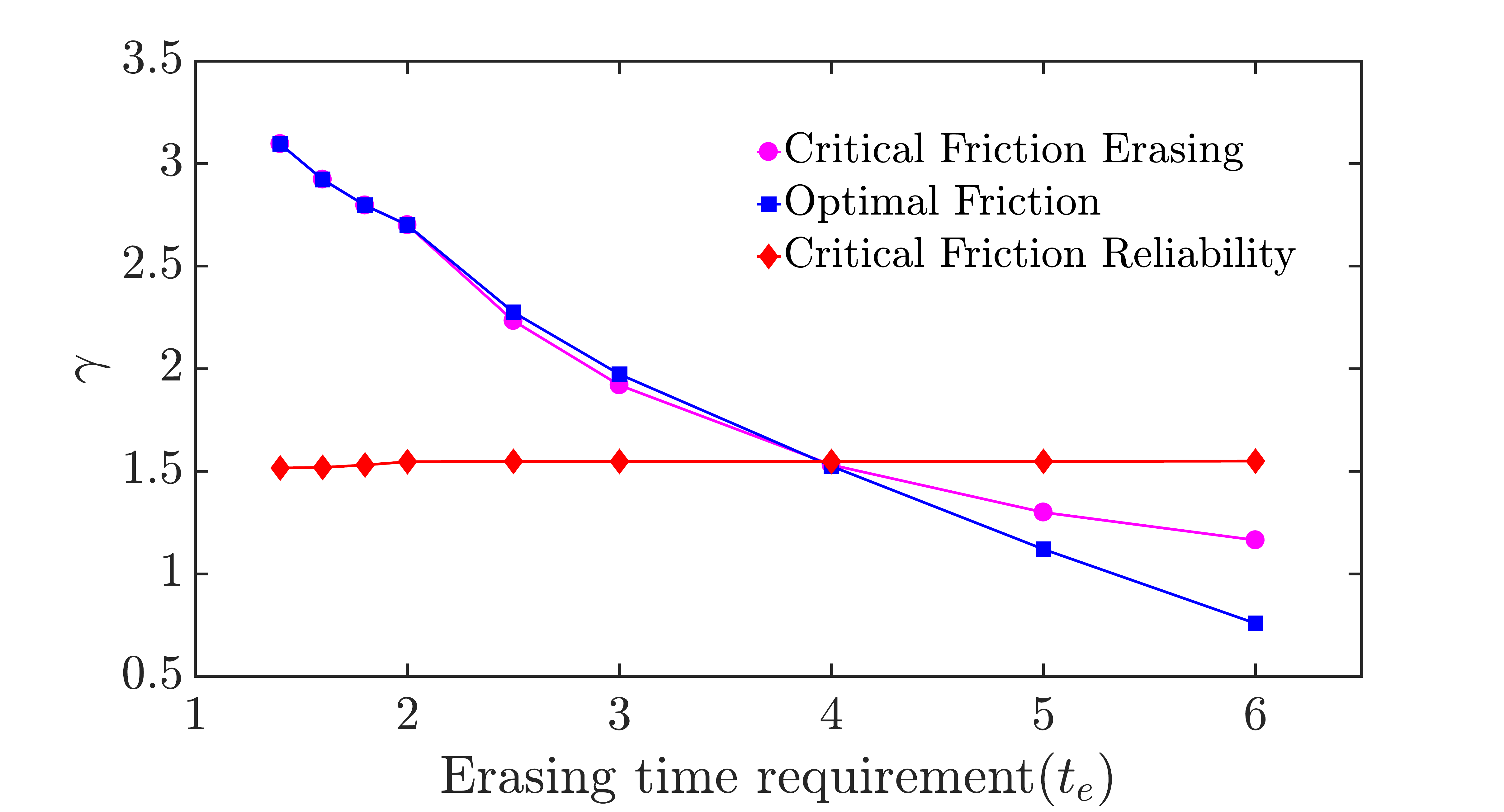}
  \subcaption{Reliability time requirement$(t_r)=500$}
\end{minipage}
\begin{minipage}{0.5\textwidth}
  \centering
  \includegraphics[scale=0.17]{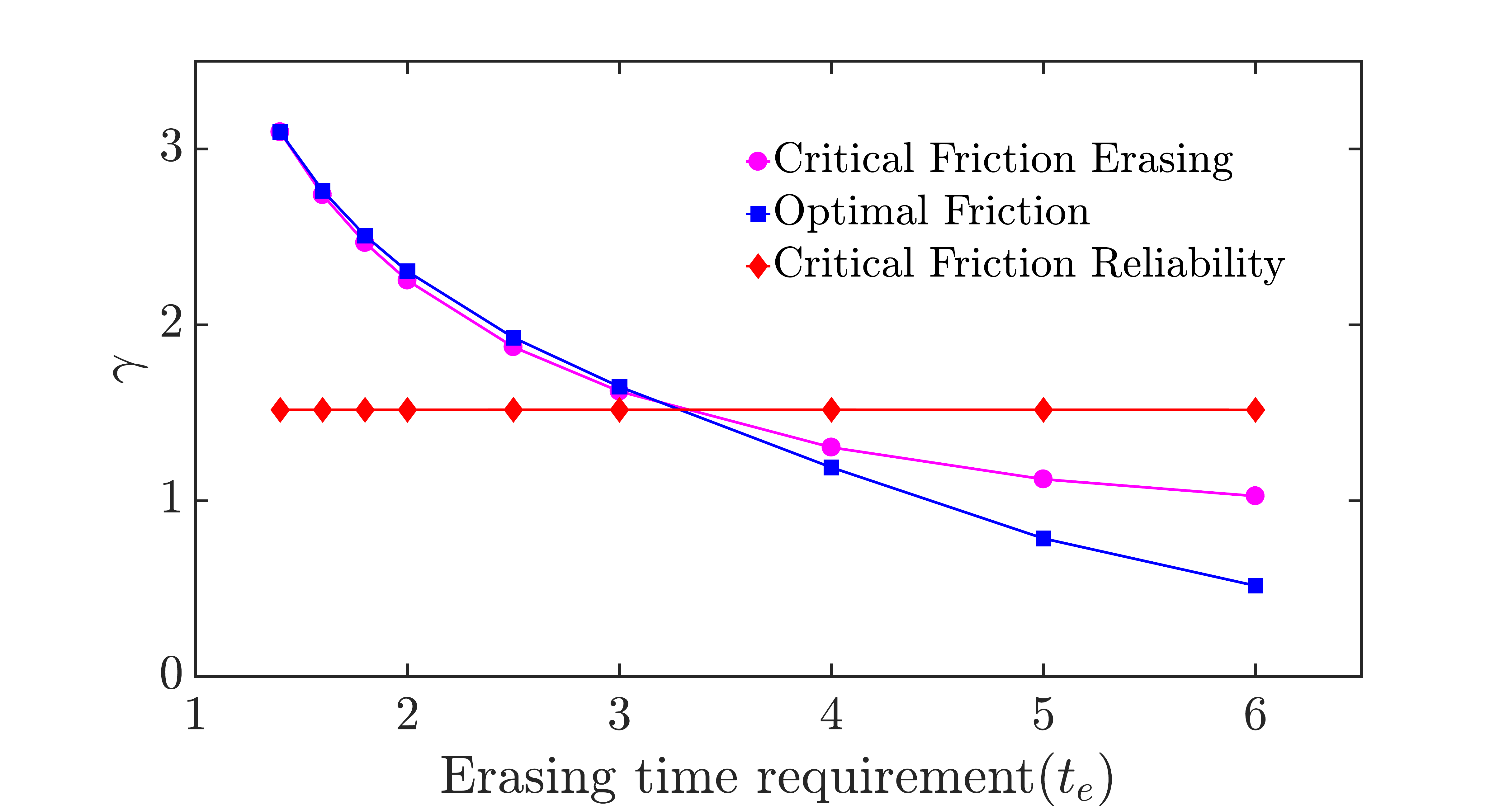}
  \subcaption{Reliability time requirement$(t_r)=10000$}
\end{minipage}
\caption{Optimal friction is either critical damping, or lies within a small region adjacent to critical damping, for our family of controls. We plot friction for optimal designs $(A,F,\gamma)$ against erasing time requirements $(t_e)$ for a fixed value of reliability time requirement $(t_r)$, alongside $\gamma^e_{\rm crit}$ and $\gamma^r_{\rm crit}$. Note that $A$ and $F$ are not fixed, but determined by the optimisation procedure alongside the optimal friction for each requirement $(t_r, t_e)$. The data was obtained from numerical optimisation and minimisation based on regression fits to simulation data.} 
\label{fig:critical_friction}
\end{figure}

\section{Conclusions}\label{conclusion}
We have explored the question of the design of optimal bits. Previously, authors have focused on designing optimal protocols that minimize work input when implementing a finite-time operation on a given system~\cite{Schmeidl2007,Then2008,Aurell2011,zulkowski2014optimal,Gingrich2016}.  
 Our approach differs in considering that bits need to have two distinct functionalities: retain data for long periods of time and allow rapid switching or erasing. Moreover we consider optimising over system parameters such as the intrinsic friction as well as the external control. Our fundamental observation is that friction plays a non-trivial role in the design of bits. Both switching/erasing and the eventual degradation of data involve relaxation towards equilibrium from a non-equilibrium distribution. This process is fastest at intermediate values of the friction, but slow in the overdamped and underdamped regimes. The best bit designs have high reliability times and low switching/erasing times, which implies an inherent trade-off in bit design between extreme values of friction that favour high reliability, and moderate values of friction that favour rapid switching or erasing. 

We have explored the consequences of the biphasic role of friction for a simple class of controls. The existence of non-trivial minima of erasing time in the level set of work leads to the generation of trapped designs. These designs are optimal for reliability requirements smaller than their own reliability time leading to unsaturated requirements. The result of the trade-off between extreme values of friction that maximize reliability time and moderate values of friction that minimise erasing times is that optimal designs are either critically damped or occupy a region of moderate friction close to critical damping. 

Our work opens up a new perspective on the design of efficient computational devices showing that: \emph{the best designs are likely to be neither underdamped nor overdamped}. This observation is particularly important as some authors have considered friction to be inherently problematic for computation~\cite{anacker1980josephson,buttiker1983thermal,klein1982thermal,likharev1982classical}. Equally, the role of friction is suppressed when bits are modelled as discrete two-state systems~\cite{landauer1961irreversibility,Ouldridge2015,gopalkrishnan2014cost}, since this approximation assumes rapid equilibration within the discrete states. 

We have only considered a simple family of controls to motivate our analysis and illustrate our findings. This family is not optimal - it was chosen for it's simplicity and ease of analysis. Moreover, there is some arbitrariness in the definition of both the erasing and reliability times. As such, the numerical details of the results obtained are not very important. We are not claiming to have derived numerical corrections to the minimal cost of erasing a bit, for example, or the specific work costs (substantially larger than $k_BT \ln 2$) which are not that informative. Rather, it is the qualitative results, which hold for a much broader class of controls which are important. The non-monotonic role of friction in both the erasing and reliability time-scales is a generic physical phenomenon that extends beyond the details of our implementation, and implies a competition between the goals of fast manipulation and long reliability times. Relatively weak assumptions -- that it is always possible to decrease work at fixed reliability time and that the minimal erasing time decreases with increased work imply that erasing time requirements are always saturated by optimal bits and that trapped designs lead to unsaturated reliability time requirements respectively. Other results rely more on the simplicity of the control family: the existence of only one local minimum of erasing time at fixed work simplifies the question of whether a requirement specification is saturated. The fact that work is independent of friction simplifies the task of eliminating certain values of friction as sub-optimal.   

Explicit exploration of a broader class of controls, including those with more complex variation over time, and varying parameters such as particle mass and distance between wells, are possible directions for future work. It is not immediately clear whether minima in erasing time at fixed work cost will become more or less prominent features of the optimisation landscape when the complexity of the system is increased, for example. In particular, raising or lowering the barrier between metastable states is a common idea~\cite{Ouldridge2015,zulkowski2014optimal,berut2012experimental,precision_feedback}. Lowering the barrier during erasing potentially allows for faster erasing at fixed reliability time and lower work cost. If said barriers could be raised and lowered arbitrarily far and quickly, it may be possible to circumvent any conflict between high reliability and low erasure time. However, real physical systems are not generally this flexible. Indeed, in order to apply a complex time-dependent control to a small colloid, experimenters typically use optical feedback traps ~\cite{berut2012experimental,precision_feedback}, which are not true potentials and rely on the continuous input of energy to apply forces and perform feedback control. For true physical protocols that permit finite raising and lowering of barriers between metastable states, we expect that our findings would still apply to a family of protocols with optimal barrier manipulation. An alternative direction would be to consider similar effects in systems with inherently quantum mechanical behaviour. 

\section{Data accessibility}
The codes and computational data for this manuscript can be downloaded from \url{https://www.imperial.ac.uk/principles-of-biomolecular-systems/contact--obtain-code-and-data/}

\section{Competing interests}
The authors declare no competing interests.

\section{Authors' contributions}
MG conceived the project. AD performed the calculations. AD, MG, TEO and NSJ planned the research, analysed the results and wrote the paper.

\section{Acknowledgement}
We would like to acknowledge Grigoris Pavliotis, Jure Vogrinc, Leonard Adleman, Rahul Dandekar, Tridib Sadhu, Girish Varma, Sanjoy Mitter, Charles Bennett, Deepak Dhar, Vivek Borkar, Venkat Anantharam, David Soloveichik and Aditya Raghavan for useful discussions. 

\section{Funding}
AD is supported by the ROTH scholarship of the Department of Mathematics, Imperial College London. TEO is funded by a University Research Fellowship from the Royal Society.

\bibliographystyle{unsrt}
\bibliography{Bibliography}

\appendix

\section{Appendix}

\subsection{Validating the timestep of the integrator}\label{timestep_validation}

We validate the accuracy of our Langevin integrator by considering the dependence of thermodynamic expectations on the time step. We calculate the average potential and kinetic energies for a particle in a quadratic potential $W_{A,B}=A\left(\frac{x}{B}-1\right)^2$, a quadratic proxy for a single well of the quartic resting-state potential. We plot the results in Figure~\ref{fig:potential_kinetic_timestep} for a few representative values of the friction coefficient $\gamma = [0.1,1.10,100]$ and $A=10$. Each result is based on an average from $10$ simulations each of $5\times 10^8$ time steps. As is evident from the figure, a time step of $0.001$ gives good convergence to the equipartition limit of $k_BT/2$.

\begin{figure}[h!]
\begin{minipage}{.5\linewidth}
\begin{subfigure}{.5\textwidth}
\includegraphics[width=\linewidth]{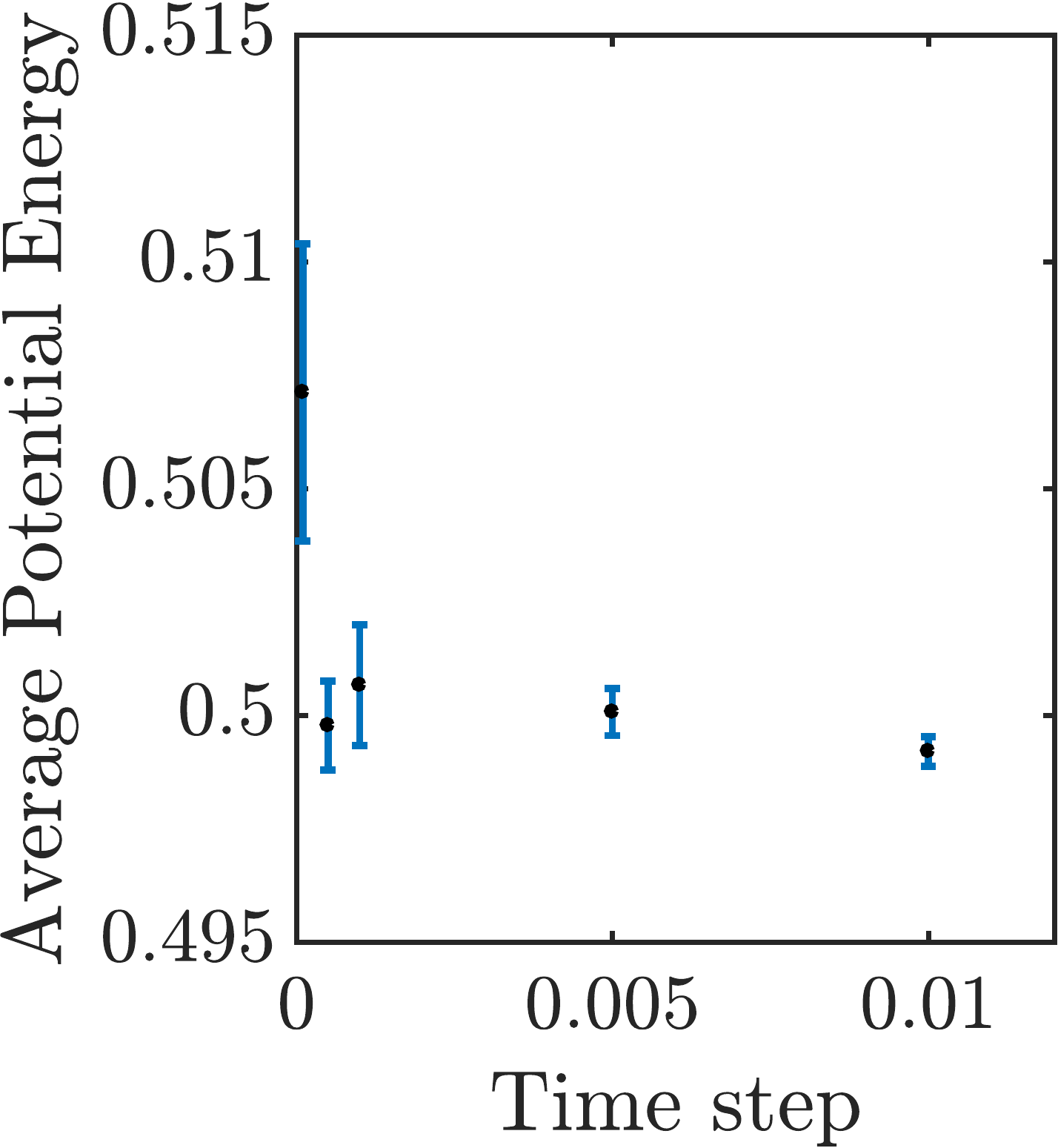}
%\caption{1a}
\end{subfigure}%
\begin{subfigure}{.5\textwidth}
\includegraphics[width=\linewidth]{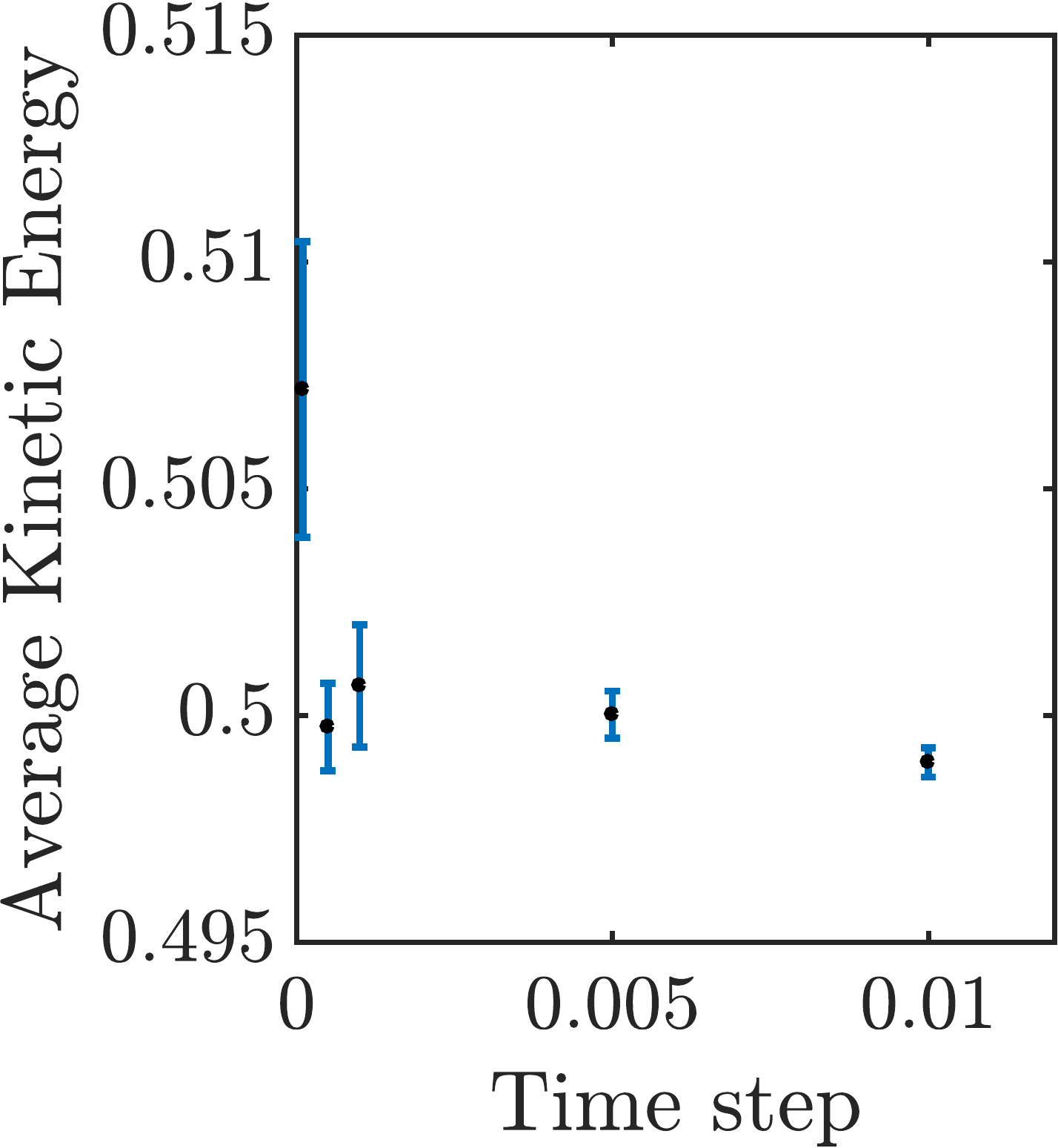}
%\caption{1b}
\end{subfigure}
\subcaption{$\gamma=0.1$}
\end{minipage}
\begin{minipage}{.5\linewidth}
\begin{subfigure}{.5\textwidth}
\includegraphics[width=\linewidth]{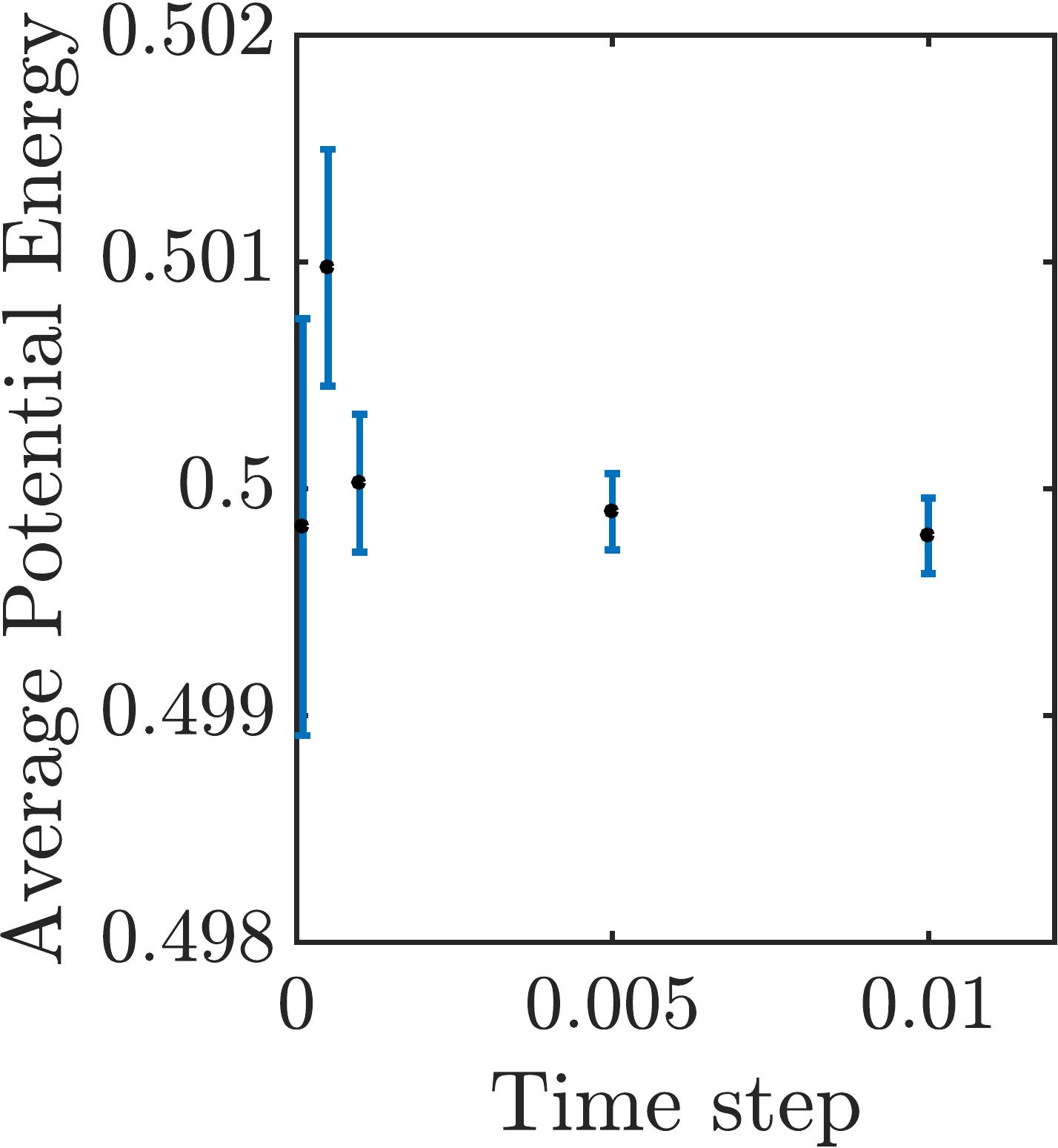}
\end{subfigure}%
\begin{subfigure}{.5\textwidth}
\includegraphics[width=\linewidth]{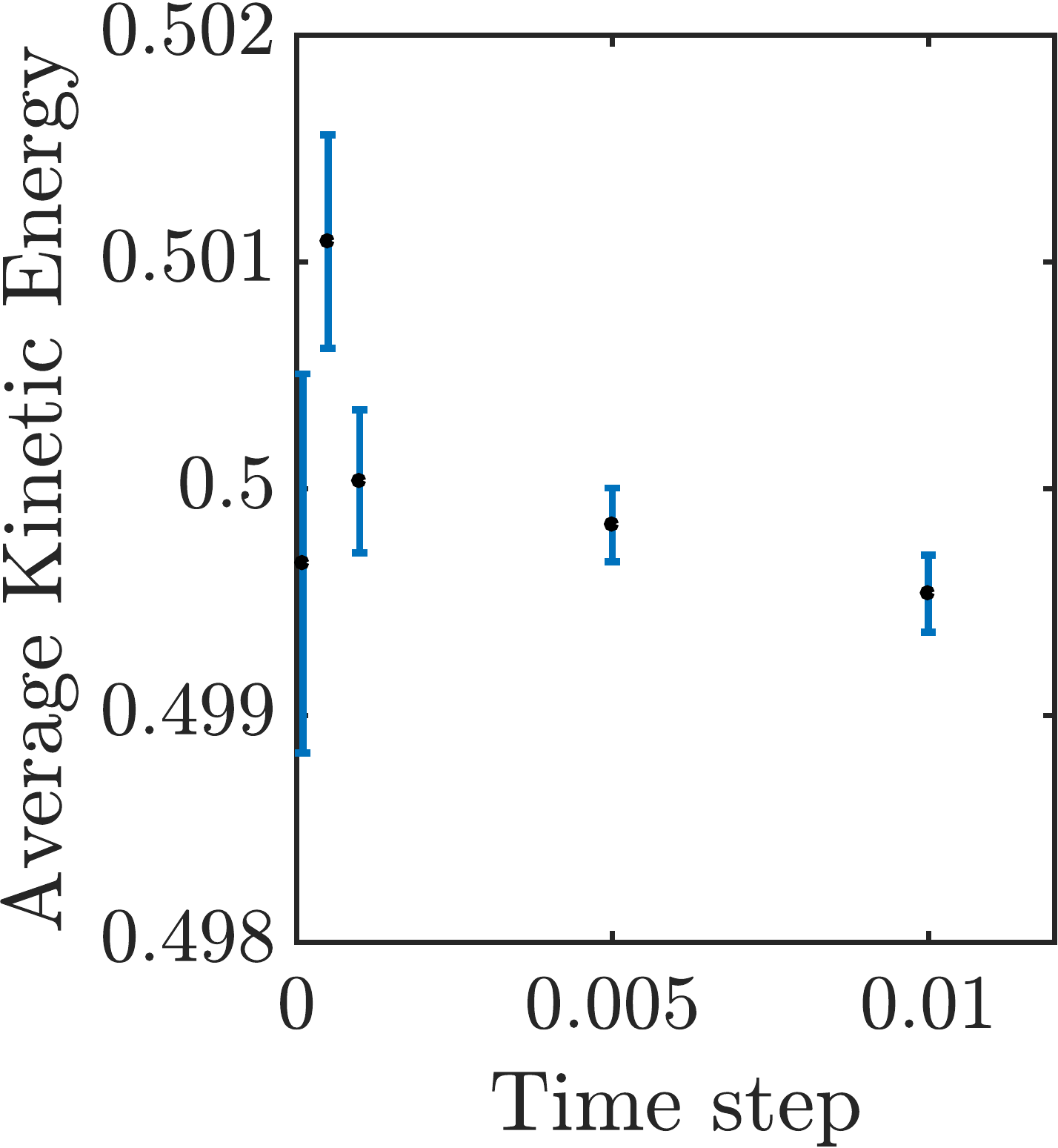}
\end{subfigure}
\subcaption{$\gamma=1$}
\end{minipage}

\begin{minipage}{.5\linewidth}
\begin{subfigure}{.5\textwidth}
\includegraphics[width=\linewidth]{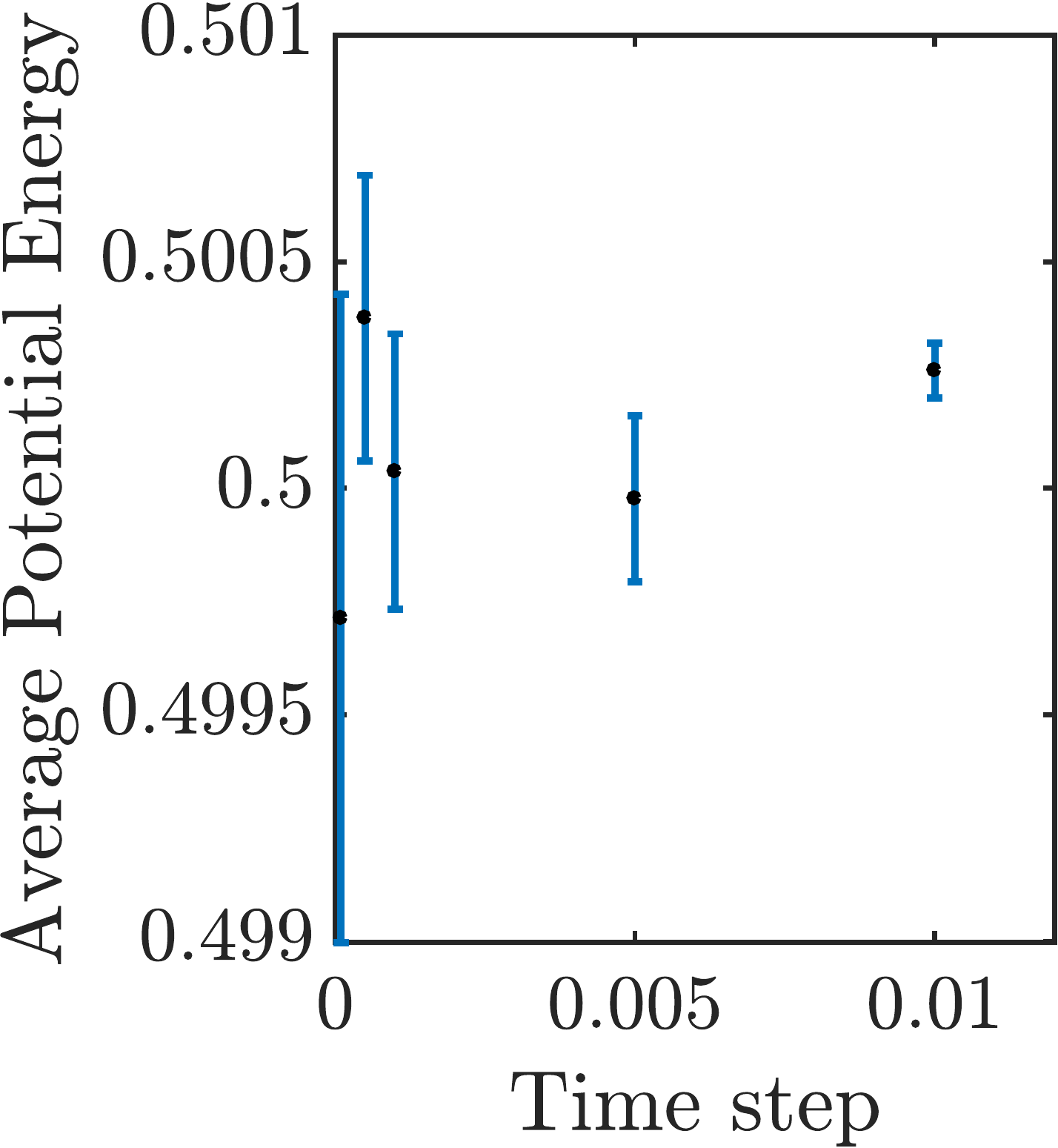}
%\caption{3a}
\end{subfigure}%
\begin{subfigure}{.5\textwidth}
\includegraphics[width=\linewidth]{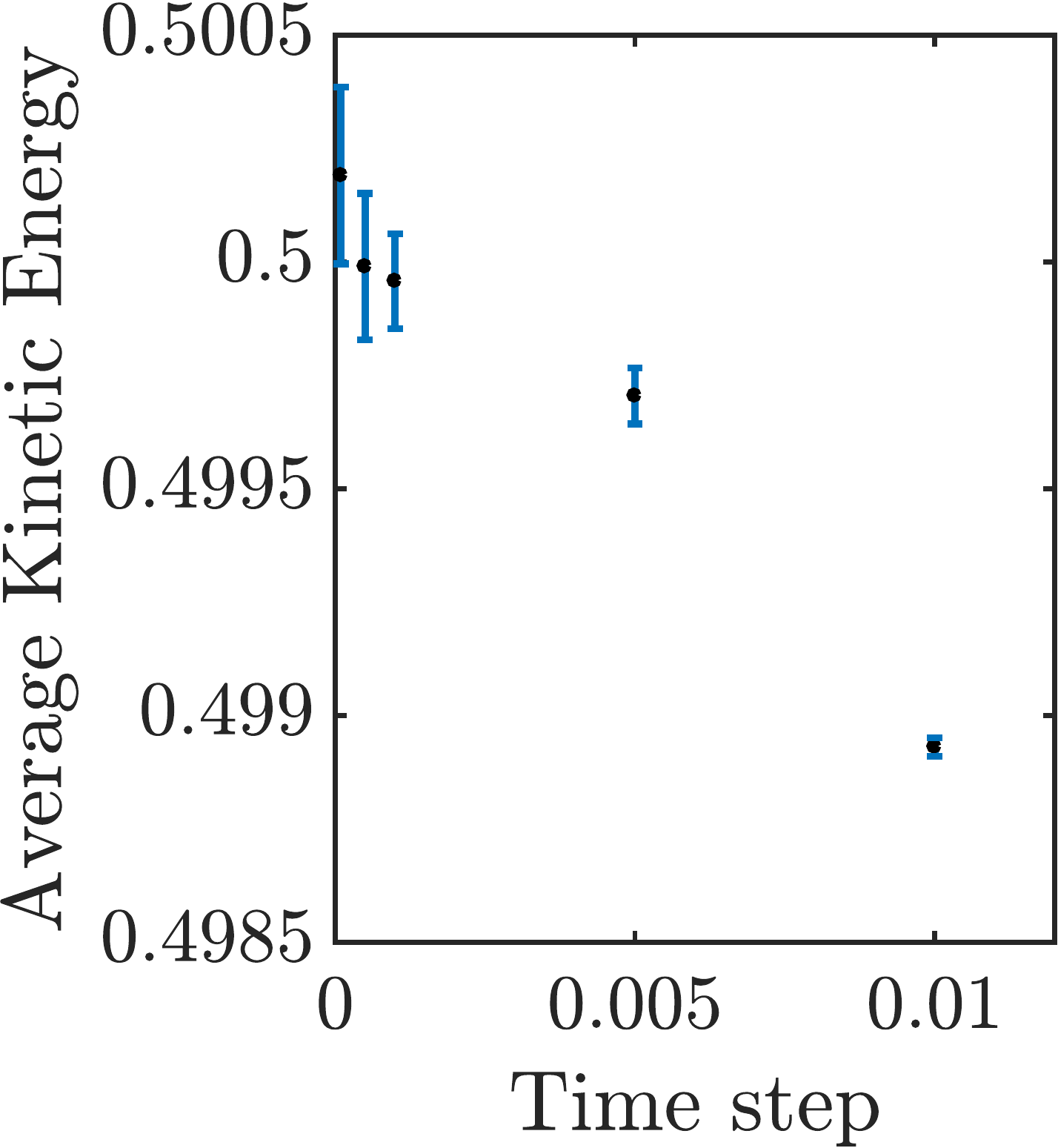}
%\caption{3b}
\end{subfigure}
\subcaption{$\gamma=10$}
\end{minipage}
\begin{minipage}{.5\linewidth}
\begin{subfigure}{.5\textwidth}
\includegraphics[width=\linewidth]{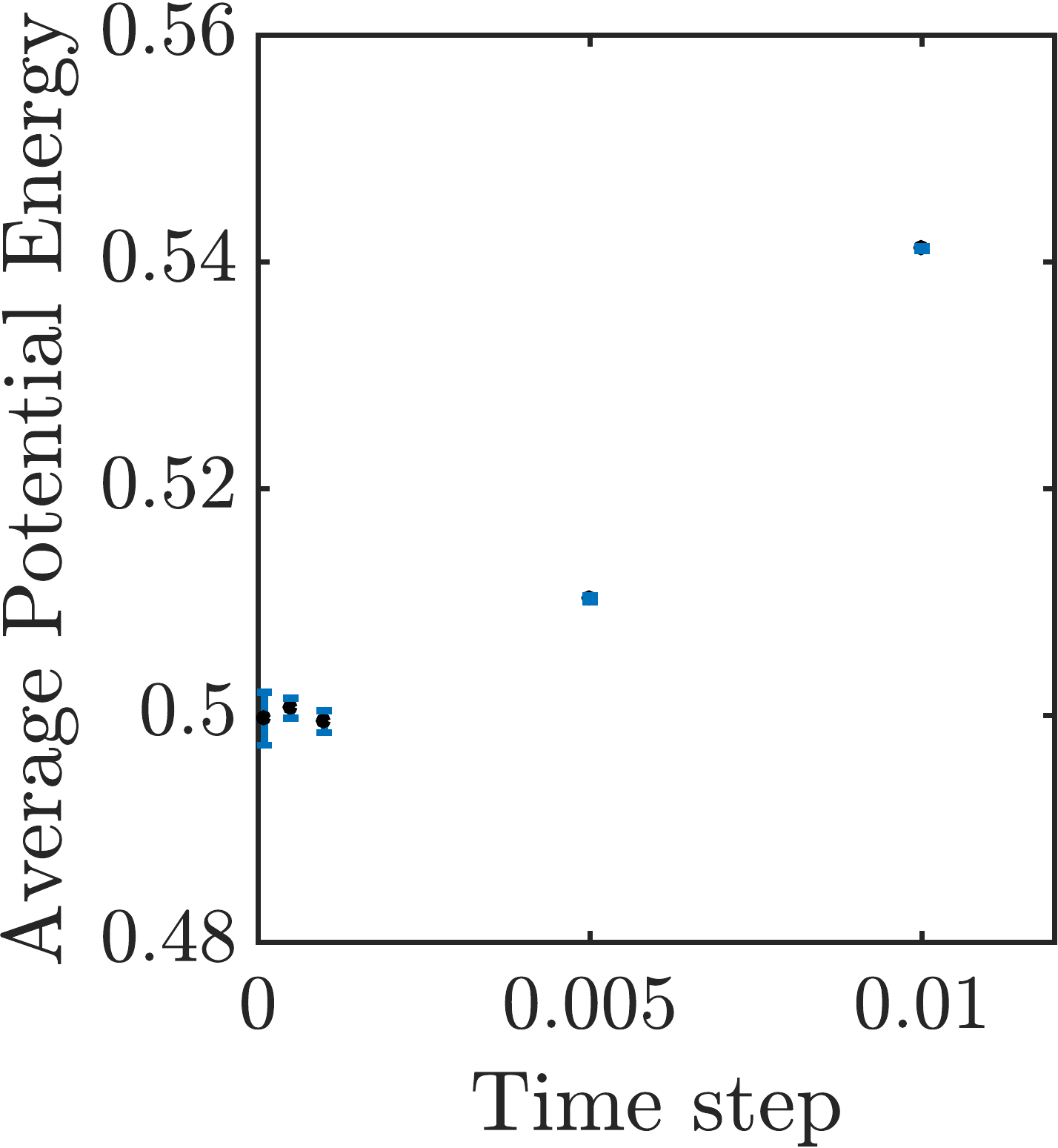}
%\caption{4a}
\end{subfigure}%
\begin{subfigure}{.5\textwidth}
\includegraphics[width=\linewidth]{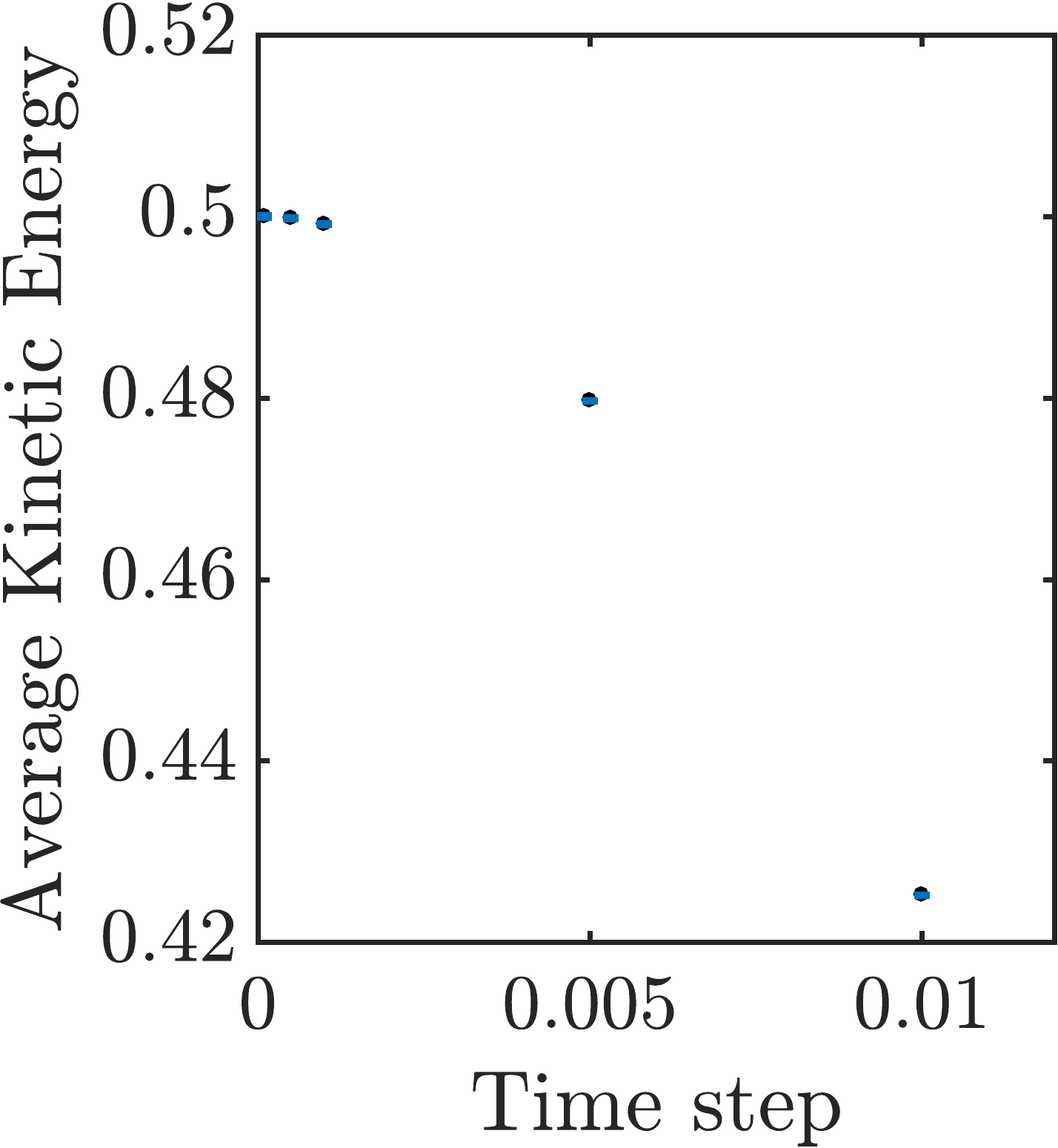}
%\caption{4b}
\end{subfigure}
\subcaption{$\gamma=100$}
\end{minipage}
\caption{A time step of $0.001$ is good enough to ensure that the average potential and kinetic energies approaches $\frac{k_BT}{2}=0.5$ for a wide range of friction values.}
\label{fig:potential_kinetic_timestep}
\end{figure}

However, it is not sufficient to just compare the average kinetic and potential energies to the equipartition limit. We need to ensure that the observed kinetics are robust to our choice of time step. In particular, we need to test that a time step of 0.001 is sufficient for the highest values of our control parameter $F$, which presents the most severe challenge to integrating our Langevin equation (due to the behaviour near $x=0$). Figure~\ref{fig:largest_F} confirms that a time step of 0.001 is appropriate for $F=100$ and the full range of $\gamma$ tested. Each value in the figure is an average over $1000$ initial conditions.

\begin{figure}[h!]
\begin{center}
\[
\includegraphics[scale=0.21]{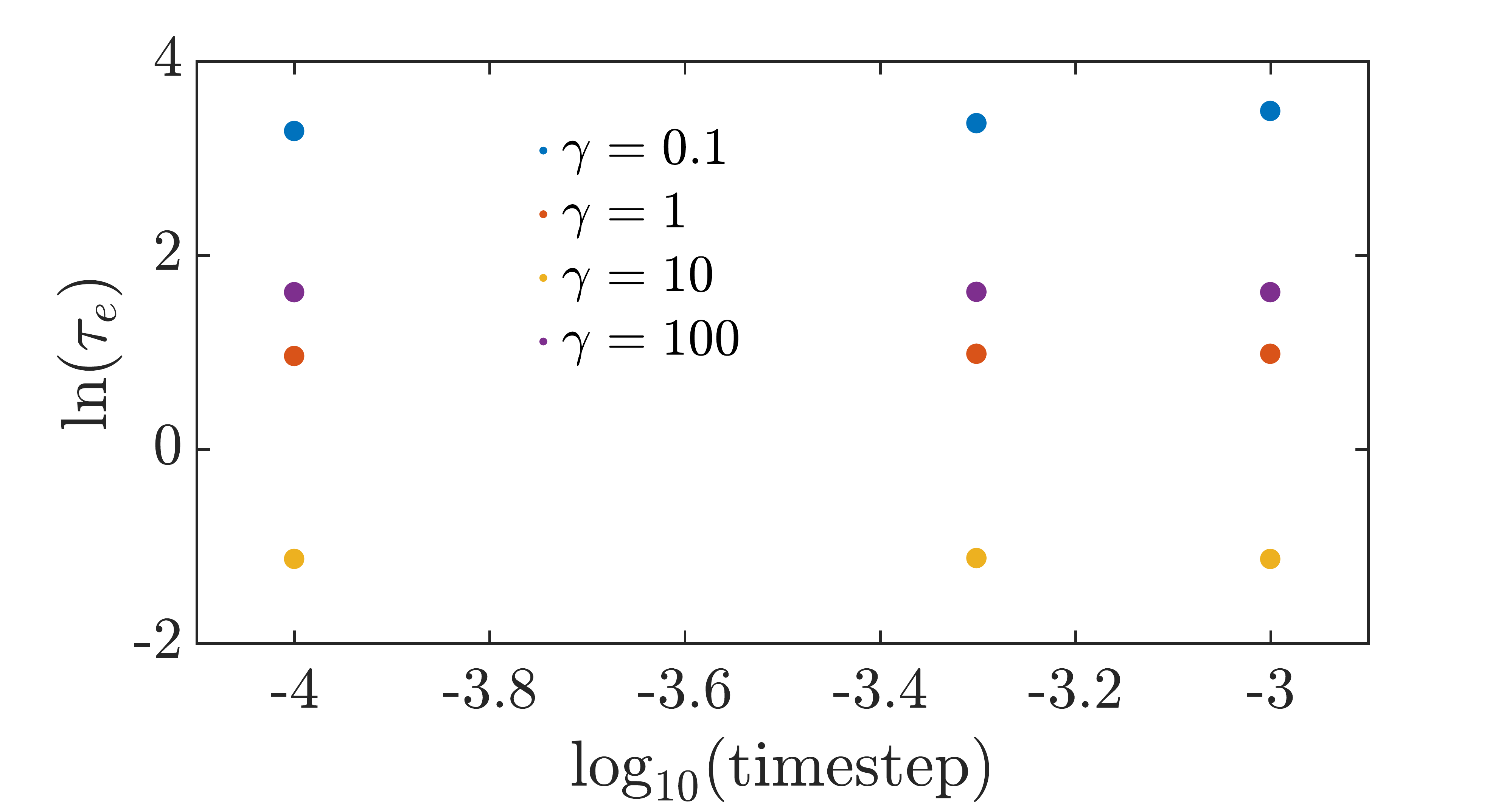}
\]
\end{center}
\caption{A time step of $0.001$ gives reasonable values of erasing time for the largest value of the control parameter $\left(F=100\right)$ that we use in the simulations.}
\label{fig:largest_F}
\end{figure}

\subsection{Erasure region}\label{erasure_region}

In this section, we demonstrate that our results are not limited to our specific definition of the erasure region by considering two alternative criteria for erasing and confirming that our earlier conclusions are supported. 

\subsubsection{Accuracy of erasure: Convergence of probability distribution}\label{subsec:accuracy}

Recall that the erasing time is a sum of transport time $(\tau_t)$ and the mixing time $(\tau_m)$. Since the transport time is independent of the metric used to measure the mixing time, we analyse a proxy for the mixing process in isolation. Specifically, we consider the relaxation of a particle in a harmonic well, initially prepared in an arbitrary non-equilibrium distribution. As an alternative definition of mixing, we consider $\tau_m^\epsilon$ as the first time when the probability distribution of the particle comes within a certain distance (in the appropriate norm, and relative to its initial distribution) of the Gibbs distribution corresponding to the well. More specifically,

\begin{eqnarray*}
\tau^\epsilon_m = \inf_{t\geq 0}\{||\text{law}((x(t),p(t)) -\pi_0(x,p)||_{L^2(\pi_0(x,p))} \leq \epsilon||\text{law}((x(0),p(0)) - \pi_0(x,p)||_{L^2(\pi_0(x,p))}\}
\end{eqnarray*} 
where $(x(t),p(t))$ is the solution to Equation~\ref{eq:control} given appropriate initial conditions. Here, $\pi_0(x,p)$ is the stationary distribution of the harmonic well. As is usual, the weighted norm $\textit{L}^2(\pi_0(x,p)) := \{f|\int_{-\infty}^\infty \int_{-\infty}^\infty|f|^2\pi_0(x,p)dxdp < \infty\}$. Informally, $\tau^\epsilon_e$ is the time required for the distribution to be a factor $\epsilon \ll 1$ ``closer" to the equilibrium distribution than in the initial condition. \\

We define $\eta = 1-\epsilon$ as the \textbf{accuracy} of erasure. Lesser the $\epsilon$, the closer the distribution of the particle is to the Gibbs distribution of the harmonic well and hence more accurate the erasure.\\

Consider the modified Langevin equation

\begin{eqnarray}\label{eq:langevin}
\begin{aligned}
m\,dx = {}& p\,dt \\
\,dp = {}& -\gamma p\,dt - \partial_xN_{A,B}\left(x\right)\,dt + \sqrt{2m\gamma k_B T}\,dW
\end{aligned}
\end{eqnarray}

Here $N_{A,B}(x)=\frac{1}{2}m\omega_0^2\left(x-B\right)^2$ where $\omega_0=\sqrt\frac{8A}{mB^2}$ is the harmonic potential that approximates well ``0''. Equation\ref{eq:langevin} has the generator~\cite[pp.~182]{pavliotis2014stochastic} given by 
\begin{align}\label{eq:generator}
\mathcal{L} = \frac{p}{m}\partial_x - \left(\partial_xN_{A,B}(x)\right)\partial_p + \gamma\left(-p\partial_p + k_BT\partial^2_p\right)
\end{align}

It is common knowledge that the following equation is true~\cite{mattingly2002geometric}. 

\begin{eqnarray*}
||\text{law}((x(t),p(t)) -\pi_0(x,p)||_{L^2(\pi_0(x,p))} \leq e^{-\lambda t}||\text{law}((x(0),p(0)) - \pi_0(x,p)||_{L^2(\pi_0(x,p))}
\end{eqnarray*} 

where $\lambda$ is the first non-zero eigenvalue of the generator $\mathcal{L}$ given by Equation~\ref{eq:generator}. Setting $e^{-\lambda t}=\epsilon$, we get useful \textbf{upper bounds} on the mixing time. In particular, we get

\begin{eqnarray}
\tau^\epsilon_m\leq\frac{1}{\lambda}\ln\frac{1}{\epsilon}
\end{eqnarray}

For the sake of rough scaling, we will use $\tau^\epsilon_m\approx\frac{1}{\lambda}\ln\frac{1}{\epsilon}
$ as an approximate estimate of the mixing time. It is important to note that the generator $\mathcal{L}$ is not self-adjoint and may possess imaginary eigenvalues. The rate of convergence in such cases will be determined by the real part of the eigenvalue. In fact using~\cite[pp.~200]{pavliotis2014stochastic}, the first non-zero eigenvalue of the generator is 

\begin{eqnarray*}
\lambda = \frac{\gamma}{2} - \frac{1}{2}\sqrt{\gamma^2 - 4\omega_0^2}
\end{eqnarray*}

In the underdamped limit when $\gamma \ll 2\omega_0$, we have $Re(\lambda)=\frac{\gamma}{2}$. Therefore $\tau^\epsilon_m\approx\frac{2}{\gamma}\log(\frac{1}{\epsilon})$ in the low friction regime. When friction is very high i.e. $\gamma \gg \omega_0$, we have $\lambda \approx\frac{\omega_0^2}{\gamma}$. As a result we get $\tau^\epsilon_m\approx\frac{\gamma}{\omega_0^2}\log\frac{1}{\epsilon}$.
Thus our proxy for the mixing process produces  $\tau^\epsilon_m\propto\frac{1}{\gamma}$ in the low friction regime and $\tau^\epsilon_m\propto\gamma$ in the high friction regime,  consonant with the scaling and non-monotonicity observed using the erasure region criterion. As a consequence, using a convergence criterion for erasure would not change the physics of the problem, merely perturbing the erasing time-scale quantitatively.

\subsubsection{$4k_BT$ criterion for erasure region}\label{subsec:4kT_criterion}

Within the framework of the original ``erasure region" criterion discussed in the main text, we now consider the robustness of results to changing the numerical value of the criterion. Specifically, we here define the erasure region as all phase space points with total energy atleast $4k_BT$ below the barrier height. More formally, 

\begin{eqnarray*}
\tau^{4 k_BT}_e = \E\left[\inf\{t\geq 0 \mid x(t)<0 \text{ and } H(x(t),p(t))\leq A - 4k_BT\}\right]
\end{eqnarray*}
where $\left(x\left(t\right),p\left(t\right)\right)$ is the solution to Equation~\ref{eq:control} with the initial condition $\left(x\left(0\right),p\left(0\right)\right)\sim_{\text{law}}\pi_1\left(x,p\right)$. We now show that we get the same non-monotonicity and scaling of erasing time as a function of friction-coefficient that we got using the $3k_BT$ criterion. In particular, the erasing time scales as $\frac{1}{\gamma}$ in the low friction regime and scales as $\gamma$ at high friction. Figure~\ref{fig:supplement_erasing} illustrates this fact. Fits are performed using analytical expressions equivalent to those discussed in the main text, but adjusted for the new numerical value of the boundary of the erasure region.

\begin{enumerate}
\item Low friction regime:
\begin{eqnarray}
\tau^{4k_BT}_e\approx \sqrt{\frac{2mB}{F}} + \frac{1}{\gamma}\ln{\frac{A + F\cdot B}{A-4k_BT}}.
\end{eqnarray}
\item High friction regime:
\begin{eqnarray}
\tau^{4 k_BT}_e \approx \frac{mB\gamma}{F} + \frac{2mB^2\gamma}{5A}
\end{eqnarray}
\end{enumerate}

\begin{figure}[ht!]
\begin{minipage}{0.5\textwidth}
  \centering
  \includegraphics[scale=0.21]{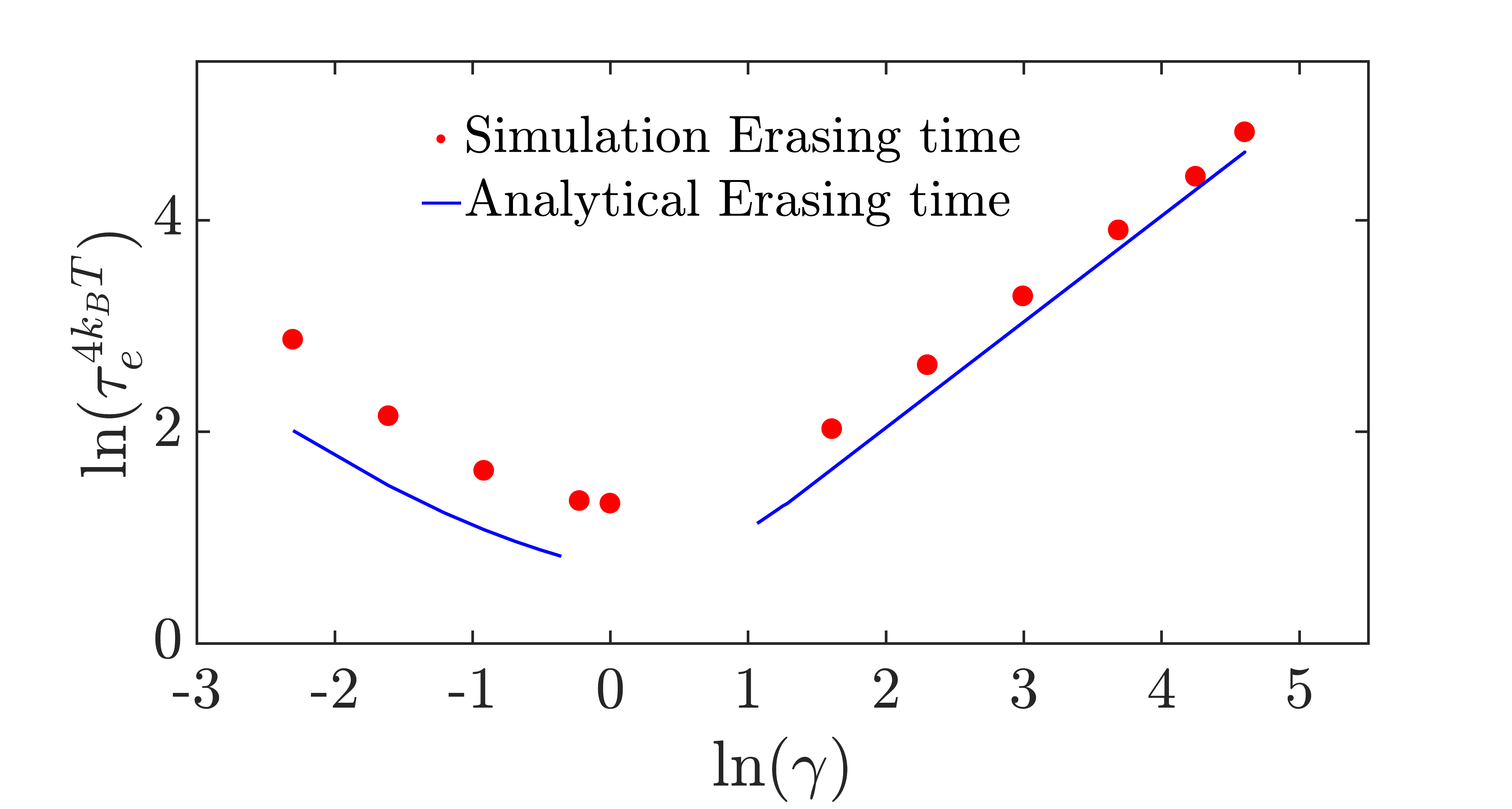}
  \subcaption{$A=10,F=1$}
  \label{fig:erasing_low_F}
\end{minipage}
\begin{minipage}{0.5\textwidth}
  \centering
  \includegraphics[scale=0.21]{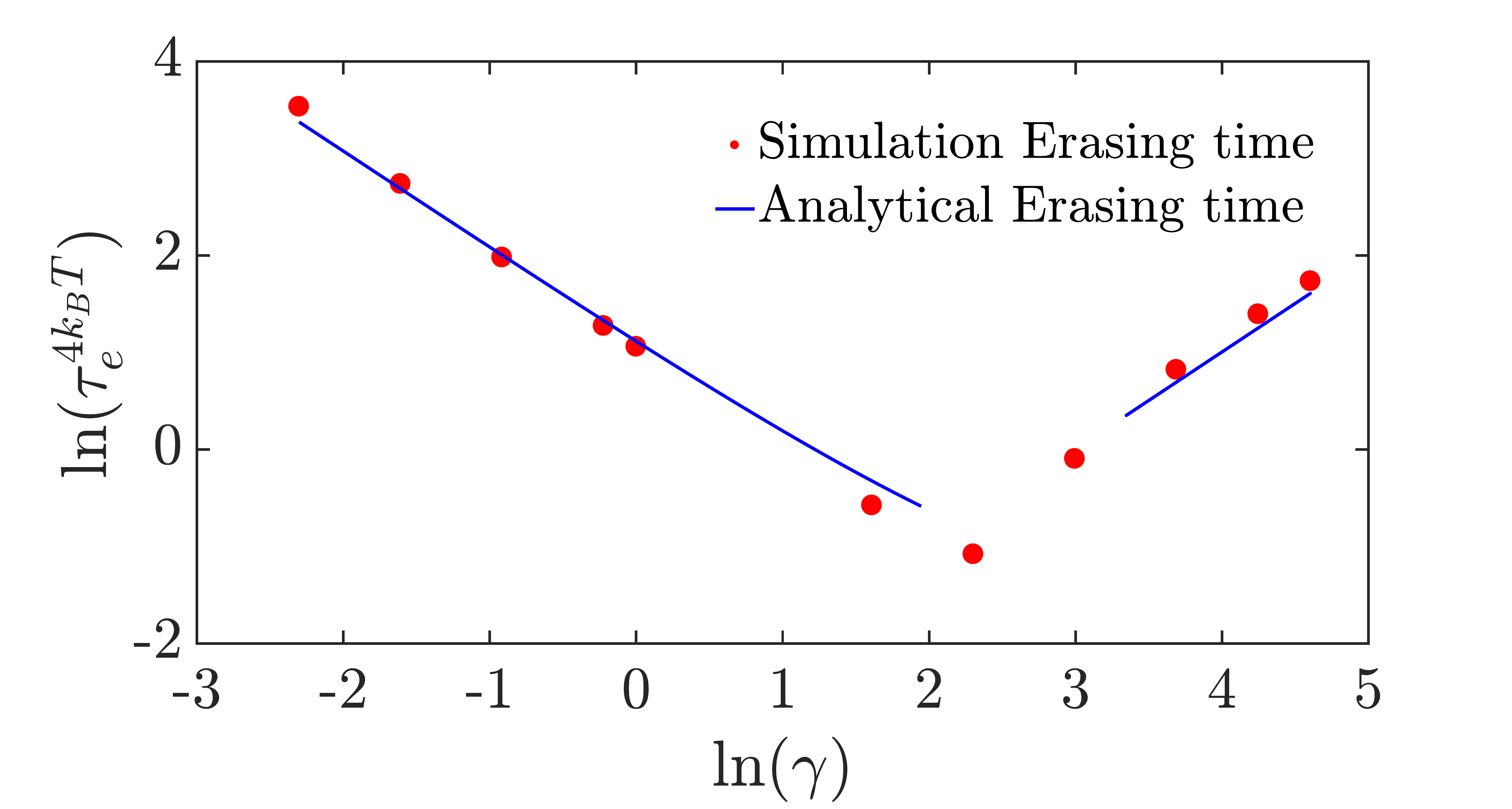}
  \subcaption{$A=10,F=100$}
  \label{fig:erasing_high_F}
\end{minipage}
\caption{Evidence from simulation that the use of $4k_BT$ to define the erasure reason does not change the fundamental physics of the problem.}
\label{fig:supplement_erasing} 
\end{figure}

\subsection{Work calculation}

\subsubsection{Full calculation for work done at t=0}\label{appendix-full work}

We first provide a more detailed justification of the approximation $W=(A+F\cdot B-k_BT/2)/2$ for the work done when the control potential is switched on, then compare it to the exact result. From Equation~\ref{eq:sekimoto_work}, and letting $I = \{x\geq 0\mid A - U_{A,B}\left(x\right) + F\cdot x\geq 0\}$, we get

\begin{eqnarray}\label{eq:true formula}
\langle W\rangle &=&  \Int_{0}^{\tau}\Int_{I}V_F\left(x,t\right)p\left(x,t\right)\delta\left(t\right)dxdt \nonumber \\ 
                 &=& \Int_{I}\left(A - U_{A,B}\left(x\right) + F\cdot x\right)p\left(x,0\right)dx  
\end{eqnarray}

Since $p\left(x,0\right)\propto \e^{-\frac{U_{A,B}\left(x\right)}{k_BT}}$, we can rewrite the expression of work as

\begin{eqnarray}\label{eq:work_interval_expression}
\langle W \rangle &=& \frac{\Int_{I}\left(A - U_{A,B}\left(x\right) + F\cdot x\right)\e^{-\frac{U_{A,B}\left(x\right)}{k_BT}}dx}{\Int_{-\infty}^{\infty}\e^{-\frac{U_{A,B}\left(x\right)}{k_BT}}dx}
\end{eqnarray}

Since $I\subseteq\left[0,\infty\right)$ and $\left(A - U_{A,B}\left(x\right) + F\cdot x\right)\e^{-\frac{U_{A,B}\left(x\right)}{k_BT}}$ is negligible as $x \rightarrow \infty$, replacing the upper limit of integration by $\infty$ is reasonable. Hence the integral becomes
\begin{eqnarray}
\langle W \rangle \approx \frac{\Int_{0}^{\infty}(A - U_{A,B}\left(x\right) + F\cdot x)\e^{-\frac{U_{A,B}\left(x\right)}{k_BT}}dx}{\Int_{-\infty}^{\infty}\e^{-\frac{U_{A,B}\left(x\right)}{k_BT}}dx}
\end{eqnarray}

When $A >> k_BT$, we can use Bessel's functions to approximate this integral giving
\begin{eqnarray}\label{eq:work_expression}
\langle W \rangle \approx \frac{\left(A + F\cdot B -\frac{k_BT}{2}\right)}{2}
\end{eqnarray}
justifying the crude approximation in the main text. The accuracy of this expression compared to Eq.~\ref{eq:work_interval_expression} is illustrated in Figure~\ref{fig:work_accuracy}. 

\begin{figure}[h!]
\begin{center}
\[\includegraphics[scale=0.22]{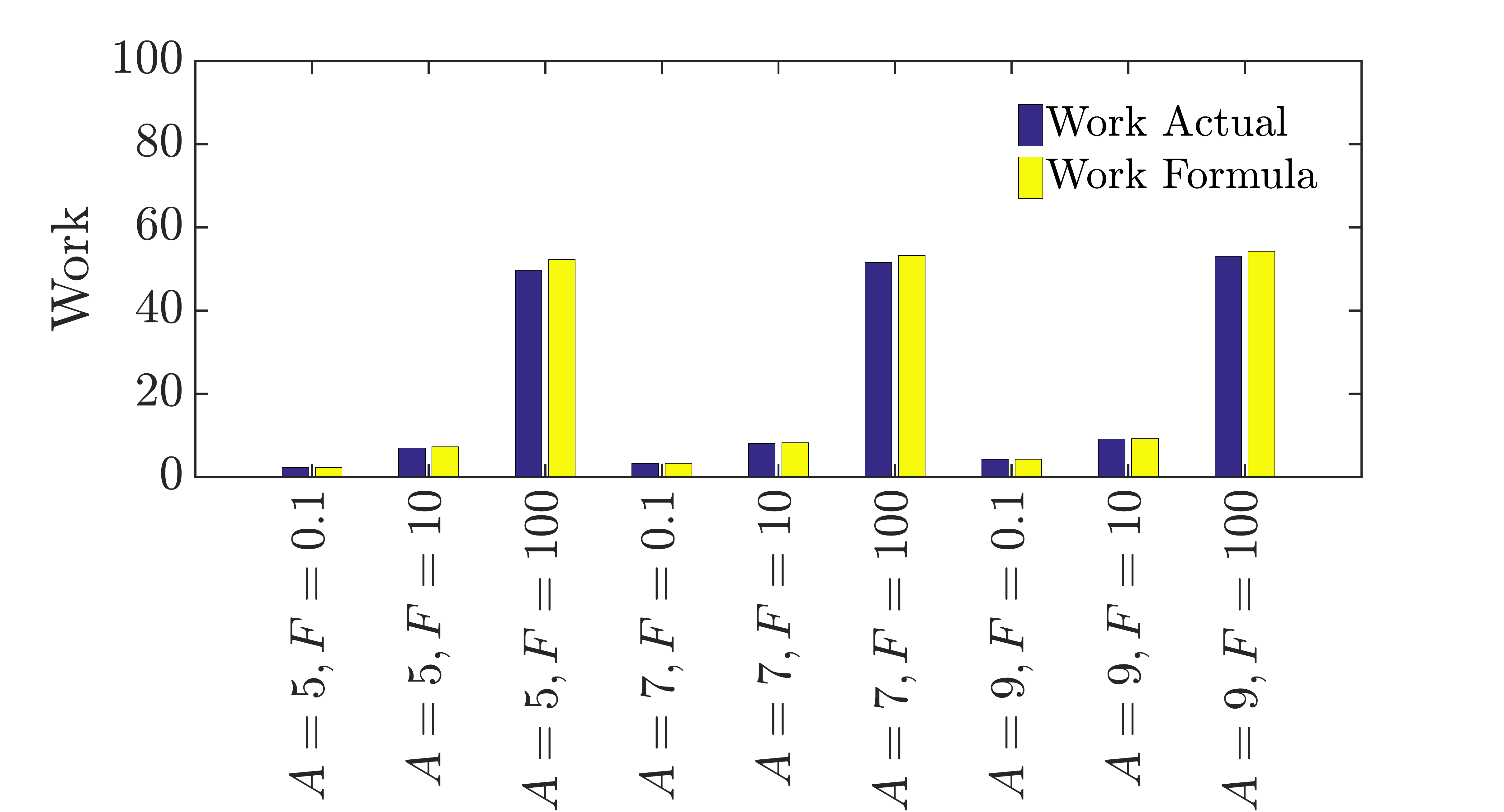}\]
\end{center}
\caption{Work when the control is switched on against various values of $A$ and $F$, comparing the full expression Eq.~\ref{eq:work_interval_expression} and the approximate result Eq.~\ref{eq:work_expression}.}
\label{fig:work_accuracy}
\end{figure}

\subsection{The potential for energy recovery is negligible}\label{subsec:no_recovery_energy}

Here, we argue that energy recoverable at the end of the protocol is very small, and hence may be neglected. We assume that the control is switched off after a time $\tau$ sufficiently large compared to $\tau_e$ so that the proportion of particles remaining on the right hand side of the well is determined by the Boltzmann factor. The work that we could then in principle recover is given by the following expression:

\begin{eqnarray}\label{eq:recovered_work}
\langle W_{rec} \rangle &\approx \frac{\Int_{I}\left(A - U_{A,B}\left(x\right) + F\cdot x\right)\e^{\frac{-(A + F\cdot x)}{k_BT}}dx}{\Int_{-\infty}^{0}\e^{\frac{-U_{A,B}\left(x\right)}{k_BT}}dx  + \Int_{0}^{\infty}\e^{\frac{-(A + F\cdot x)}{k_BT}}dx}
\end{eqnarray}

Recall that $I = \{x \geq 0 \mid A - U_{A,B}\left(x\right) + F\cdot x \geq 0\}$. This implies that $I=[0,x^*]$, where $A - U_{A,B}\left(x^*\right) + F\cdot x^*=0$. Using Equations~\ref{eq:work_expression} and ~\ref{eq:recovered_work}, we will calculate the fraction of recovered work i.e., $W^{f}_{rec} = \frac{\langle W_{rec} \rangle}{\langle W \rangle}$. Figure~\ref{fig:fraction_recovered_work} precisely calculates this quantity. As is evident from the figure, the fraction is almost negligible and reaches its maximum value at low $A$ and $F$.

\begin{figure}[h!]
\begin{center}
\[
\includegraphics[scale=0.6]{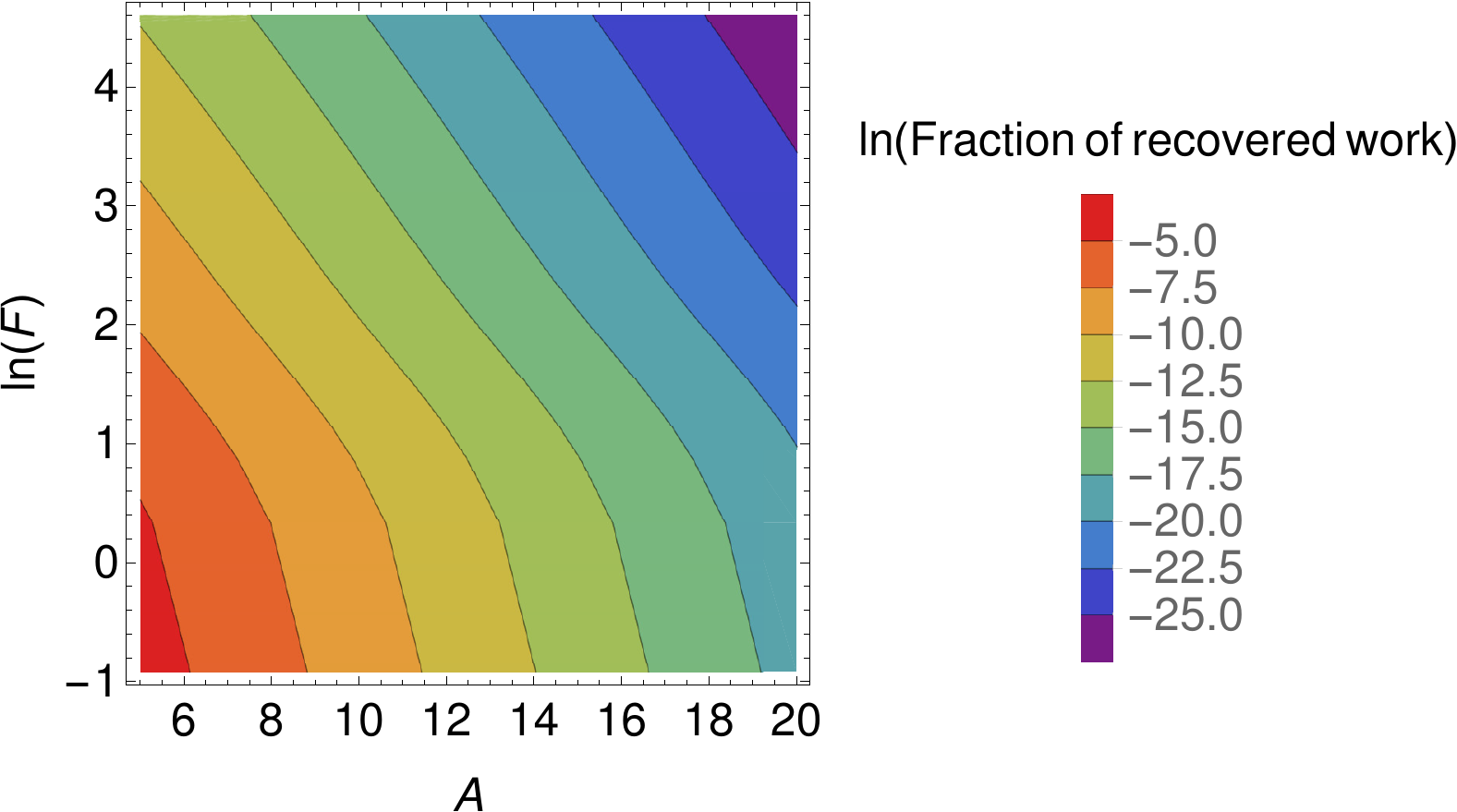}
\]
\end{center}
\caption{Negligible energy can be recovered for our family of controls.}\label{fig:fraction_recovered_work}
\end{figure}

\subsection{Regression and Cross Validation}\label{subsec:regression}

We use cubic regression to interpolate between simulation data for both the reliability and erasing time-scales. Let $F'=\log(F)$ and $\gamma'=\log(\gamma)$. Then we use the following polynomials to fit the time-scales.

\begin{enumerate}

\item \textbf{Erasing Polynomial}:
\begin{eqnarray}\label{erasing_polynomial}
\begin{split}
\log(\tau_e) &= b_1 + b_2A^3 + b_3F'^3 + b_4\gamma'^3 + b_5A^2F' + b_6A'F'^2 + b_7F'^2\gamma' + b_8F'\gamma'^2 \\
 & + b_9A'^2\gamma' + b_{10}A\gamma'^2 + b_{11}A^2 + b_{12}F'^2 + b_{13}\gamma'^2 + b_{14}AF' + b_{15}F'\gamma'\\
 &+ b_{16}A\gamma' + b_{17}A + b_{18}F' + b_{19}\gamma'
\end{split}
\end{eqnarray}

\item \textbf{Reliability Polynomial}:
\begin{eqnarray}
\log(\tau_r) = c_1 + c_2A^3 + c_3\gamma'^3 + c_4A^2\gamma' + c_5A\gamma'^2 + c_6A^2 + c_7\gamma'^2 + c_8A\gamma' + c_9A + c_{10}\gamma
\end{eqnarray}
, where the coefficients $b_1,b_2,\cdots b_{19}$ and $c_1,c_2,\cdots c_{10}$ are to be determined by regression.

\end{enumerate}

\begin{figure}[h!]
\begin{minipage}{0.45\textwidth}
  \includegraphics[scale=0.16]{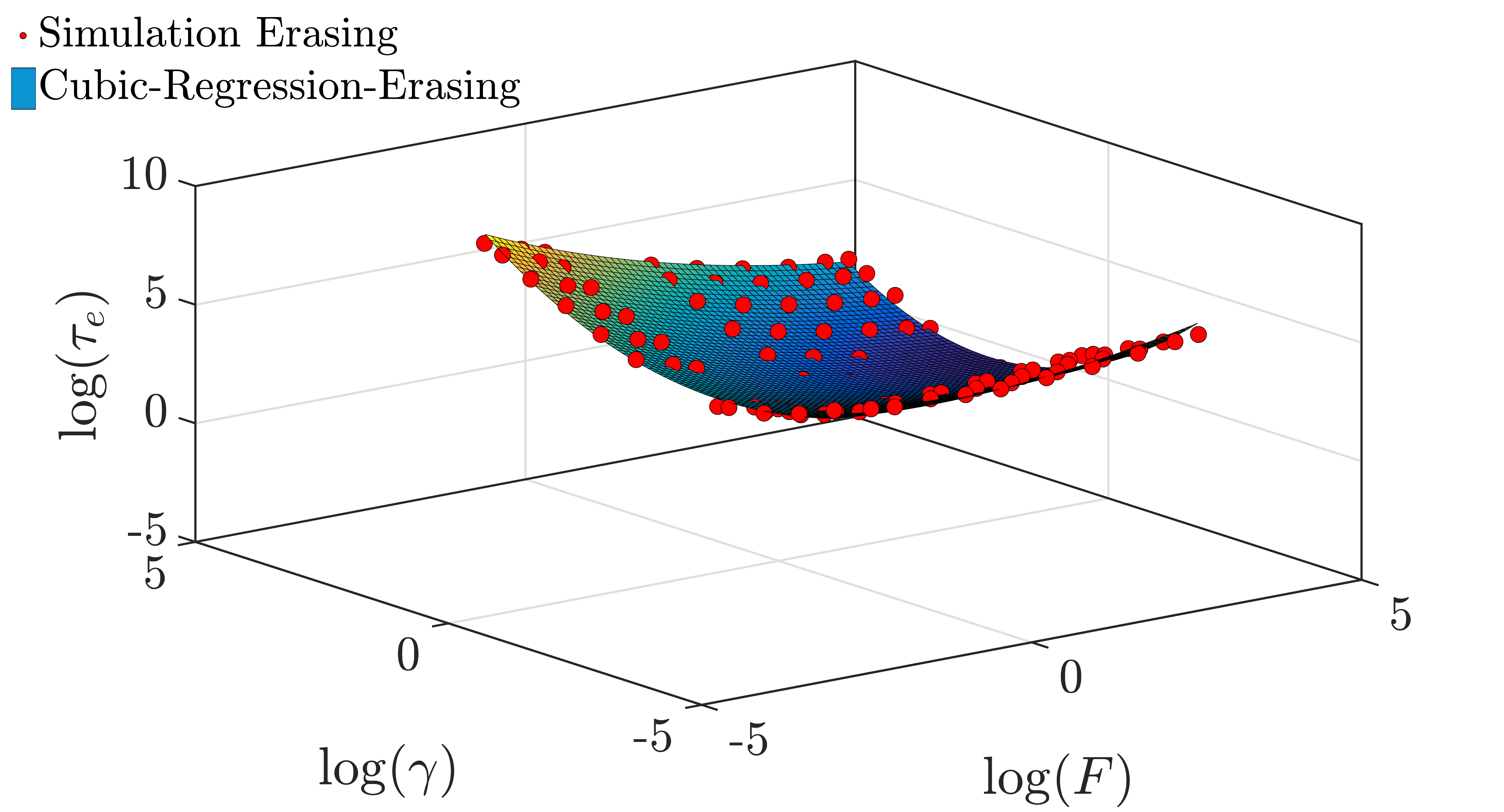}
  \subcaption{Erasing time from simulation and cubic-regression for $A=6$}
\end{minipage}
\begin{minipage}{0.45\textwidth}
   \includegraphics[scale=0.16]{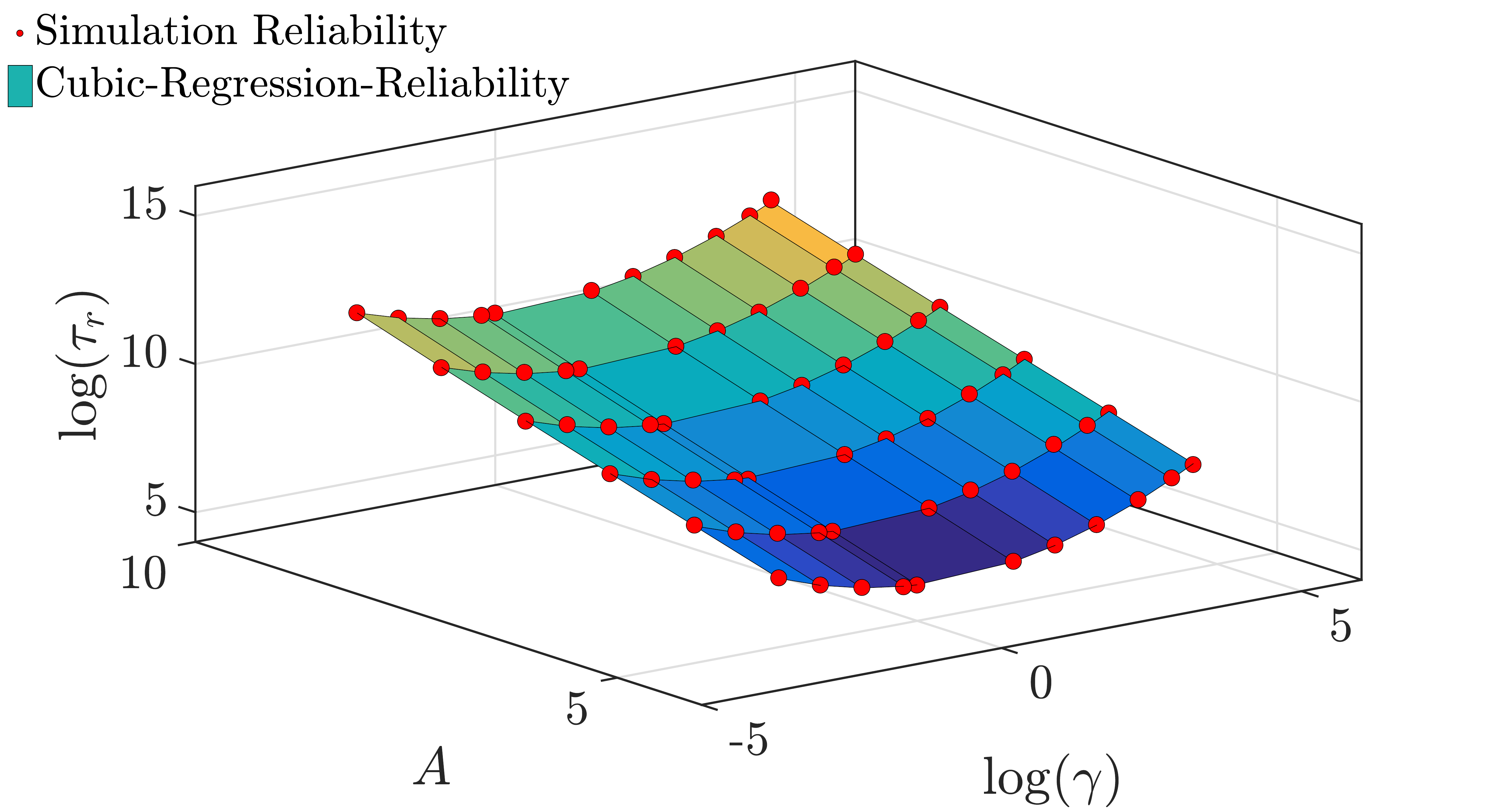}
  \subcaption{Reliability time from simulation and cubic-regression}
\end{minipage}
\caption{}\label{fig:regression_fits}
\end{figure}

Figure~\ref{fig:regression_fits} gives a visual illustration of the fact that cubic fits offer a good approximation to the simulation results for both the erasing and reliability time-scales. In what follows, we present a more detailed and formal justification using cross-validation.

\begin{figure}[h!]
\begin{minipage}{0.5\textwidth}
  \includegraphics[scale=0.17]{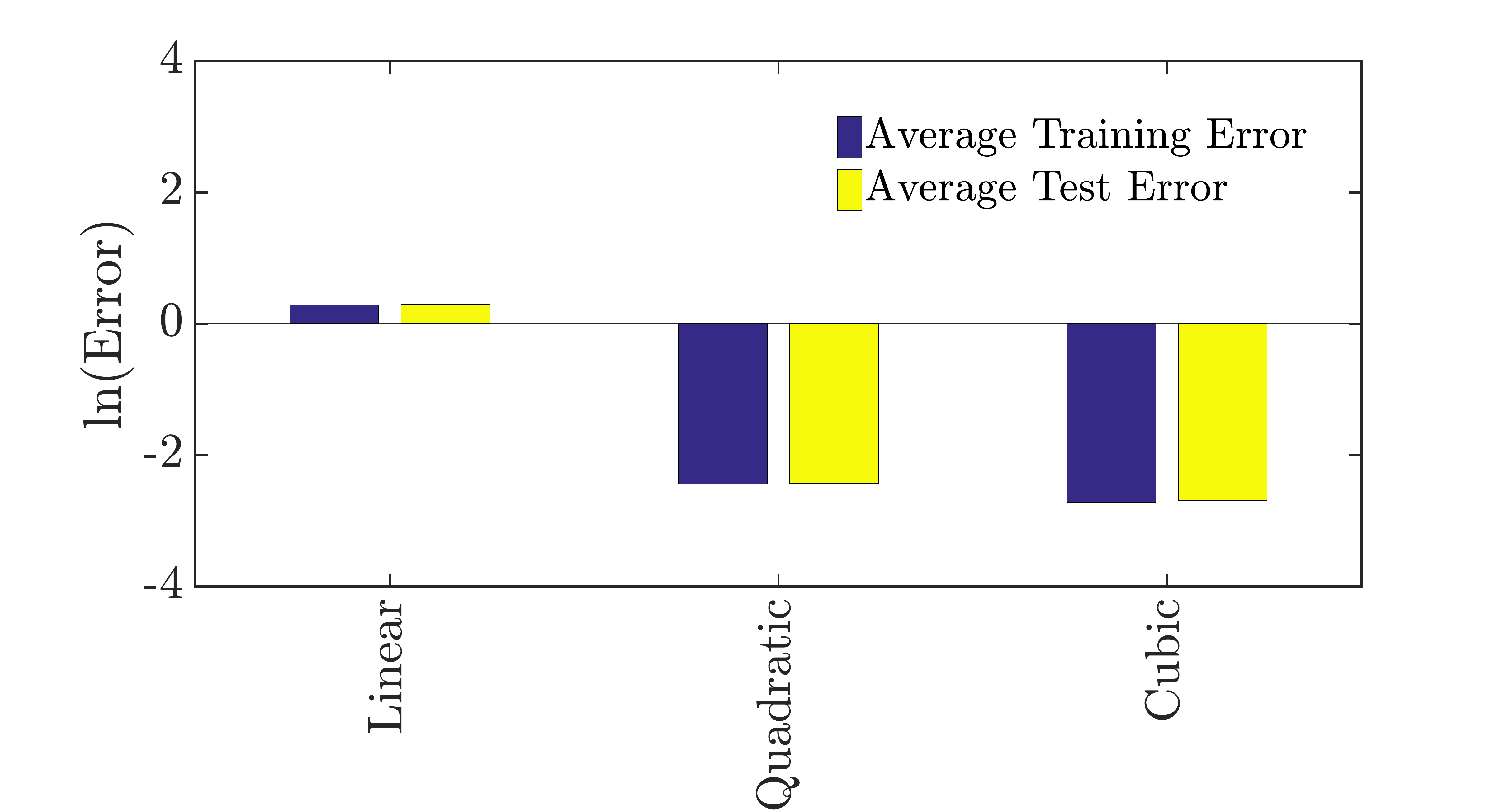}
  \subcaption{Errors of regression fits for erasing time}
\end{minipage}
\begin{minipage}{0.5\textwidth}
   \includegraphics[scale=0.17]{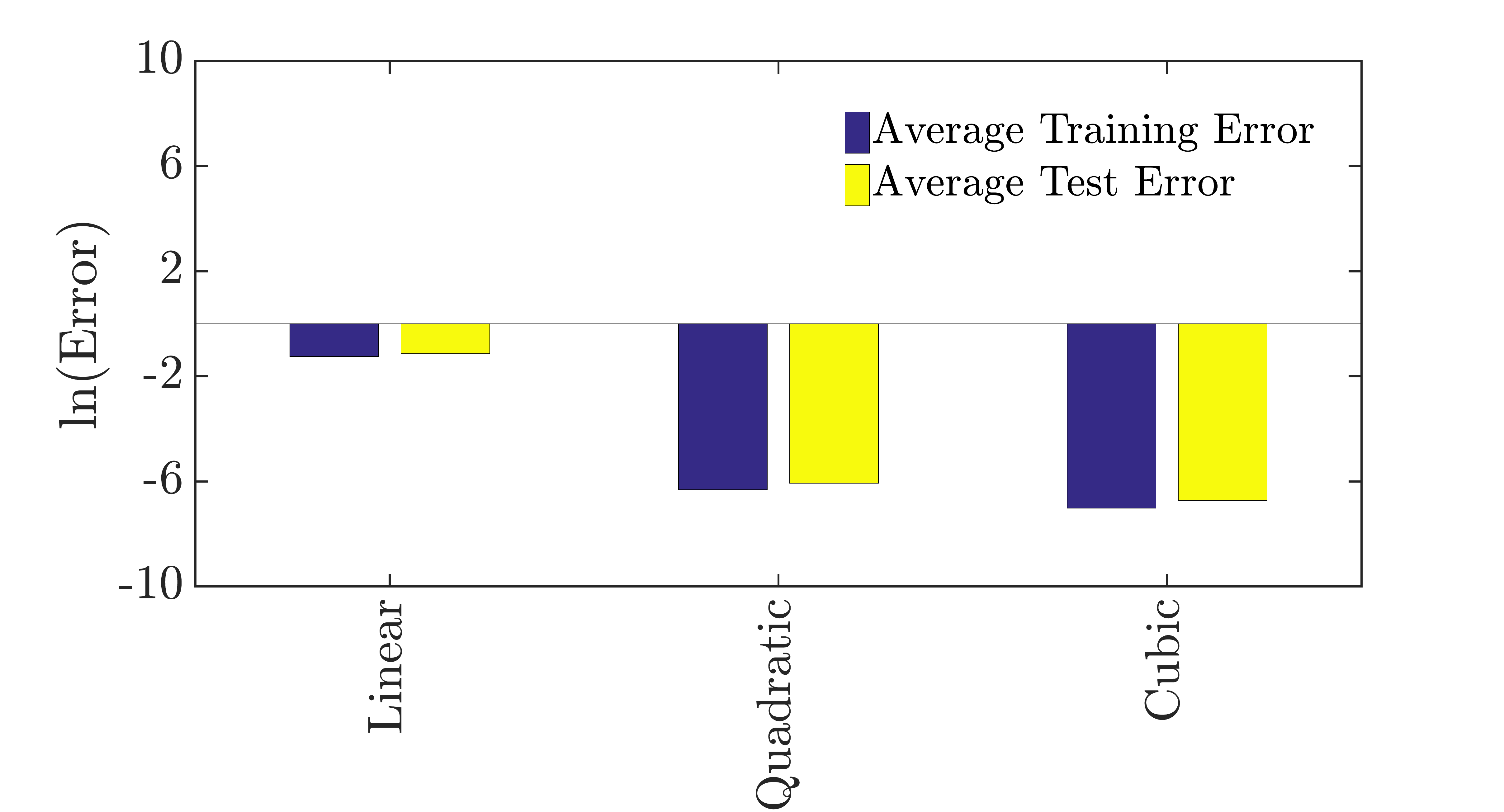}
  \subcaption{Errors of regression fits for reliability time}
\end{minipage}
\caption{}\label{fig:cross_validation_errors}
\end{figure}

We perform ``Leave-one-out'' cross validation to justify the use of cubic regression. Figure~\ref{fig:cross_validation_errors} reports the mean square training and testing cross-validation errors corresponding to linear, quadratic and cubic fits. A lower value of the testing error indicates a good fit. Cubic regression has the lowest value of testing errors amongst the fits considered for both the reliability and erasing time-scales. Figure~\ref{fig:cross_validation_errors} confirms that the training and testing errors corresponding to cubic-regression for both the time-scales are roughly comparable(with the training error being slightly lower than the testing error). As a result, we can safely assume that the cubic polynomial does not over-fit the data and use it for modelling both the time-scales.

\subsection{Calculation of well parameters}\label{subsec:well_parameters}
Here we calculate the quantities needed to apply Equation~\ref{eq:meshkov_melnikov} to our system, with the potential $U_{A,B}\left(x\right) = A\left(\frac{x^2}{B^2}-1\right)^2$. We have $\partial_x{U_{A,B}\left(x\right)} = \frac{4Ax\left(x^2-B^2\right)}{B^4}$ and $\partial_{xx}U_{A,B}\left(x\right) = \frac{4A\left(3x^2-B^2\right)}{B^4}$. 

\begin{enumerate}
\item \textbf{Angular frequency at barrier height $\left(\omega_b\right)$}: We can approximate the region near the barrier by an inverted harmonic oscillator. By Taylor expanding the potential about the point $x=0$, we get
\begin{eqnarray}
U_{A,B}\left(x\right)\approx U\left(0\right) + \partial_x{U_{A,B}\left(x\right)}\bigg|_{x=0}x + \frac{\partial_{xx}{U_{A,B}\left(x\right)}\bigg|_{x=0}x^2}{2} \approx A - \frac{2Ax^2}{B^2}= A - \frac{m\omega_b^2x^2}{2} 
\end{eqnarray}
Therefore we have $\omega_b = \sqrt{\frac{4A}{mB^2}}$.
\item \textbf{Angular frequency at the bottom of the well $\left(\omega_0\right)$}: We can approximate the region near the bottom of the well by a harmonic oscillator. By Taylor expanding the potential about the point $x=B$, we get
\begin{eqnarray}
\begin{aligned}
U_{A,B}\left(x\right) &\approx U\left(B\right) + \partial_x{U_{A,B}\left(x\right)}\bigg|_{x=B}\left(x-B\right) + \frac{\partial_{xx}{U_{A,B}\left(x\right)}\bigg|_{x=B}\left(x-B\right)^2}{2} \\
&= \frac{4A\left(x-B\right)^2}{B^2}= \frac{m\omega_0^2\left(x-B\right)^2}{2} 
\end{aligned}
\end{eqnarray}
Therefore we have $\omega_0 = \sqrt{\frac{8A}{mB^2}}$.
\item \textbf{Action at barrier height} $I\left(A\right)$: Consider a particle of mass $m$ with a starting velocity $v=0$, moving along a constant energy surface with energy $A$. The particle starts at $x=0$ and moves to $x = \sqrt{2}B$ and returns back to $x=0$. The action for this round trip is given by $I\left(A\right) = \oint pdx = 2\sqrt{2m}\int_{0}^{\sqrt{2}B}\sqrt{A- A\left(\frac{x^2}{B^2}-1\right)^2}dx = \frac{8B\sqrt{mA}}{3}$.
\end{enumerate}

\subsection{Locally trapped bits are uniquely trapped}\label{single_minimum_work}

In this section, we give typical plots for our family of controls that show no evidence of multiple local minima in erasing time within a level set of work. Towards this we let $F'=\log(F)$ and $\gamma'=\log(\gamma)$. Using the same form of regression polynomial as in Equation~\ref{erasing_polynomial}, but at constant work $W$, this translates to

\begin{eqnarray}
\begin{split}
\log(\tau_e) &= b_1 + b_2(W-e^{F'})^3 + b_3F'^3 + b_4\gamma'^3 + b_5(W-e^{F'})^2F' + b_6(W-e^{F'})F'^2 + b_7F'^2\gamma' + b_8F'\gamma'^2 \\
 & + b_9(W-e^{F'})^2\gamma' + b_{10}(W-e^{F'})\gamma'^2 + b_{11}(W-e^{F'})^2 + b_{12}F'^2 + b_{13}\gamma'^2 + b_{14}(W-e^{F'})F'\\
 & + b_{15}F'\gamma'+ b_{16}(W-e^{F'})\gamma' + b_{17}(W-e^{F'}) + b_{18}F' + b_{19}\gamma'
\end{split}
\end{eqnarray}

Note that $\left(\frac{d\tau_e}{d\gamma'}\right)_{W,F'} = \gamma\left(\frac{d\tau_e}{d\gamma}\right)_{W,F}$ and $\left(\frac{d\tau_e}{dF'}\right)_{W,\gamma'}=F\left(\frac{d\tau_e}{dF}\right)_{W,\gamma}$. Therefore solving for $\left(\frac{d\tau_e}{d\gamma}\right)_{W,F}=\left(\frac{d\tau_e}{dF}\right)_{W,\gamma}=0$ is equivalent to solving for $\left(\frac{d\tau_e}{d\gamma'}\right)_{W,F'}=\left(\frac{d\tau_e}{dF'}\right)_{W,\gamma'}=0$. We solve $\left(\frac{d\tau_e}{d\gamma'}\right)_{W,F'}=\left(\frac{d\tau_e}{dF'}\right)_{W,\gamma'}=0$ numerically and plot it in Figure~\ref{fig:local_unique_trapped}. As illustrated by Figure~\ref{fig:local_unique_trapped}, there is exactly one solution to the equations $\left(\frac{d\tau_e}{d\gamma}\right)_{W,F}=\left(\frac{d\tau_e}{dF}\right)_{W,\gamma}=0$ within the broad range of parameters allowed, confirming our assumption that locally trapped bits are uniquely trapped.

\begin{figure}[h!]
\begin{minipage}{0.5\textwidth}
  \includegraphics[scale=0.17]{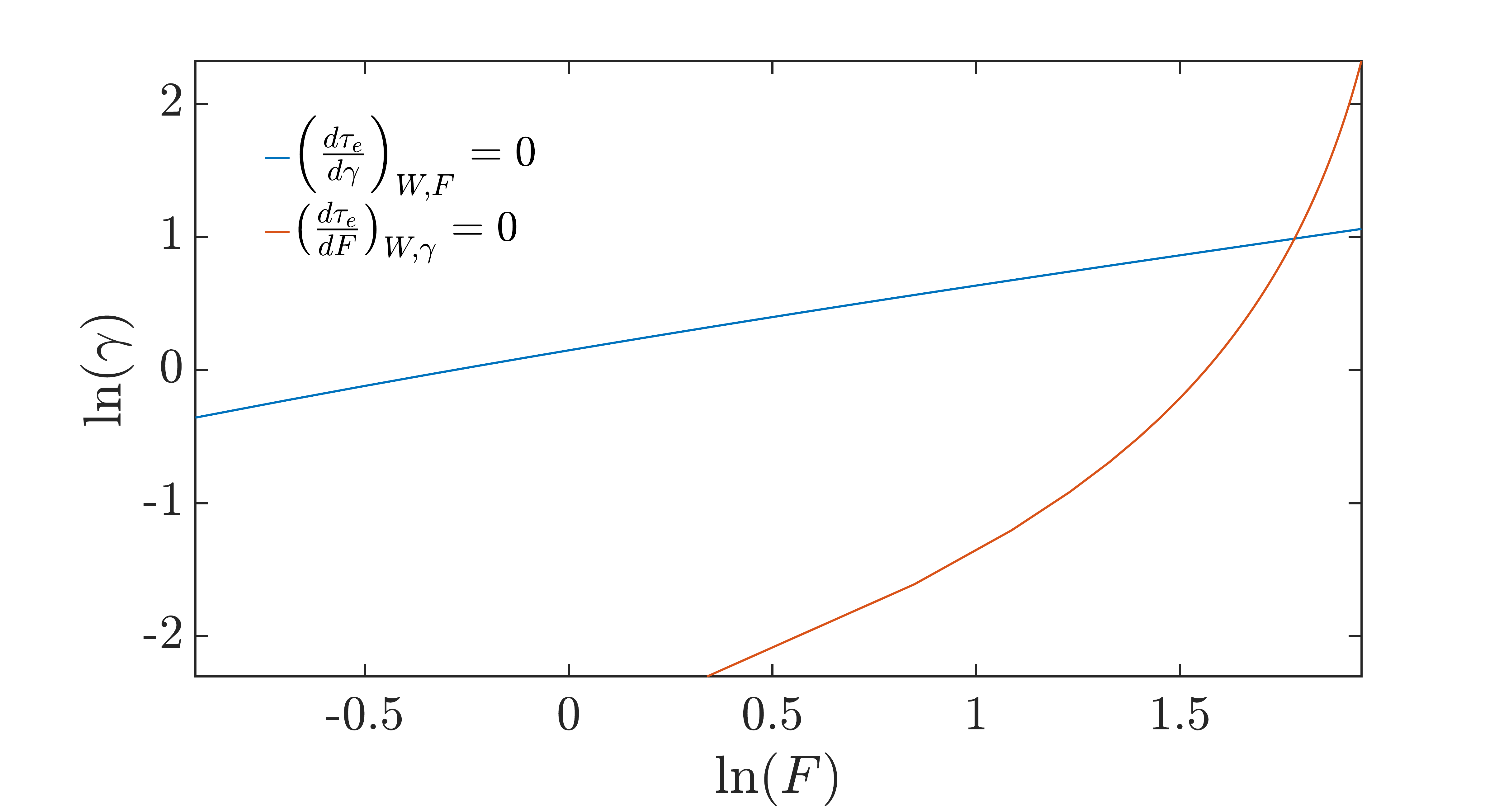}
  \subcaption{W=12}
\end{minipage}
\begin{minipage}{0.5\textwidth}
   \includegraphics[scale=0.17]{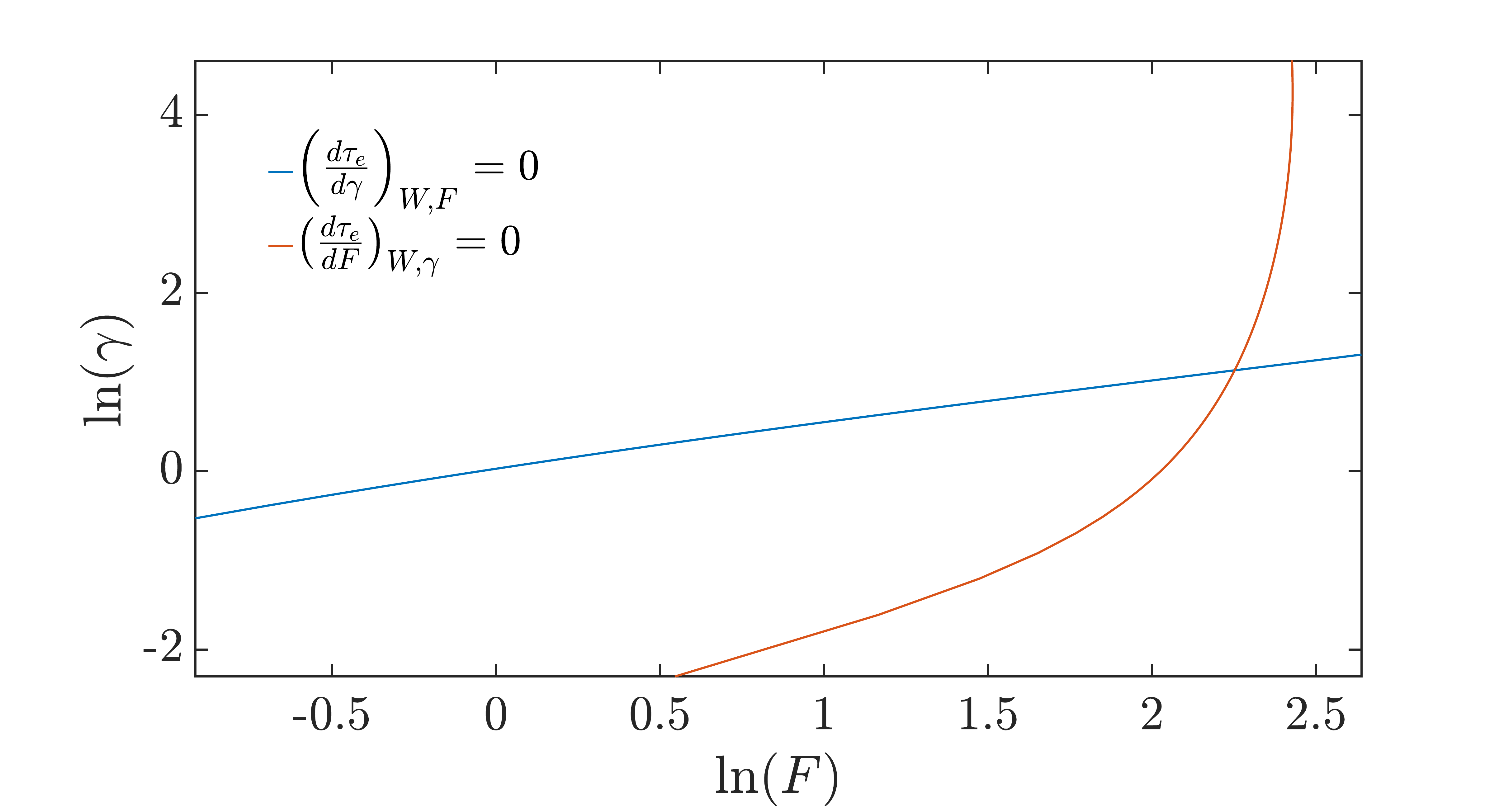}
  \subcaption{W=19}
\end{minipage}
\caption{Evidence that for our family of controls, locally trapped bits are uniquely trapped. The system $\left(\frac{d\tau_e}{d\gamma}\right)_{W,F}=\left(\frac{d\tau_e}{dF}\right)_{W,\gamma}=0$ has exactly one solution within the broad range of parameters considered, which corresponds to a unique local minimum of erasing time in a level set of work. This is illustrated for work $W=12$ and $W=19$. The situation is representative for other values of work.}\label{fig:local_unique_trapped}
\end{figure}

\subsection{Geometry of the Optimal Bit}\label{sec:geometry_optimal_bit}

\begin{enumerate}

\item \textbf{Proof of Claim~\ref{lem:improvee}}
\begin{proof}
\begin{enumerate}
\item For contradiction suppose that $\tau_e(A,F,\gamma) < t_e$. Since $\tau_e$ is continuous and it is possible to locally decrease work at fixed reliability time, there exists a design $(A',F',\gamma')$ with $W(A',F',\gamma')<W(A,F,\gamma)$ that is $(t_r,t_e)$-feasible contradicting the optimality of the design $(A,F,\gamma)$.
\item For contradiction suppose that $\tau_r(A,F,\gamma) > t_r$. Since the design $(A,F,\gamma)$ is not locally trapped and $\tau_r$ is continuous, there exists a design $(A_0,F_0,\gamma_0)$ requiring work $W(A_0,F_0,\gamma_0)=W(A,F,\gamma)$, erasing time $\tau_e(A_0,F_0,\gamma_0) < \tau_e(A,F,\gamma)\leq t_e$ and maintaining reliability time $\tau_r(A_0,F_0,\gamma_0)\geq t_r$. Thus the design $(A_0,F_0,\gamma_0)$ is $(t_r,t_e)$-optimal contradicting Claim~\ref{lem:improvee}.\ref{erasing_tight} that the optimal design saturates the bound on the erasing time constraint.
\end{enumerate}
\end{proof}

\item\begin{observation}\label{observ:decreasing_work}
The erasing time of trapped designs is a strictly decreasing function of work for our family of protocols (Example~\ref{ex:control_potential}). In other words, if $(A_0,F_0,\gamma_0)$ and $(A^*,F^*,\gamma^*)$ are trapped designs with $ W(A^*,F^*,\gamma^*) > W(A _0,F_0,\gamma_0)$ then $\tau_e(A^*,F^*,\gamma^*) < \tau_e(A_0,F_0,\gamma_0)$. 
\end{observation}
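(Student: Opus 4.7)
The plan is to exploit two facts already established earlier in the paper: first, that the work cost depends only on $A$ and $F$ via $W=(A+F\cdot B-k_BT/2)/2$ and not on $\gamma$ (so work is purely a function of the well/control parameters); second, Observation~\ref{observ:erasing_decreasing_height}, which states that $\tau_e$ is a strictly decreasing function of $A$ at fixed $F$ and $\gamma$. Combining these two observations with the definition of a trapped design yields a short comparison argument.

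Concretely, suppose $(A_0,F_0,\gamma_0)$ is a trapped design with work $W_0=W(A_0,F_0,\gamma_0)$ and $(A^*,F^*,\gamma^*)$ is a trapped design with work $W^*=W(A^*,F^*,\gamma^*)>W_0$. First I would construct an auxiliary design $(A',F_0,\gamma_0)$ whose well height is inflated by exactly the amount needed to raise its work to $W^*$. Using the explicit formula $W=(A+F\cdot B - k_BT/2)/2$, the right choice is $A' = A_0 + 2(W^*-W_0) > A_0$, which by direct substitution gives $W(A',F_0,\gamma_0)=W^*$. Next, by Observation~\ref{observ:erasing_decreasing_height}, since $A'>A_0$ at the same $F_0,\gamma_0$, we have the strict inequality $\tau_e(A',F_0,\gamma_0) < \tau_e(A_0,F_0,\gamma_0)$. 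Finally, because $(A^*,F^*,\gamma^*)$ is trapped at work $W^*$, it is a minimiser of $\tau_e$ over the level set $\{W=W^*\}$, which contains $(A',F_0,\gamma_0)$; hence $\tau_e(A^*,F^*,\gamma^*)\leq \tau_e(A',F_0,\gamma_0)$. Chaining these two inequalities gives $\tau_e(A^*,F^*,\gamma^*) < \tau_e(A_0,F_0,\gamma_0)$, which is precisely the claimed strict monotonicity.

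The argument is essentially a one-line construction once the two ingredients are in place, so there is no real technical obstacle; the only thing one must be careful about is that the auxiliary design $(A',F_0,\gamma_0)$ lies in the admissible parameter region (in particular $A'\geq A_{\rm min}$, which is automatic since $A'>A_0\geq A_{\rm min}$) and that Observation~\ref{observ:erasing_decreasing_height} genuinely gives a strict inequality rather than a weak one (this is asserted in the main text and supported by the analytic forms in Equations~\ref{eq:erasing_low} and~\ref{eq:erasing_high}, both of which are manifestly strictly decreasing in $A$). If one wanted to be more defensive, one could note that any continuous path $A\mapsto (A,F_0,\gamma_0)$ from $A_0$ to $A'$ has $\tau_e$ strictly decreasing along it, so even a weak version of Observation~\ref{observ:erasing_decreasing_height} together with continuity would suffice.
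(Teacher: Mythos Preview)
Your proposal is correct and follows essentially the same argument as the paper: construct an auxiliary design $(A',F_0,\gamma_0)$ with $A'>A_0$ and work equal to $W^*$, apply Observation~\ref{observ:erasing_decreasing_height} to get $\tau_e(A',F_0,\gamma_0)<\tau_e(A_0,F_0,\gamma_0)$, and then use the trapped property of $(A^*,F^*,\gamma^*)$ to conclude. The only cosmetic difference is that you write down $A'$ explicitly via the work formula, whereas the paper invokes continuity and monotonicity of $W$ in $A$ (Observation~\ref{observ:work_increasing_height}) to assert such an $A'$ exists.
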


\begin{proof}
Since $W$ is a continuous increasing function of $A$ (Observation\,~\ref{observ:work_increasing_height}), one can choose $A' > A^*$ such that $W(A',F_0,\gamma_0)=W(A^*,F^*,\gamma^*)$. Noting that increasing well height at fixed $F$ and $\gamma$ decreases erasing time (Observation\,~\ref{observ:erasing_decreasing_height}), we get $\tau_e(A',F_0,\gamma_0) < \tau_e(A_0,F_0,\gamma_0)$ . Using the fact that $(A^*,F^*,\gamma^*)$ is a trapped design, we get $\tau_e(A^*,F^*,\gamma^*)\leq \tau_e(A',F_0,\gamma_0) < \tau_e(A_0,F_0,\gamma_0)$ establishing the claim.
\end{proof}

\item\textbf{Proof of Claim~\ref{claim2}
\begin{proof}
\begin{enumerate}
\item Since $\tau_r(A^*,F^*,\gamma^*)\geq t_r$ and $\tau_e(A^*,F^*,\gamma^*)=t_e$, the design $(A^*,F^*,\gamma^*)$ is $(t_r,t_e)$-feasible. Suppose that the design $(A^*,F^*,\gamma^*)$ is not $(t_r,t_e)$-optimal. Then there exists a $(t_r,t_e)$-feasible design $(A',F',\gamma')$ such that $W(A',F',\gamma') < W(A^*,F^*,\gamma^*)$. Let $(A_0,F_0,\gamma_0)$ be a trapped design with $W(A_0,F_0,\gamma_0)=W(A',F',\gamma')<W(A^*,F^*,\gamma^*)$. Then $\tau_e(A_0,F_0,\gamma_0) > \tau_e(A^*,F^*,\gamma^*)$ since the erasing time of trapped bits is a strictly decreasing function of work. Using the fact that $(A_0,F_0,\gamma_0)$ is a trapped design, we get $\tau_e(A',F',\gamma')\geq \tau_e(A_0,F_0,\gamma_0) > \tau_e(A^*,F^*,\gamma^*)=t_e$, a contradiction since $(A',F',\gamma')$ is a $(t_r,t_e)$-feasible design.
\item Immediate from Claim~\ref{claim2}.~\ref{same_optimal_bit}.
\item For contradiction suppose that the requirement $(t_r,t_e)$ is unsaturated. Then by Claim~\ref{lem:improvee}.~\ref{erasing_tight} there exists a $(t_r,t_e)$-optimal design $(A_0,F_0,\gamma_0)$ such that $\tau_r(A_0,F_0,\gamma_0) > t_r$ and $\tau_e(A_0,F_0,\gamma_0)=\tau_e(A^*,F^*,\gamma^*)=t_e$. Since locally trapped designs are uniquely trapped, using Claim~\ref{lem:improvee}.~\ref{reliability_tight}, we get that the design $(A_0,F_0,\gamma_0)$ must be uniquely trapped. Noting that uniquely trapped bits are trapped and using the fact that the erasing time of trapped designs is a strictly decreasing function of work, we get $W(A_0,F_0,\gamma_0)=W(A^*,F^*,\gamma^*)$. This implies that $(A_0,F_0,\gamma_0)=(A^*,F^*,\gamma^*)$, a contradiction since $\tau_r(A_0,F_0,\gamma_0) > t_r \geq \tau_r(A^*,F^*,\gamma^*)$. 
\end{enumerate}
\end{proof}
}
\end{enumerate}

\subsection{Alternative proof via KKT conditions}\label{subsec:KKT_conditions}

KKT conditions form the foundation of optimization problems~\cite{nocedal2006numerical,rapcsak2013smooth}. In order to study the KKT conditions, we consider the optimization problem of finding the design with the lowest work that is $(t_r,t_e)$-feasible.

\begin{problem}\label{prob:bit_optimal}
\begin{align*}
\left(A^*,F^*,\gamma^*\right) = \displaystyle\arg\inf_{A,F,\gamma} W\left(A,F\right) \\
t_r - \tau_r\left(A^*,\gamma^*\right)\leq 0 \\
\tau_e\left(A^*,F^*,\gamma^*\right) - t_e \leq 0
\end{align*}
\end{problem}

In order to state the KKT conditions, we will need the notion of a regular point. The following definition will make this precise.

\begin{definition}[\textbf{Regular point}]\label{def:regular}
Let $Sat\left(x^*\right)$ denote the set of gradients of the constraints that are saturated at the point $x^*$. Then $x^*$ is regular iff $Sat\left(x^*\right)$ does not form a linearly dependent set. 
\end{definition}

\begin{theorem}[\textbf{KKT conditions}]\label{theorem:KKT_conditions}
Let $(A^*,F^*,\gamma^*)$ be a local optimum of ~\ref{prob:bit_optimal} and a regular point. Then by~\cite[$\left(12.1\right)$, pp.~95]{nocedal2006numerical}, there exists $\lambda_1^*,\lambda_2^*\in\R_{\geq 0}$ such that 
\begin{enumerate}
\item $\nabla W\left(A^*,F^*,\gamma^*\right) - \lambda_1^*\nabla\tau_r\left(A^*,\gamma^*\right) + \lambda_2^*\nabla\tau_e\left(A^*,F^*,\gamma^*\right) = 0$.
\item $\lambda_1^*\left(t_r- \tau_r\left(A^*,\gamma^*\right)\right)=0$ \text{and} $\lambda_2^*\left(\tau_e\left(A^*,F^*,\gamma^*\right) -t_e\right)=0$.
\end{enumerate} 
\end{theorem}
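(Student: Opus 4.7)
The plan is to reduce the statement to a direct invocation of the standard smooth KKT theorem, then unpack the core ingredients of its proof so that the sign conventions in this paper's formulation are transparent. First I would put Problem~\ref{prob:bit_optimal} into standard form: minimise $f(x) := W(A,F,\gamma)$ over $x=(A,F,\gamma)$ subject to the two inequality constraints $g_1(x) := t_r - \tau_r(A,\gamma) \le 0$ and $g_2(x) := \tau_e(A,F,\gamma) - t_e \le 0$. Under the continuous differentiability that is implicit in the smoothness assumed for $W$, $\tau_r$, $\tau_e$, the classical KKT statement~\cite[Thm.~12.1]{nocedal2006numerical} applies; moreover regularity in the sense of Definition~\ref{def:regular} is exactly the Linear Independence Constraint Qualification for the active constraints, which is the qualification Nocedal's theorem requires.

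Next I would sketch the geometric heart of the proof. Let $\mathcal{A}\subseteq\{1,2\}$ be the set of indices active at $x^*=(A^*,F^*,\gamma^*)$. Regularity implies that the tangent cone to the feasible set at $x^*$ coincides with the linearised feasible cone $T := \{d : \nabla g_i(x^*)^\top d \le 0 \text{ for all } i\in\mathcal{A}\}$. Local optimality of $x^*$ forbids any feasible descent direction, i.e.\ there is no $d\in T$ with $\nabla f(x^*)^\top d < 0$. An application of Farkas' lemma to the system $\{\nabla g_i(x^*)^\top d\le 0\}_{i\in\mathcal{A}}$ against $\nabla f(x^*)^\top d < 0$ then delivers multipliers $\lambda_i^*\ge 0$ for $i\in\mathcal{A}$ satisfying $\nabla f(x^*) + \sum_{i\in\mathcal{A}}\lambda_i^*\nabla g_i(x^*) = 0$. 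Extending by $\lambda_i^*=0$ for $i\notin\mathcal{A}$ automatically yields complementary slackness $\lambda_i^* g_i(x^*)=0$ for both $i$, since inactive constraints have $g_i(x^*)<0$.

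Finally I would translate the resulting identity back into the paper's notation. Using $\nabla g_1 = -\nabla\tau_r$ and $\nabla g_2 = \nabla\tau_e$, stationarity becomes
\begin{equation*}
\nabla W(A^*,F^*,\gamma^*) - \lambda_1^*\nabla\tau_r(A^*,\gamma^*) + \lambda_2^*\nabla\tau_e(A^*,F^*,\gamma^*) = 0,
\end{equation*}
which is conclusion~(1), and complementary slackness rewrites verbatim as $\lambda_1^*(t_r-\tau_r(A^*,\gamma^*))=0$ and $\lambda_2^*(\tau_e(A^*,F^*,\gamma^*)-t_e)=0$, which is conclusion~(2).

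The main obstacle is conceptual rather than computational: confirming that regularity is indeed the correct constraint qualification for this problem. In practice the only way LICQ can fail here is if $\nabla \tau_r$ and $\nabla \tau_e$ happen to be parallel at $x^*$ while both constraints are simultaneously active, a situation one should flag because it can plausibly occur near trapped designs where $\nabla\tau_e$ is degenerate along the level set of work. Once this is ruled out (or assumed away via the regularity hypothesis) the remainder of the argument is a routine specialisation of the textbook KKT theorem.
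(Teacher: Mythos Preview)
Your proposal is correct, but note that the paper does not actually prove this theorem: the statement itself embeds the citation ``by~\cite[(12.1), pp.~95]{nocedal2006numerical}'' and the paper treats it as a black-box import from Nocedal and Wright, immediately using it to re-derive the saturation claims. So you are supplying strictly more than the paper does. Your reduction to standard form, identification of regularity with LICQ, and Farkas-lemma argument are all sound, and the sign bookkeeping in the translation step is correct. The closing caveat about LICQ possibly failing near trapped designs is a nice observation that the paper does not make; it is not needed for the theorem as stated (regularity is hypothesised), but it does flag where the KKT route could break down in the paper's own application.
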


Given this powerful theorem~\ref{theorem:KKT_conditions}, we are now ready to prove the the same result that we obtained earlier but with the machinery of KKT conditions. 

\begin{lemma}
Let us assume that it is always possible to locally decrease work at fixed reliability time. Let $\left(A^*, F^*,\gamma^*\right)$ be a local optimum of ~\ref{prob:bit_optimal}. Then either 
\begin{enumerate}
\item The design $\left(A^*, F^*,\gamma^*\right)$ saturates the bound on both constraints i.e. $\tau_r\left(A^*,\gamma^*\right)=t_r$ and $\tau_e\left(A^*,F^*,\gamma^*\right)=t_e$ or
\item The design $\left(A^*, F^*,\gamma^*\right)$ saturates the bound on the erasing time constraint i.e. $\tau_e\left(A^*,F^*,\gamma^*\right)=t_e$ but does not saturate the bound on the reliability time constraint i.e. $\tau_r\left(A^*,\gamma^*\right)>t_r$ and is locally trapped.
\end{enumerate}
\end{lemma}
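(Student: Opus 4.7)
The plan is to apply Theorem~\ref{theorem:KKT_conditions} directly and then extract the two cases from the stationarity and complementary slackness relations. Let $\lambda_1^*,\lambda_2^*\geq 0$ be the KKT multipliers guaranteed by the theorem, so that
\begin{align*}
\nabla W(A^*,F^*,\gamma^*) \;=\; \lambda_1^*\,\nabla \tau_r(A^*,\gamma^*) \;-\; \lambda_2^*\,\nabla \tau_e(A^*,F^*,\gamma^*),
\end{align*}
together with $\lambda_1^*(t_r-\tau_r(A^*,\gamma^*))=0$ and $\lambda_2^*(\tau_e(A^*,F^*,\gamma^*)-t_e)=0$.

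First I would establish that $\lambda_2^*>0$, which by complementary slackness forces $\tau_e(A^*,F^*,\gamma^*)=t_e$ in both of the cases the lemma distinguishes. The argument is by contradiction: if $\lambda_2^*=0$, then $\nabla W = \lambda_1^*\,\nabla \tau_r$, so $\nabla W$ is a nonnegative multiple of $\nabla \tau_r$. Any first-order variation $v$ with $v\cdot \nabla \tau_r = 0$ then satisfies $v\cdot \nabla W = 0$, and the hypothesis that one can locally decrease work at fixed reliability time supplies a variation achieving $v\cdot\nabla W<0$ while $v\cdot \nabla \tau_r=0$ (or, if $\lambda_1^*=0$ also holds, a variation with $v\cdot\nabla W<0$ is guaranteed directly). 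Either alternative contradicts the stationarity relation, so $\lambda_2^*>0$ and hence $\tau_e(A^*,F^*,\gamma^*)=t_e$.

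Next I would split on the value of $\tau_r(A^*,\gamma^*)$. If $\tau_r(A^*,\gamma^*)=t_r$, we are in the first alternative of the lemma and there is nothing more to prove. If instead $\tau_r(A^*,\gamma^*)>t_r$, then complementary slackness forces $\lambda_1^*=0$ and the stationarity equation reduces to $\nabla W = -\lambda_2^*\,\nabla \tau_e$, i.e.\ the gradients of $W$ and $\tau_e$ are anti-parallel at $(A^*,F^*,\gamma^*)$. To conclude the lemma, it remains to show that the design is locally trapped.

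For the locally trapped conclusion, I would argue by contradiction along the lines of the proof of Claim~\ref{lem:improvee}.\ref{reliability_tight}, rather than trying to extract it from a second-order KKT analysis. Suppose there is a sequence $(A_n,F_n,\gamma_n)\to (A^*,F^*,\gamma^*)$ with $W(A_n,F_n,\gamma_n)=W(A^*,F^*,\gamma^*)$ and $\tau_e(A_n,F_n,\gamma_n)<\tau_e(A^*,F^*,\gamma^*)=t_e$. By continuity of $\tau_r$ and the strict inequality $\tau_r(A^*,\gamma^*)>t_r$, for $n$ large we have $\tau_r(A_n,\gamma_n)>t_r$. Invoking the standing assumption that work can be locally decreased at fixed reliability time, pick a further perturbation $(A_n',F_n',\gamma_n')$ of $(A_n,F_n,\gamma_n)$ with $\tau_r(A_n',\gamma_n')=\tau_r(A_n,\gamma_n)>t_r$ and $W(A_n',F_n',\gamma_n')<W(A_n,F_n,\gamma_n)=W(A^*,F^*,\gamma^*)$, and, by continuity of $\tau_e$ applied to a small enough perturbation, still $\tau_e(A_n',F_n',\gamma_n')<t_e$. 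Then $(A_n',F_n',\gamma_n')$ is $(t_r,t_e)$-feasible with strictly less work than $(A^*,F^*,\gamma^*)$, and lies in any prescribed neighbourhood for $n$ large, contradicting local optimality.

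The main obstacle is the locally trapped step: one needs to simultaneously control $\tau_r$, $\tau_e$, and $W$ under a two-stage perturbation (first to the hypothesised same-work point with smaller $\tau_e$, then to a strictly lower-work neighbour), and to make sure the intermediate perturbations stay inside the neighbourhood of local optimality. Continuity of the three functions and openness of the feasibility inequalities $\tau_r>t_r$ and $\tau_e<t_e$ are exactly what allows this to be carried through.
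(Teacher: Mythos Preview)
Your argument is essentially the paper's KKT proof, reorganized: the paper cases first on which constraints are saturated and applies KKT inside each case, whereas you invoke KKT up front, force $\lambda_2^*>0$, and then split on $\tau_r$. The contradictions are the same (the paper's concrete observation $\partial W/\partial F\neq 0$ while $\partial\tau_r/\partial F=0$ is just an instance of your abstract ``decrease work at fixed $\tau_r$'' hypothesis), and both proofs hand off the locally-trapped conclusion to the argument behind Claim~\ref{lem:improvee}.\ref{reliability_tight}, which you chose to inline.

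One small gap in your ordering: Theorem~\ref{theorem:KKT_conditions} requires the optimum to be a \emph{regular} point, and you call on the multipliers before verifying this. The paper sidesteps the issue because in every case where it actually applies KKT at most one constraint is active, so a single gradient is trivially independent. In your arrangement you should first note that if both constraints are already saturated you are in alternative~1 and can stop without KKT; otherwise at most one constraint is active and regularity is automatic, after which your argument goes through.
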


\begin{proof}
Consider an optimal design $\left(A^*, F^*,\gamma^*\right)$ such that either it does not saturate the bound on the reliability time constraint i.e. $\tau_r\left(A^*,\gamma^*\right)>t_r$ or it does not saturate the bound on the erasing time constraint i.e. $\tau_e\left(A^*,F^*,\gamma^*\right) < t_e$. Then we have the following cases:
\begin{bullets}
\item Case 1: The design $\left(A^*,F^*,\gamma^*\right)$ saturates the bound on the erasing time constraint, but does not saturate the bound on the reliability time constraint i.e. $\tau_r\left(A^*,\gamma^*\right)>t_r$ and $\tau_e\left(A^*,F^*,\gamma^*\right) = t_e$. This implies that $\lambda_1^*=0$. Since only one constraint is active, $(A^*, F^*,\gamma^*)$ is a regular point. Hence, by Theorem~\ref{theorem:KKT_conditions} on KKT conditions, there exists $\lambda_2^* >0$ such that $\nabla W\left(A^*,F^*,\gamma^*\right) + \lambda_2^*\nabla t_e\left(A^*,F^*,\gamma^*\right)=0$. This implies that $(A^*, F^*,\gamma^*)$ is a stationary point of erasing time in the level set of it's work $W(A^*, F^*,\gamma^*)$. The fact that this stationary point is actually a local minimum follows from Claim~\ref{lem:improvee}.~\ref{reliability_tight}.
\item Case 2: The design $\left(A^*,F^*,\gamma^*\right)$ saturates the bound on the reliability time constraint, but does not saturate the bound on the erasing time constraint i.e. $\tau_r\left(A^*,\gamma^*\right)=t_r$ and $\tau_e\left(A^*,F^*,\gamma^*\right) < t_e$. This implies that $\lambda_2^*=0$. Since only one constraint is active, $(A^*, F^*,\gamma^*)$ is a regular point. Hence, by Theorem~\ref{theorem:KKT_conditions} on KKT conditions, there exists $\lambda_1^* >0$ such that $\nabla W\left(A^*,F^*,\gamma^*\right) = \lambda_1^*\nabla t_r\left(A^*,\gamma^*\right)$, a contradiction since $\frac{\partial W}{\partial F}\neq 0$ but $\frac{\partial\tau_r}{\partial F}=0$.
\item Case 3: The design $\left(A^*,F^*,\gamma^*\right)$ does not saturate the bound on either constraints i.e. $\tau_r\left(A^*,\gamma^*\right)>t_r$ and $\tau_e\left(A^*,F^*,\gamma^*\right) < t_e$. Since no constraint is active we have $\nabla W\left(A^*,F^*,\gamma^*\right)=0$, which is not possible.
\end{bullets}
\end{proof} 

\end{document}